\newtheorem{theo}{Theorem}[section]
\newtheorem{coro}[theo]{Corollary}
\newtheorem{lemm}[theo]{Lemma}
\newtheorem{prop}[theo]{Proposition}
\newtheorem{rema}[theo]{Remark}
\newtheorem{defi}[theo]{Definition}
\newtheorem{assu}{Assumption}
\newtheorem{exam}[theo]{Example}
\numberwithin{equation}{section}
\begin{document}

\title[Gel'fand's Inverse Problem on connected weighted graphs]{Vertex Weight
Reconstruction in the Gel'fand's Inverse Problem on Connected Weighted Graphs}

\author[Li]{Songshuo Li}
\address{School of Mathematics and Statistics, and Center for Mathematics
and Interdisciplinary Sciences, Northeast Normal University, Changchun,
Jilin 130024, P.R.China}
\email{liss342@nenu.edu.cn}

\author[Gao]{Yixian Gao}
\address{School of Mathematics and Statistics, and Center for Mathematics
and Interdisciplinary Sciences, Northeast Normal University, Changchun,
Jilin 130024, P.R.China}
\email{gaoyx643@nenu.edu.cn}

\author[Geng]{Ru Geng}
\address{School of Mathematics and Statistics, and Center for Mathematics
and Interdisciplinary Sciences, Northeast Normal University, Changchun,
Jilin 130024, P.R.China}
\email{gengru93@163.com}

\author[Yang]{Yang Yang}
\address{Department of Computational Mathematics Science and Engineering, Michigan State University, East Lansing, MI 48824, USA}
\curraddr{}
\email{yangy5@msu.edu}

\thanks{ The research of Y. Gao was  supported by NSFC grants (project numbers, 12371187, 12071065)  and Science and Technology Development Plan Project of Jilin Province
20240101006JJ.
The research of Y. Yang is partially supported by the NSF grants DMS-2006881, DMS-2237534, DMS-2220373, and the NIH grant R03-EB033521.}

\keywords{discrete inverse boundary spectral problem, combinatorial graphs, graph Laplacian, graph wave equation, boundary control method}

\begin{abstract}
We consider the reconstruction of the vertex weight in the discrete Gel'fand's inverse boundary spectral problem for the graph Laplacian.
Given the boundary vertex weight and the edge weight of the graph, we develop reconstruction procedures to recover the interior vertex weight from the Neumann boundary spectral data on a class of finite, connected and weighted graphs.
The procedures are divided into two stages: the first stage reconstructs the Neumann-to-Dirichlet map for the graph wave equation from the Neumann boundary spectral data, and the second stage reconstructs the interior vertex weight from the Neumann-to-Dirichlet map using the boundary control method adapted to weighted graphs. For the second stage, we identify a class of weighted graphs where the unique continuation principle holds for the graph wave equation. 
The reconstruction procedures are further turned into an algorithm, which is implemented and validated on several numerical examples with quantitative performance reported.
\end{abstract}

\maketitle

\section{Introduction and Main Results}

The Gel'fand's inverse boundary spectral problem aims to determine a differential operator based on the knowledge of its boundary spectral data \cite{MR0095423}.
This problem arises in various scientific and engineering domains where understanding the internal structure of a system or material is crucial. In this paper, we are interested in the discrete Gel'fand's inverse boundary spectral problem on combinatorial graphs~\cite{MR4620352}. 
In the discrete formulation, traditional differential operators are substituted with difference operators, and traditional functions are substituted with functions defined on vertices. The problem thus involves reconstructing properties of combinatorial graphs from  boundary spectral data. The analysis of this discrete problem serves as a foundational framework for finite difference and finite element analysis of numerical methods for solving the continuous Gel'fand's inverse boundary spectral problem.

We formulate the discrete Gel'fand's inverse boundary spectral problem following the presentation in~\cite{MR4620352}. A graph $(\bar{G},\mathcal{E})$ consists of a set of vertices $\bar{G}$ and a set of edges $\mathcal{E}$. The set of vertices admits a disjoint decomposition $\bar{G}=G\cup\partial{G}$, where $G$ is called the set of \textit{interior vertices} and $\partial G$ the set of \textit{boundary vertices}.
The graph is \textit{finite} if $|\bar{G}|$ and $|\mathcal{E}|$ are both finite, where $|\cdot|$ denotes the cardinality. 
Given two vertices $x,y\in \bar{G}$, we say that $x$ is a neighbor of $y$, denoted by $x\sim y$ , if there exists an edge connecting $x$ and $y$. This edge is denoted by $\{x, y\}$. In this case, $y$ is clearly a neighbor of $x$ as well.
The graph is \textit{undirected} if the edges do not carry directions, that is, if $\{x, y\} = \{y,x\}$.
The graph is \textit{weighted} if there exists an edge weight function on the set of edges $w: \mathcal{E}\longrightarrow \mathbb{R}_+$ ($\mathbb{R}_+$ denotes the set of positive real numbers) such that  $w(x,y)=w(y,x)>0$ for $x\sim y$.  By convention, if there is no edge between $x$ and $y$, we set $w(x,y)=w(y,x)=0$. 
We often  use the simplified notation $w_{x,y}$ to represent the edge weight $w(x,y)$ for brevity.
The graph is \textit{simple} if there is at most one edge between any two vertices and no edge connects a vertex to itself.
The graph is \textit{connected} if any two vertices can be connected by a sequence of edges.
In this paper, all the graphs are assumed to be finite, undirected, weighted, simple and connected.

For $x\in\bar{G}$,  its \textit{degree}, denoted by $\deg(x)$, is defined as
the number of edges in $\mathcal{E}$ connecting it to its neighbors. 
Let $u:\bar{G}\longrightarrow \mathbb{R}$ be a real-valued function.  
The \textit{graph Laplacian} $\Delta_G$ is defined as 
\begin{equation}
\label{defgraph Laplacian operator}
\Delta_G u(x):= \frac{1}{\mu(x)}\sum_{\substack{y\in \bar{G}\\ y\sim x}}w(x,y)(u(y)-u(x)),\quad x\in G.
\end{equation}
Here $\mu: \bar{G}\longrightarrow \mathbb{R}_+$ is a positive function on the set of vertices. 
We will refer to $\mu$ as the \textit{vertex weight} and write $\mu(x)$ as $\mu_x$ for simplicity. 
This definition of the graph Laplacian includes several special cases that are of importance in graph theory. For instance, the combinatorial Laplacian corresponds to $\mu \equiv 1$, while the normalized Laplacian corresponds to $w \equiv 1$ and $\mu(x) = {\rm deg} (x)$.
The Neumann boundary value of $u$ is defined as
\begin{align}\label{def_Neumann boundary value}
\partial_\nu u(z) := \frac{1}{\mu_z}\sum_{\substack{x\in G\\ x\sim z}}w(x,z)(u(x) - u(z)),\quad z\in \partial G.
\end{align}

A function $\varphi:\bar{G}\rightarrow\mathbb{R}$ is said to be \textit{harmonic} if 
\begin{align*}
\Delta_G \varphi (x)=0,\quad x\in G.
\end{align*}
Although the definition of $\Delta_G$ involves $\mu$,  it is clear that the concept of harmonic functions is independent of  $\mu$.
Denote by $l^2(G)$ the $l^2$-space of real-valued functions equipped with the following inner product:
for functions $u,v: G\longrightarrow \mathbb{R}$, 
\begin{equation*}
(u,v)_G := \sum\limits_{x\in G}\mu_xu(x)v(x).
\end{equation*}
Similarly, 
denote by $l^2(\partial G)$ the $l^2$-space of real-valued functions equipped with the inner product
\begin{align*}
(u,v)_{\partial G} := \sum\limits_{z\in \partial G}\mu_zu(z)v(z)
\end{align*}
for functions $u,v: \partial G\rightarrow \mathbb{R}$.

\begin{defi}[Neumann boundary spectral data]
For the Neumann eigenvalue problem
\begin{equation}\label{Neumann boundary spectral data}
\begin{aligned}
-\Delta_G\phi_j(x) & =\lambda_j\phi_j(x), \quad x\in G, \\
\partial_{\nu}{\phi_j}|_{\partial G} & =0,
\end{aligned}
\end{equation}
we say that $\phi_j$ with $(\phi_j,\phi_j)_G=1$ is a normalized Neumann eigenfunction associated to the Neumann eigenvalue $\lambda_j$.
The collection of the eigenpairs $\{(\lambda_j$,$\phi_j|_{\partial G})\}_{j=1}^{|G|}$ is  called the Neumann boundary spectral data.
\end{defi}

\begin{rema}
The graph Laplacian equipped with the homogeneous Neumann boundary condition is self-adjoint (see Lemma \ref{Green's formula on graph}), hence all the Neumann eigenvalues are real, and the normalized Neumann eigenfunctions $\{\phi_j(x) \mid x\in G\}_{j=1}^{|G|}$ form an orthonormal basis of $l^2(G)$.
\end{rema}

The discrete Gel’fand inverse spectral problem concerns reconstruction of the interior vertex set $G$, the edge set $\mathcal{E}$, and the weight functions $w, \mu$ from the Neumann boundary spectral data~\cite{MR4620352}. However, it is worth noting that solving the discrete Gel’fand’s inverse problem on general graphs is not possible due to the existence of isospectral graphs, see \cite{MR926481,MR1710481,MR1664212}. 
In this article, we restrict ourselves to the following special case: Suppose the edge weight function $w$ is known. Given the Neumann boundary spectral data and the boundary vertex weight $\mu|_{\partial G}$, what can be concluded regarding the interior vertex weight $\mu|_G$?
In \cite[Theorem 2]{MR4620352}, it is proved that $\mu$ can be uniquely determined under suitable assumptions on the graph, provided $\mu|_{\partial G}$ is known. 
However, this proof is non-constructive and does not yield explicit reconstruction. The main objective of this paper is to provide constructive procedures  for identifying $\mu|_G$, enabling the derivation of an algorithm to numerically compute $\mu|_G$.

Our constructive proof and algorithm are rooted in the boundary control method pioneered by Belishev~\cite{MR924687}, tailored for application to combinatorial graphs. An important step of the method links boundary spectral data with wave equations. Therefore, we pause here to formulate the graph wave equation following \cite{MR4620352}.
For a function $u:\mathbb{N}\times \bar{G}\longrightarrow\mathbb{R}$, we define the discrete first and second time derivatives as 
\begin{align*}
D_{t} u(t,x)=&u(t+1,x)-u(t,x),\qquad t\in\{0,1,\cdots\},x\in \bar{G},\\
D_{tt} u(t,x)=&u(t+1,x)-2u(t,x)+u(t-1,x),\qquad t\in\{1,2,\cdots\},x\in \bar{G}.
\end{align*}
We will refer to the following equation as the \textit{graph wave equation}:
\begin{align*}
D_{tt} u(t,x) - \Delta_G u(t,x) = 0, \qquad t\in\{1,2,\cdots\},  x\in G.
\end{align*}

Our first goal is to prove a unique continuation result for the graph wave equation. To this end, we introduce some terminologies.
Given any $x,y\in\bar{G}$, their \textit{distance}, denoted by $d(x,y)$, is defined as the minimum number of edges that connect $x$ and $y$ via other vertices.
For $x\in \bar{G}$ , its \textit{distance to the boundary} $\partial G$ is defined as
\begin{align*}
d(x,\partial G)= \min\limits_{z\in\partial G}d(x,z),\quad x\in \bar{G}.
\end{align*}
We say a vertex $x\in \bar{G}$ has \textit{level} $l$ if $d(x,\partial G)=l$. 
Obviously, $l$ is an integer and $0\leq l\leq \max\limits_{x\in G}d(x,\partial G)$.
The collection of interior vertices of \textit{level} $l$ is denoted by
\begin{align*}
N_l := \{x\in G\mid d(x,\partial G)=l\}.
\end{align*}
For a subset of vertices $\Omega\subset \bar{G}$, the set
$$\mathcal{N}(\Omega)= \{y\in \bar{G}~|~x\sim y, ~x\in \Omega\}$$
is called the \textit{neighborhood} of $\Omega$ in $\bar{G}$.
If there exists $y_0\in \bar{G}$ such that $y_0\in \mathcal{N}(x)\cap N_{l+1}$ for $x\in N_l$, then $y_0$ is called a \textit{next-level neighbor} of $x$.

The following assumption on the topology of $\bar{G}$ is critical for our proof of the unique continuation result.

\begin{assu}\label{foliation condition} 
\begin{enumerate}[(i)]
	\item Every boundary vertex connects to a unique interior vertex.
	\item For each integer $l$ with $1\leq l\leq \max\limits_{x\in G}d(x,\partial G)$, the set of vertices of level $l$ admits the decomposition
\begin{align*}
N_l=\bigcup\limits_{r=1}^{k_l}N_l^{r} \quad \text{ for }~k_l\in \mathbb{N_+},
\end{align*}
where  $k_l\in \mathbb{N_+}$ depends on $l$, and the sets $N^1_l$ and $N^k_l$ are defined as
\begin{align*}
N_l^{1}:&=\{x\in N_l: |\mathcal{N}(x)\cap N_{l+1}|\leq1\},\\
N_l^{k}:&=\left\{x\in N_l: |\mathcal{N}(x)\cap N_{l+1}|>1,~ |\mathcal{N}(x)\cap N_{l+1}\setminus\left(\bigcup\limits_{r=1}^{k-1}\mathcal{N}({N_l^{r}})\right)|\leq1\right\} ,\quad 2\leq k\leq k_l.
\end{align*}
\end{enumerate}
\end{assu}

\begin{rema}
Assumption\ref{foliation condition}(ii) means that every $N_l$ consists of two types of vertices: the first type are those in $N_l^1$, they have at most one next-level neighbor; the second type are those in $N_l^k$ ($k = 2, \dots, k_l$), they may have multiple next-level neighbors but at most one of them is not a neighbor of any vertices in $N_l^1,\dots,N_l^{k-1}$. This may be viewed as a type of ``foliation condition'' for the graph.
\end{rema}

The first main result of this paper is the following unique continuation property for the graph wave equation.

\begin{theo}\label{unique_continuation_principle}(Unique continuation theorem).
Suppose $\bar{G}$ satisfies Assumption~\ref{foliation condition}.
If $u(t,x)$ satisfies the graph wave equation with vanishing Dirichlet and Neumann data:
\begin{align*} 
\begin{cases}
D_{tt} u(t,x) - \Delta_G u(t,x) = 0, \quad & (t,x)\in \{-T+1,-T+2,\cdots, T-1\}\times G, \\
u(t,z)=\partial_\nu u(t,z)  = 0, \quad  &  (t,z)\in \{-T,-T+1,\cdots, T\}\times\partial G,
\end{cases}
\end{align*}
then
\begin{equation} \label{eq:uniquecont}
    u(t,x)=0,\qquad (t,x)\in\{-T+l-1,\cdots,T-l+1\}\times N_l
\end{equation}
for all $l = 1,2,\dots, \max\limits_{x\in G}d(x,\partial G)$. 
In particular, if $T\geq \max\limits_{x\in G}d(x,\partial G)$, then 
$$u(0,x)=D_tu(0,x)=0$$ for all $x\in G$.
\end{theo}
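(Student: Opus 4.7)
The natural strategy is a layer-by-layer induction on the level $l$, using the wave equation at level-$l$ vertices to extract information about level-$(l+1)$ values, with a secondary induction inside each layer over the pieces $N_l^1,\dots,N_l^{k_l}$ produced by the foliation condition.

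For the base case $l=1$, I would pick any $z\in\partial G$. By Assumption~\ref{foliation condition}(i), $z$ has a unique interior neighbor $x(z)$, which lies in $N_1$. Plugging into the Neumann formula \eqref{def_Neumann boundary value} and using $u(t,z)=\partial_\nu u(t,z)=0$ yields $w(x(z),z)\,u(t,x(z))/\mu_z=0$, hence $u(t,x(z))=0$ for every $t\in\{-T,\dots,T\}$. Since every vertex in $N_1$ is the unique interior neighbor of at least one boundary vertex (by definition of level $1$ combined with (i)), this gives \eqref{eq:uniquecont} for $l=1$.

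For the inductive step, assume \eqref{eq:uniquecont} for all levels $\le l$ and fix $x\in N_l$. Because $u(\cdot,x)\equiv 0$ on $\{-T+l-1,\dots,T-l+1\}$, the second difference $D_{tt}u(t,x)$ vanishes on $\{-T+l,\dots,T-l\}$, so the wave equation forces $\Delta_G u(t,x)=0$ on this time window. Writing the Laplacian out and using $u(t,x)=0$ together with the inductive vanishing of $u$ on $N_{l-1}\cup N_l$, one obtains the relation
\begin{equation*}
\sum_{y\in\mathcal{N}(x)\cap N_{l+1}} w(x,y)\,u(t,y)=0,\qquad t\in\{-T+l,\dots,T-l\}.
\end{equation*}
Now I would run a secondary induction on $k=1,\dots,k_l$ using the decomposition of Assumption~\ref{foliation condition}(ii). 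For $x\in N_l^1$, there is at most one next-level neighbor, so the sum has a single term and $u(t,y)=0$ on the required window. Assuming $u(t,\cdot)$ vanishes on $\mathcal{N}(N_l^1)\cup\dots\cup\mathcal{N}(N_l^{k-1})\cap N_{l+1}$, for each $x\in N_l^k$ the defining condition says that at most one neighbor $y^\star\in\mathcal{N}(x)\cap N_{l+1}$ lies outside $\bigcup_{r<k}\mathcal{N}(N_l^r)$; all other contributions in the sum are already known to vanish, leaving $w(x,y^\star)u(t,y^\star)=0$ and hence $u(t,y^\star)=0$. Iterating over $k$ exhausts $\mathcal{N}(N_l)\cap N_{l+1}$, and since every $y\in N_{l+1}$ must have a level-$l$ neighbor (take the second vertex on any shortest path from $y$ to $\partial G$), this establishes \eqref{eq:uniquecont} at level $l+1$ on the window $\{-T+l,\dots,T-l\}$, as claimed.

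The final assertion follows immediately: when $T\ge\max_{x\in G}d(x,\partial G)$, the time windows $\{-T+l-1,\dots,T-l+1\}$ all contain both $0$ and $1$ for every admissible $l$, so $u(0,x)=u(1,x)=0$ on all of $G$, yielding $u(0,x)=D_t u(0,x)=0$. The main obstacle is not any single calculation but the careful orchestration of the nested induction: one must verify that processing $N_l^1,\dots,N_l^{k_l}$ in order systematically isolates exactly one unknown at each step, which is precisely what Assumption~\ref{foliation condition}(ii) is engineered to provide, and simultaneously track how the time window contracts by one unit at each level due to the support of $D_{tt}$.
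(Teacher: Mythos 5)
Your proposal is correct and follows essentially the same route as the paper: the same base case via the Neumann condition at a boundary vertex and its unique interior neighbor, the same outer induction on the level $l$ with the time window shrinking by one per level through the vanishing of $D_{tt}u$ at level-$l$ vertices, and the same inner sweep through $N_l^1,\dots,N_l^{k_l}$ to isolate one unknown next-level value at a time. The window bookkeeping and the final deduction of $u(0,\cdot)=D_tu(0,\cdot)=0$ also match the paper's argument.
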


Next, we consider the following initial boundary value problem for the graph wave equation: 
\begin{align} \label{eq:ibvp1}
\begin{cases}
D_{tt} u(t,x) - \Delta_G u(t,x) = 0, & (t,x)\in \{1,2,\cdots, 2T-1\}\times G, \\
u(0,x) = 0, & x\in \bar{G}, \\
D_t u(0,x) = 0, & x\in G,\\
\partial_\nu u(t,z)  = f(t,z), & (t,z)\in \{0,1,\cdots, 2T\}\times \partial G,
\end{cases}
\end{align}
where $T>0$ is an integer and $f\in l^2(\{0,1,\cdots, 2T\}\times \partial G)$ is the Neumann data. Note that we must have $f(0,z)=0$ when $z\in\partial G$, due to compatibility with the initial conditions.
This initial boundary value problem clearly has a unique solution $u=u^f$, thus we can define the Neumann-to-Dirichlet map (ND map):
\begin{align*}
\Lambda_\mu f := u^f \big|_{\{0,1,\cdots,2T\}\times\partial G}.
\end{align*}
Here, the subscript indicates that the ND map depends on the vertex weight $\mu$.

The second main result of this paper is an explicit formula to reconstruct the ND map from the Neumann boundary spectral data.

\begin{theo}\label{Neubspecd_to_NDmap}
Suppose $\bar{G}$ satisfies Assumption \ref{foliation condition} $(i)$.
Given the edge weight function $w$ and the boundary vertex weight $\mu|_{\partial G}$, the Neumann-to-Dirichlet map $\Lambda_\mu$ can be computed from the Neumann boundary spectral data $\{(\lambda_j,\phi_j\big|_{\partial G})\}^{|G|}_{j=1}$ as follows:
\begin{align}\label{formula_of_Neubsd_to_express_NDmap}
\Lambda_\mu f(t,z) = u^f(t,z) = \sum^{|G|}_{j=1} \sum\limits_{k=1}^tc_{k}(f(t+1-k),\phi_j)_{\partial G}\phi_j(z)
-\frac{\mu_zf(t,z)}{w(x,z)},\quad x\sim z 
\end{align}
when $t = 1,2,3,\dots$
Here $x\in G$ is the unique interior vertex that is connected to $z\in\partial G$, and the coefficients $c_k$ satisfies $c_1=0$, $c_2=-1$, and the recursive relation $c_k=(2-\lambda_k)c_{k-1}-c_{k-2}$ for $k\geq3$.
\end{theo}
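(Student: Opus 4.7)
The plan is to expand the solution in the Neumann eigenbasis on $G$, derive a three-term recurrence for the time-dependent Fourier coefficients, and then use Assumption~\ref{foliation condition}(i) to extend the resulting interior formula to $\partial G$. For the first step, I would set $u_j(t):=(u^f(t,\cdot),\phi_j)_G$ so that $u^f(t,x)=\sum_{j=1}^{|G|}u_j(t)\phi_j(x)$ for $x\in G$, and then take the $l^2(G)$ inner product of the graph wave equation against $\phi_j$ to obtain $D_{tt}u_j(t)=(\Delta_G u^f(t,\cdot),\phi_j)_G$.

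The key identity would be obtained by applying the discrete Green's formula from Lemma~\ref{Green's formula on graph} to move $\Delta_G$ from $u^f$ onto $\phi_j$. Because $\partial_\nu\phi_j|_{\partial G}=0$ and $-\Delta_G\phi_j=\lambda_j\phi_j$, only one boundary term survives and it reduces to a constant multiple of $(f(t,\cdot),\phi_j)_{\partial G}$. This produces a forced three-term recurrence of the form
\begin{equation*}
u_j(t+1)=(2-\lambda_j)u_j(t)-u_j(t-1)-(f(t,\cdot),\phi_j)_{\partial G},
\end{equation*}
with initial data $u_j(0)=u_j(1)=0$ inherited from $u(0,\cdot)=0$ and $D_tu(0,\cdot)=0$. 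A discrete Duhamel argument then produces the representation $u_j(t)=\sum_{k=1}^{t}c_k(f(t+1-k,\cdot),\phi_j)_{\partial G}$; matching coefficients after re-indexing the convolution reduces the forced recurrence to the initial conditions $c_1=0$, $c_2=-1$ together with the homogeneous recurrence $c_k=(2-\lambda_j)c_{k-1}-c_{k-2}$ for $k\geq 3$, and an induction on $t$ verifies the identity.

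Finally, Assumption~\ref{foliation condition}(i) allows me to pass from interior values to $\partial G$. Since each $z\in\partial G$ has a unique interior neighbor $x\in G$, the Neumann identity \eqref{def_Neumann boundary value} collapses to the single-term equation $f(t,z)=\tfrac{1}{\mu_z}w(x,z)(u^f(t,x)-u^f(t,z))$, which I can solve for $u^f(t,z)=u^f(t,x)-\mu_z f(t,z)/w(x,z)$. The same assumption, applied together with $\partial_\nu\phi_j(z)=0$, forces $\phi_j(z)=\phi_j(x)$, so $\phi_j(x)$ in the interior formula above may be replaced by the boundary value $\phi_j(z)$, which is exactly the information contained in the Neumann boundary spectral data. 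Substituting into the Fourier expansion then yields \eqref{formula_of_Neubsd_to_express_NDmap}. The main point requiring care is the sign bookkeeping in the discrete Green's formula and aligning the convolution indices in the Duhamel representation; once those are handled, the rest is routine induction.
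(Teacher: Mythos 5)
Your proposal is correct and follows essentially the same route as the paper: expand $u^f$ in the Neumann eigenbasis, use Green's formula with $\partial_\nu\phi_j=0$ to derive the forced three-term recurrence $u_j(t+1)=(2-\lambda_j)u_j(t)-u_j(t-1)-(f(t),\phi_j)_{\partial G}$ with $u_j(0)=u_j(1)=0$, solve it by induction to get the convolution formula with the coefficients $c_k$, and then use Assumption~\ref{foliation condition}(i) to solve the single-term Neumann identity for $u^f(t,z)$. Your explicit observation that $\partial_\nu\phi_j(z)=0$ together with the uniqueness of the interior neighbor forces $\phi_j(x)=\phi_j(z)$ is a step the paper uses silently when it replaces $\phi_j(x)$ by $\phi_j(z)$ in the final expansion, and it is worth stating since the data only provides $\phi_j|_{\partial G}$.
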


Next, denote by $M$ the vector space spanned by products of harmonic functions, that is,
$$
M :=  {\rm span}\{\varphi\psi|_G: \Delta_G \varphi = \Delta_G \psi  = 0 \text{ in } G\}.
$$
The third main result of the paper gives an explicit reconstruction formula to obtain the orthogonal projection of $\mu|_{G}$ onto $M$ from the ND map.

\begin{theo}\label{orthogonal_projection_of_mu_onto_M}
Suppose $\bar{G}$ satisfies Assumption \ref{foliation condition}. Given the edge weight function $w$, the boundary vertex weight $\mu|_{\partial G}$, and $T\geq \max_{x\in G}d(x,\partial G)$, then the orthogonal projection of $\mu|_{G}$ onto $M$ can be explicitly reconstructed from the Neuman-to-Dirichlet map $\Lambda_\mu$. Moreover, a reconstruction algorithm is derived in Algorithm~\ref{alg:Framwork} in Section \ref{sec:Reconstruction Algorithm}.
\end{theo}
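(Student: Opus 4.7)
The plan is to apply the boundary control method adapted to finite weighted graphs, using the Neumann-to-Dirichlet map $\Lambda_\mu$ (reconstructed in Theorem~\ref{Neubspecd_to_NDmap}) as the only dynamic input. The method proceeds in three ingredients: a discrete Blagoveshchenskii identity, a harmonic-extension step based on the unique continuation theorem, and a final projection recovery.

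First I would derive a discrete Blagoveshchenskii-type identity that expresses the bilinear form
$$
B(f,g)\;:=\;(u^f(T,\cdot),\,u^g(T,\cdot))_G\;=\;\sum_{x\in G}\mu_x\,u^f(T,x)\,u^g(T,x)
$$
purely in terms of $f,g,\Lambda_\mu f,\Lambda_\mu g$ and the known boundary weight $\mu|_{\partial G}$. The derivation is a double summation-by-parts: apply the discrete spatial Green's formula for $\Delta_G$ (a consequence of self-adjointness under the Neumann boundary condition) to $u^f \Delta_G u^g - u^g \Delta_G u^f$, replace the spatial Laplacians by second time differences using the graph wave equation, and sum by parts in time, using the vanishing initial data to collapse the interior sums.

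Second, for each harmonic function $\varphi$ on $\bar G$ I would construct a Neumann control $f^\varphi$ such that $u^{f^\varphi}(T,\cdot)=\varphi|_G$. Existence is a duality consequence of Theorem~\ref{unique_continuation_principle}: any $\psi\in l^2(G)$ that is $(\cdot,\cdot)_G$-orthogonal to every $u^f(T,\cdot)$ can be propagated via a time-reversed dual wave so that it satisfies the hypotheses of Theorem~\ref{unique_continuation_principle} on the symmetric interval $\{-T,\dots,T\}\times\partial G$; the theorem then forces $\psi=0$ on $G$ because $T\ge\max_{x\in G}d(x,\partial G)$. Surjectivity of the control-to-state map onto $l^2(G)$ follows, and an explicit $f^\varphi$ can be recovered by solving a finite linear system whose entries are values of $B$ already computed in the previous step.

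Third, substituting such controls into $B$ for any pair of harmonic $\varphi_1,\varphi_2$ on $\bar G$ yields
$$
\sum_{x\in G}\mu_x\,\varphi_1(x)\,\varphi_2(x)\;=\;B(f^{\varphi_1},f^{\varphi_2}),
$$
which is precisely the standard Euclidean pairing $\langle \mu|_G,\varphi_1\varphi_2\rangle$ with a generator of $M$. Varying $\varphi_1,\varphi_2$ over a basis of harmonic functions (freely parametrized by their $|\partial G|$ boundary values) produces this pairing against a spanning set of $M$, and inverting the Gram matrix of that set yields the orthogonal projection of $\mu|_G$ onto $M$; organizing the above into a sequence of linear-algebra routines yields Algorithm~\ref{alg:Framwork}. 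The hardest step is the harmonic-extension construction, since one must set up the dual time-reversed wave carefully so that both its Dirichlet and Neumann traces vanish on the symmetric interval demanded by Theorem~\ref{unique_continuation_principle}; this is also where the hypothesis $T\ge\max_{x\in G}d(x,\partial G)$ is consumed. The Blagoveshchenskii identity and the final linear-algebraic inversion are then essentially routine.
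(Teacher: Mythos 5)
Your proposal follows essentially the same route as the paper: a discrete Blagoveshchenskii identity expressing inner products of waves at time $T$ through $\Lambda_\mu$, surjectivity of the control-to-state map via a time-reversed (odd-extended) dual wave and Theorem~\ref{unique_continuation_principle}, an explicit control obtained from the normal equations of the Gram form, and finally the pairing $\sum_{x\in G}\mu_x\varphi_1(x)\varphi_2(x)$ against products of harmonic functions. The only cosmetic difference is that the paper controls just one of the two harmonic functions and pairs the control with an explicit formula for $W^*\varphi_2$ rather than constructing a second control, but this does not change the substance of the argument.
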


Note that Theorem~\ref{orthogonal_projection_of_mu_onto_M} only ensures reconstruction of an orthogonal projection of $\mu|_G$. In order to obtain the full interior vertex measur $\mu|_G$, further conditions have to be imposed on $G$ and the edge weight function $w$. Note that an edge weight function $w: \mathcal{E}\longrightarrow \mathbb{R}_+$ can be identified with a point in the space $\mathbb{R}_+^{|\mathcal{E}|}$ by indexing the edges in $\mathcal{E}$.

\begin{coro} \label{thm:density}
Suppose $\bar{G}$ satisfies $\frac{|\partial G|(|\partial G|+1)}{2}\geq |G|$, and suppose there exists at least one edge weight function $w$ such that $M=l^2(G)$, then $M=l^2(G)$ holds for all edge weight functions $w$ except for a set of measure zero in $\mathbb{R}_+^{|\mathcal{E}|}$.
Therefore, under the assumptions of Theorem~\ref{orthogonal_projection_of_mu_onto_M} and Corollary~\ref{thm:density}, $\mu|_{G}$ can be explicitly reconstructed from the Neuman-to-Dirichlet map $\Lambda_\mu$ for all edge weight functions except for a set of measure zero in $\mathbb{R}_+^{|\mathcal{E}|}$. In this case, Algorithm~\ref{alg:Framwork} in Section \ref{sec:Reconstruction Algorithm} recovers $\mu|_G$. 
\end{coro}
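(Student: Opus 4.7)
The plan is to exploit the fact that $M$ depends algebraically on the edge weight function $w$, so that if $\dim M = |G|$ at a single edge weight, the same must hold on a complement of a measure-zero set. The reconstruction part is then an immediate consequence of Theorem~\ref{orthogonal_projection_of_mu_onto_M}.

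First, I would parametrize harmonic functions by their boundary values. Since $\bar{G}$ is connected, a standard discrete maximum principle (or a direct energy argument) shows that the Dirichlet problem $\Delta_G \varphi = 0$ on $G$, $\varphi|_{\partial G} = f$, admits a unique solution for every $f:\partial G\to\mathbb{R}$. Writing this as a linear system $L_G(w)\,\varphi|_G = b(w,f)$, where $L_G(w)$ is the restriction of $-\Delta_G$ to $G$, the entries of $L_G(w)$ are linear in $w$, while harmonicity is independent of $\mu$. By Cramer's rule, the harmonic extension $\varphi(x;w)$ is a rational function of $w$ for each $x\in G$. Taking the $|\partial G|$ characteristic functions of boundary vertices as boundary data yields a basis $\{\varphi_z(\cdot;w)\}_{z\in\partial G}$ of harmonic functions whose interior values are rational in $w$.

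Next, set $N := \frac{|\partial G|(|\partial G|+1)}{2}$ and form the $|G|\times N$ matrix $V(w)$ whose columns are $\varphi_y(\cdot;w)\varphi_z(\cdot;w)|_G$ for unordered pairs $\{y,z\}\subset\partial G$ (with $y=z$ allowed). By bilinearity, the column span of $V(w)$ equals $M$, so $M = l^2(G)$ is equivalent to $\operatorname{rank}V(w) = |G|$. Under the hypothesis $N\geq |G|$, this is in turn equivalent to the non-vanishing of at least one of the $|G|\times|G|$ minors of $V(w)$, each of which is a rational function of $w$. The hypothesis supplies a point $w_0\in\mathbb{R}_+^{|\mathcal{E}|}$ at which some such minor $\Delta_I(w)$ does not vanish; since $\Delta_I$ is rational and not identically zero, its zero locus is a proper real-algebraic subvariety of $\mathbb{R}_+^{|\mathcal{E}|}$ and therefore has Lebesgue measure zero. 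On its complement, $M = l^2(G)$.

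For the reconstruction statement, on this full-measure set where $M = l^2(G)$, the orthogonal projection of $\mu|_G$ onto $M$, which Theorem~\ref{orthogonal_projection_of_mu_onto_M} asserts is computable from $\Lambda_\mu$ via Algorithm~\ref{alg:Framwork}, coincides with $\mu|_G$ itself. The main technical subtlety to verify carefully is the rational dependence of harmonic extensions on $w$, which reduces to checking invertibility of $L_G(w)$ for every $w\in\mathbb{R}_+^{|\mathcal{E}|}$; once this is established, the remaining argument is a standard algebraic-genericity step based on the measure-zero property of proper algebraic subvarieties.
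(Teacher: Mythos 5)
Your proposal is correct and follows essentially the same route as the paper: both parametrize harmonic functions by boundary delta data via the invertibility of the interior Laplacian block (the paper cites \cite[Lemma 3.8]{Curtis2000InversePF} for this), form the $|G|\times\frac{|\partial G|(|\partial G|+1)}{2}$ matrix of Hadamard products of harmonic extensions, reduce $M=l^2(G)$ to full rank and hence to non-vanishing of a $|G|\times|G|$ minor that depends rationally (hence real-analytically) on $w$, and conclude via the measure-zero zero set of a non-trivial analytic function, with the reconstruction claim then following directly from Theorem~\ref{orthogonal_projection_of_mu_onto_M}.
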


Combining Theorem~\ref{Neubspecd_to_NDmap}, Theorem~\ref{orthogonal_projection_of_mu_onto_M} and Corollary~\ref{thm:density}, we see that the vertex weight $\mu|_G$ can be constructed from the Neumann boundary spectral data for a class of graphs.

The Gel'fand's inverse boundary spectral problem for partial differential operators in the continuum setting has been extensively investigated, e.g, in \cite{MR1390624,MR1608496,MR2053425,MR1484000,MR2221698,MR2096795,Burago2020StabilityOT,MR933457,MR1127097,MR3019446}. In particular, Belishev pioneered the boundary control method \cite{MR924687} which combined with the Tataru's unique continuation result \cite{MR1326909} determines the differential operators in $\mathbb{R}^n$. The method was further extended by Belishev and Kurylev on manifolds to determine the isometry type of a Riemannian manifold from the boundary spectral data \cite{MR1177292}.
The boundary control method for partial differential operators has since been greatly generalized (e.g, see the survey \cite{MR2353313}) and numerically implemented (e.g \cite{MR1694848, MR3483640, de2018recovery, korpela2018discrete, oksanen2022linearized, MR4369044, oksanen2024linearized}).
The Gel'fand's inverse boundary spectral problem is closely connected to several other celebrated inverse problems for wave, heat and Schr\"odinger equations \cite{MR2065431}. We refer readers to the monograph \cite{MR1889089} for a comprehensive introduction to the Gel'fand's inverse boundary spectral problem as well as its connections to other inverse problems.

The discrete Gel'fand's inverse boundary spectral problem on combinatorial graphs is formulated in \cite{MR4620352}. Assuming the ``two-points condition'' (see \cite{MR4620352} or Appendix \ref{appendixA} for the precise definition), the authors of \cite{MR4620352} proved that any two finite, strongly connected, weighted graphs that are spectrally isomorphic with a boundary isomorphism must be isomorphic as graphs. This establishes the uniqueness result for determining the graph structure (including the vertices, edges and weights) from the spectral data. 
However, the proof in \cite{MR4620352} is non-constructive and it remains unclear how to explicitly compute the graph structure. A major contribution of the current paper is the development of an algorithm based on the boundary control method that reconstructs certain quantities on a class of combinatorial graphs. We remark that the idea of the boundary control method has been adapted to solve inverse problems on certain special graphs in the earlier literature, e.g, in recovering the structures of planar trees \cite{MR2067494,MR2218385} as well as in detecting cycles in graphs \cite{MR2545980}.

Inverse spectral problems on graphs arise naturally in quantum physics. A class of graphs where these problems are well-suited is quantum graphs. A quantum graph is a metric graph that carries differential operators on the edges with appropriate conditions on the vertices. Inverse spectral problems on quantum graphs usually aim at determining graph structures or differential operators from spectral data, see e.g, \cite{MR1862642,MR2146822,MR4109187,MR2148632,MR4284864,MR2587077,MR4043329,MR2067494,MR2218385,MR4542446}.
Many other inverse problems that are closely related to inverse spectral problems have also found the counterparts on graphs. Examples include inverse conductivity problems (e.g, \cite{Curtis2000InversePF}), 
inverse scattering problems (e.g, \cite{MR2913620}), and inverse interior spectral problems (e.g, \cite{MR4598377}).

This paper's major contributions include:
\begin{itemize}
    \item A reconstruction formula and an algorithm to compute the vertex weight $\mu$. The uniqueness of the vertex weight for a class of combinatorial graphs was previously addressed in~\cite{MR4620352}, but the provided proof is non-constructive and lacks explicit computational procedures. This paper focuses specifically on reconstructing the vertex weight $\mu$. We derive an explicit reconstruction formula by converting the Neumann boundary spectral data to the Neumann-to-Dirichlet map for the graph wave equation and then adapting Belishev's boundary control method to recover $\mu$. An algorithm is subsequently derived from this formula and validated through multiple numerical experiments.
    \item New uniqueness result. A critical hypothesis for the uniqueness proof in \cite[Theorem 2]{MR4620352} is the so-called ``two-points condition" (see Appendix \ref{appendixA}), which imposes specific geometric restrictions on graphs. Consequently, the uniqueness result in~\cite{MR4620352} applies only to graphs that meet the two-points condition. This paper considers a different class of graphs, based on Assumption \ref{foliation condition}, which to some extent can be viewed as a discrete ``foliation condition". In Appendix \ref{appendixA}, we provide examples demonstrating that Assumption \ref{foliation condition} is not a special case of the two-points condition, and vice versa. This distinction ensures that the class of graphs considered in this paper is not a subclass of those in ~\cite{MR4620352}. Consequently, our reconstruction formula also implies uniqueness for a new class of graphs that satisfies Assumption \ref{foliation condition} but not the two-points condition.

    \item Unique continuation for the graph wave equation. The unique continuation principle is a crucial property of wave phenomena. For the continuum wave equation with time-independent coefficients, this principle is established in Tataru's celebrated work in \cite{MR1326909}. In this paper, we identify a class of graphs (see Assumption~\ref{foliation condition}) and prove a discrete unique continuation principle for the graph wave equation (see Theorem~\ref{unique_continuation_principle}). This result plays a central role in adapting the boundary control method to combinatorial graphs.
\end{itemize}

The paper is organized as follows:
In Section \ref{Prove the unique continuation theorem}, we prove the unique continuation principle Theorem~\ref{unique_continuation_principle} and provide several concrete examples of planar graphs that satisfy Assumption~\ref{foliation condition}.
Section \ref{Calculate the ND map from the Neumann boundary spectral data} is devoted to the proof of Theorem \ref{Neubspecd_to_NDmap}.
In Section \ref{Prove the reconstruction procedure}, we develop the discrete boundary control method and describe how to construct the orthogonal projection of the vertex weight on $M$, proving Theorem~\ref{orthogonal_projection_of_mu_onto_M}.
Section \ref{sec:uniqueandrecon} identifies a class of weighted graphs where the vertex weight can be uniquely constructed for a generic set of edge weight functions, proving Corollary~\ref{thm:density}.
The reconstruction procedures are summarized and formulated as a numerical algorithm in Section \ref{sec:Reconstruction Algorithm}. 
Finally, the resulting algorithm is validated on numerical examples with the quantitative performance reported in Section \ref{sec:Numerical Expreiments}.

\section{Proof of Theorem~\ref{unique_continuation_principle}}\label{Prove the unique continuation theorem}

This section is devoted to the proof of the unique continuation principle in Theorem \ref{unique_continuation_principle}. We also provide several graphs that satisfy Assumption~\ref{foliation condition}. These graphs are subgraphs of 2D regular tilings.

\begin{proof}
We prove the claim~\eqref{eq:uniquecont} by induction on $l = 1, 2, \cdots, \max\limits_{x\in G}d(x,\partial G)$.

\textbf{Base Case:}
For the base step $l=1$, take any $x\in N_1$.  There exists a boundary vertex $z$ such that $x\sim z$. Moreover, by Assumption \ref{foliation condition} (i), $x$ is the unique interior vertex connected to $z$.
Applying the  Dirichlet condition  $u|_{\{-T,\dots,T\}\times\partial G} =0$
and Neumann condition $\partial_\nu u|_{\{-T,\dots,T\}\times\partial G} = 0$ yields, at $z\in \partial G$,   that
\begin{equation*}
    0 = \partial_\nu u(t,z) = \frac{1}{\mu_z} w(x,z) u(t,x) \quad\text{ for }~t\in\{-T,\dots,T\}.
\end{equation*}
Hence, $u(t,x)=0$ since $\mu_z>0$ and $w(x,z)>0$. This proves the base case.

{\bf{Induction Step:}}  
For the induction step, let $l_1$ be a positive integer with $l_1\leq\max_{x\in G}d(x,\partial G)-1$. Suppose for all $l\leq l_1$, we have the inductive hypothesis
\begin{equation} \label{eq:indhypo}
u(t,x)=0, \quad (t,x)\in\{-T+l-1,\cdots,T-l+1\}\times N_l.
\end{equation}
It remains to prove the case $l=l_1+1$, that is,
\begin{equation*}
u(t,x)=0, \quad (t,x)\in\{-T+l_1,\cdots,T-l_1\}\times N_{l_1+1}.   
\end{equation*}
To this end, fix $t\in\{-T+l_1,\cdots,T-l_1\}$ and consider an arbitrary $y\in N_{l_1}$. We have $u(t,y)=0$ and $D_{tt} u(t,y)= u(t+1,y) - 2 u(t,y) + u(t-1,y) = 0$ due to the inductive hypothesis~\eqref{eq:indhypo}. The wave equation at $(t,y)$ becomes
\begin{align*}
0 = D_{tt} u(t,y) - \Delta_G u(t,y) 
= -
\frac{1}{\mu_y}\sum_{\substack{x\in N_{l_1-1}\cup N_{l_1}\cup N_{l_1+1}\\ x\sim y}}w(x,y) u(t,x).
\end{align*}
In the summation, we have $u(t,x)=0$ for $x\in N_{l_1-1} \cup N_{l_1}$ because of  the inductive hypothesis~\eqref{eq:indhypo}. 
Hence,
\begin{equation} \label{eq:sumcond}
\sum_{\substack{x\in N_{{l_1}+1}\\ x\sim y}}w(x,y)u(t,x)=0.
\end{equation}
Using this identity, we will consider the decomposition $y \in N_{l_1} = N^1_{l_1} \cup N^2_{l_1} \cup \dots \cup N^{k_{l_1}}_{l_1}$,  as stated in  Assumption \ref{foliation condition} and sequentially prove $u(t,x)=0$ for all $x\in N_{l_1+1}$.

If $y\in N^1_{l_1}$, there exists  at most one $x\in N_{l_1+1}$ connected to $y$. If no such $x$ exists, there is nothing to prove. If such an $x$ exists, the condition~\eqref{eq:sumcond} reduces to $w(x,y)u(t,x)=0$, hence $u(t,x)=0$ since $w(x,y)>0$.
In other words, we have proved that $u(t,x)=0$ for all $x\in N_{l_1+1} \cap \mathcal{N}(N^1_{l_1})$.

If $y\in N^2_{l_1}$, then $y$ may have multiple next-level neighbors $x_1,\dots,x_{L_2} \in N_{l_1+1}$. However,   at most one of them, say $x_1$, is not in $\mathcal{N}(N^1_{l_1})$. 
In the previous paragraph, we have already proved $u(t,x_2)= \dots = u(t,x_{L_2}) = 0$. Therefore,  the condition~\eqref{eq:sumcond}  can be reduced to $w(y,x_1)u(t,x_1)=0$, which yields   $u(t,x_1)=0$. 
In other words, we  have proved that $u(t,x)=0$ for all $x\in N_{l_1+1} \cap \mathcal{N}(N^2_{l_1})$.

In general, if $y\in N^k_{l_1}$ ($k=2,\cdots,k_{l_1}$), then $y$ may have multiple next-level neighbors $x_1,\cdots,x_{L_k}$
$\in$ $N_{l_1+1}$ but at most one of them, say $x_1$, is not in $\bigcup\limits_{r=1}^{k-1} \mathcal{N}(N^r_{l_1})$. At this point, we have already proved $u(t,x_2)= \cdots = u(t,x_{L_k}) = 0$.  Hence, condition \eqref{eq:sumcond} reduces to $w(y,x_1)u(t,x_1)=0$ and consequently $u(t,x_1)=0$.
In other words, we have proved that $u(t,x)=0$ for all $x\in N_{l_1+1} \cap \mathcal{N}(N^k_{l_1})$.

This completes the proof that $u(t,x)=0$ for all $x\in N_{l_1+1}$, because any such $x$ must be connected to a vertex $y\in N^{k}_{l_1}$ for some $k$, that is, $x\in N_{l_1+1} \cap \mathcal{N}(N^k_{l_1})$ for some $k$. This argument holds for any $t\in\{-T+l_1,\cdots,T-l_1\}$, hence the induction step is proved.

Finally, if $T\geq \max\limits_{x\in G}d(x,\partial G)$, then $T-l+1\geq 1$, hence $\{-1,0,1\} \subset \{-T+l-1,\cdots,T-l+1\}$. For any $x\in G$, let $l$ be its level ($1\leq l \leq \max\limits_{x\in G}d(x,\partial G)$), then
\begin{equation*}
(0,x), (1,x) \in \{-T+l-1,\cdots,T-l+1\}\times N_l.
\end{equation*}
By \eqref{eq:uniquecont}, we have $u(0,x)= u(1,x)=0$ and $D_t(0,x) = u(1,x) - u(0,x) = 0$.

\end{proof}

In the rest of this section, we provide some examples that satisfy Assumption~\ref{foliation condition}
and the condition $\frac{|\partial G|(|\partial G|+1)}{2}\geq |G|$. The latter condition is motivated by the discussion in Remark \ref{rema:verticenum}.
These examples  are special subgraphs obtained from regular tilings in $\mathbb{R}^2$.

Let $m,n$ be finite integers. We make the identification $\mathbb{R}^2\simeq \mathbb{C}$ so that the coordinates of vertices can be represented using complex numbers.
In each example, we obtain a domain $\mathcal{D}_{m,n}$ by translating a fundamental domain $D_0$ along two linearly independent directions $\vec{v}_1, \vec{v}_2$, respectively.
The vertices in $D_0$ are  translated to obtain the set of interior vertices $G$.

\begin{exam}
\textbf{The graph $R_{m,n}$ with $m,n\geq 2$.}

Take $\vec{v}_1 =1+0{\rm i}$ and $\vec{v}_2 =0+{\rm i} $.
Let $D_0\subset\mathbb{R}^2$ be the rectangular domain with the set of $4$ vertices $G_0:= \{1+{\rm i}, 2+{\rm i}, 1+2{\rm i}, 2+2{\rm i}\}$.
Define
\begin{align*}
\mathcal{D}_{m,n}:=\bigcup_{\substack{0\leq j\leq m-2\\ 0\leq k\leq n-2}}(D_0+j\vec{v}_1+k\vec{v}_2),
\end{align*}
with $j,k\in \mathbb{N}$.
The set of interior vertices is
\begin{align*}
G:= \bigcup_{\substack{0\leq j\leq m-2\\ 0\leq k\leq n-2}}
(G_0+j\vec{v}_1+k\vec{v}_2),
\end{align*}
where the corresponding  set of boundary vertices is
\begin{equation*}
\partial G := (\partial G)_L \cup (\partial G)_R \cup (\partial G)_B \cup (\partial G)_T,
\end{equation*}
with
\begin{align*}
(\partial G)_L:=& \{k\vec{v}_2\mid 1\leq k \leq n\},\qquad (\partial G)_R:=\{(m+1)\vec{v}_1+k\vec{v}_2\mid 1\leq k \leq n\},\\
(\partial G)_B:=&\{j\vec{v}_1 \mid 1\leq j \leq m\},\qquad
(\partial G)_T:=\{j\vec{v}_1 +(n+1)\vec{v}_2\mid 1\leq j \leq m\}.
\end{align*}

Note that the corner vertices
$0+0 {\rm i}$, $(m+1)+0{\rm i}$, $0+(n+1){\rm i}$,
$(m+1)+(n+1){\rm i}$ are not included in $\partial G$.
The edge set $\mathcal{E}$ is defined by assigning an edge to any pair of vertices in $\bar{G}$ that is of Euclidean distance $1$, where any two boundary vertices are not connected.
This graph is denoted by $R_{m,n}$, where $m,n$ indicate the number of interior vertices along the directions
$\vec{v}_1, \vec{v}_2$, respectively. As an example, $R_{4,3}$ is illustrated in Fig. \ref{fig_rectangular}.

\begin{figure}[htbp]
\centering
\includegraphics[scale=0.68]{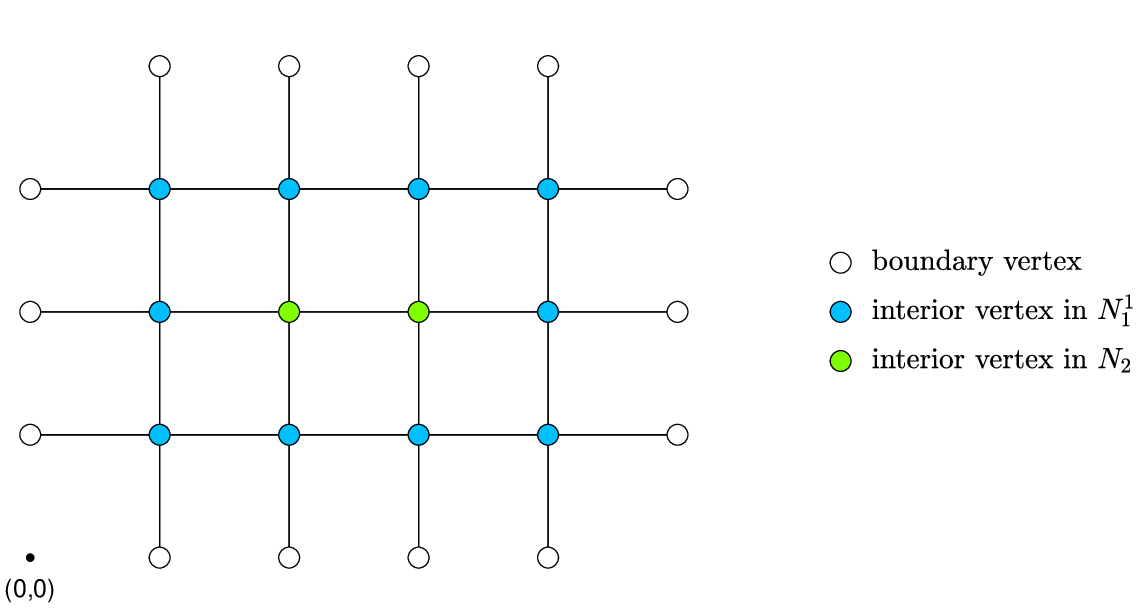}
\caption{\small\small{The graph $R_{4,3}$. }}
\label{fig_rectangular}
\end{figure}
\end{exam}

\begin{lemm}
    For any integers $m,n\geq 2$, the graph $R_{m,n}$ satisfies Assumption~\ref{foliation condition} and $\frac{|\partial G|(|\partial G|+1)}{2}\geq |G|$.
\end{lemm}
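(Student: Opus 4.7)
The plan is to verify the cardinality inequality and the two parts of Assumption~\ref{foliation condition} separately by direct inspection of the grid $R_{m,n}$. The counts give $|G| = mn$ and $|\partial G| = 2(m+n)$, and by AM--GM,
\[
\frac{|\partial G|(|\partial G|+1)}{2} \;\geq\; \frac{(2m+2n)^2}{2} \;=\; 2(m+n)^2 \;\geq\; 8mn \;\geq\; mn,
\]
so the cardinality inequality is immediate (in fact it holds for all $m,n\geq 1$). For Assumption~\ref{foliation condition}(i), each boundary vertex on the left side has coordinates $(0,k)$ with $1\leq k\leq n$, and the only point of $\bar{G}$ at Euclidean distance $1$ from it is the interior vertex $(1,k)$: the corner candidates $(0,0)$ and $(0,n+1)$ are excluded from $\bar{G}$ by construction, and boundary-to-boundary edges are forbidden by the definition of $\mathcal{E}$. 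The same argument applies to the other three sides.

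For Assumption~\ref{foliation condition}(ii), I would first establish the explicit level formula
\[
d\bigl((a,b),\partial G\bigr) \;=\; \min(a,\, m+1-a,\, b,\, n+1-b), \qquad (a,b)\in G,
\]
by noting that any shortest path from $(a,b)$ to $\partial G$ enters through the unique interior neighbor of a boundary vertex, so the graph distance equals one plus an $\ell^1$-distance to a boundary row or column, whose minimum over the four sides gives the claimed expression. The key point is then the claim that \emph{every vertex $(a,b)\in N_l$ has at most one next-level neighbor}. If this claim holds, then $N_l=N_l^1$ with $k_l=1$ for every $l$, which automatically realizes the decomposition required in~(ii).

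To prove the claim, I would fix $(a,b)\in N_l$ and let $S\subseteq\{a,\,m+1-a,\,b,\,n+1-b\}$ be the set of coordinates that equal $l$. If $|S|\geq 2$, then $(a,b)$ sits at a corner of the level-$l$ rectangular shell; any grid move either preserves a coordinate in $S$ at the value $l$ (so the neighbor stays in $N_l$ or drops below) or pushes a non-$S$ coordinate down to $l-1$, so no neighbor lies in $N_{l+1}$. If $|S|=1$, say $a=l$ (the other three cases being symmetric), then the only move that can increase the critical component is $(a,b)\mapsto(a+1,b)$, whose level is $\min(l+1,\,m-l,\,b,\,n+1-b)$; this equals $l+1$ when $m\geq 2l+1$ and equals $l$ when $m=2l$. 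Either way, $(a,b)$ has at most one next-level neighbor.

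The main obstacle is merely bookkeeping: the rectangular rigidity of the shells leaves each non-corner point with a unique ``inward'' direction, so no genuine $k_l\geq 2$ phenomenon arises in $R_{m,n}$ (even though Assumption~\ref{foliation condition}(ii) permits it). The only subtle situation is the degenerate case $m=2l$ (or $n=2l$), where the inward move fails to strictly increase the level and the vertex ends up with zero rather than one next-level neighbors --- both of which are allowed by the definition of $N_l^1$.
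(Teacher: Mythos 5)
Your proof is correct and takes essentially the same route as the paper: both reduce Assumption~\ref{foliation condition}(ii) to the trivial decomposition $N_l=N_l^1$ and verify the cardinality bound by counting $|G|=mn$ and $|\partial G|=2(m+n)$. The paper simply parametrizes the four arms of each level set and asserts $N_l=N_l^1$, whereas you supply the omitted details (the level formula $d((a,b),\partial G)=\min(a,m+1-a,b,n+1-b)$ and the at-most-one-inward-neighbor case analysis, including the degenerate $m=2l$ case), which is a welcome strengthening rather than a different approach.
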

\begin{proof}
For $1\leq l\leq \max\limits_{x\in G}d(x,\partial G)$,
the set $N_l$ in $R_{m,n}$ is
\begin{align*}
N_l=&\left\{l\vec{v}_1+(l+k)\vec{v}_2 \mid 0\leq k \leq n-2l+1  \right\}
\\
&\cup \left\{(m+1-l)\vec{v}_1+(l+k)\vec{v}_2\mid 0\leq k \leq n-2l+1  \right\}\\
&\cup \left \{(l+j)\vec{v}_1+l\vec{v}_2 \mid 1\leq j \leq m-2l \right \}
\\
&\cup  \left\{(l+j)\vec{v}_1
+(n+1-l)\vec{v}_2 \mid 1\leq j \leq m-2l \right \}.
\end{align*}
The decomposition of $N_l$ is trivial as $N_l=N_l^1$.
For each vertex respectively in the above four subsets of $N_l$, there exists a boundary vertex closest to it in
$(\partial G)_L, (\partial G)_R, (\partial G)_B$ and $(\partial G)_T$ respectively.

To show the relation between the boundary and interior vertices, simply notice that $|\partial G|=2(m+n)$ and $|G|=mn$, thus $\frac{|\partial G|(|\partial G|+1)}{2} = (m+n)(2m+2n+1) \geq mn = |G|$.
\end{proof}

\begin{exam}
\textbf{The graph $T_{m,n}$ with $m,n\geq 2$.}

Take $\vec{v}_1=1+ 0{\rm i}$, $\vec{v}_2=\frac{1}{2}+ \frac{\sqrt3}{2}{\rm i}$.
Let $D_0\subset\mathbb{R}^2$ be the triangular domain with the set of 3 vertices $G_0:= \{\vec{v}_1+\vec{v}_2, 2\vec{v}_1+\vec{v}_2, \vec{v}_1+2\vec{v}_2\}$.
Define
\begin{align*}
\mathcal{D}_{m,n}:=\bigcup_{\substack{0\leq j\leq m-2\\ 0\leq k\leq n-2}}(D_0+j\vec{v}_1+k\vec{v}_2),
\end{align*}
where $j,k\in \mathbb{N}$.
The set of interior vertices is
\begin{align*}
G:= \bigcup_{\substack{0\leq j\leq m-2\\ 0\leq k\leq n-2}}
(G_0+j\vec{v}_1+k\vec{v}_2),
\end{align*}
The set of boundary vertices is $\partial G := (\partial G)_L \cup (\partial G)_R \cup (\partial G)_B \cup (\partial G)_T$ with
\begin{align*}
(\partial G)_L:=& \{k\vec{v}_2\mid 1\leq k \leq n\},\qquad (\partial G)_R:=\{(m+1)\vec{v}_1+k\vec{v}_2\mid 1\leq k \leq n\},\\
(\partial G)_B:=&\{j\vec{v}_1 \mid 1\leq j \leq m\},\qquad
(\partial G)_T:=\{j\vec{v}_1 +(n+1)\vec{v}_2\mid 1\leq j \leq m\}.
\end{align*}
The definition of the edge set $\mathcal{E}$  is as follows: an edge is assigned to any pair of vertices in $\bar{G}$ of Euclidean distance $1$, where every boundary vertex in $\partial G$ connects to an interior vertex in $G$ whose coordinates differ by vectors $\vec{v}_1$ or $\vec{v}_2$.
This graph is denoted by $T_{m,n}$, where $m,n$ indicate the number of interior vertices along the directions
$\vec{v}_1, \vec{v}_2$, respectively. As an example, $T_{6,4}$ is illustrated in Fig. \ref{fig_triangular}.

\begin{figure}[htbp]
\centering
\includegraphics[scale=0.6]{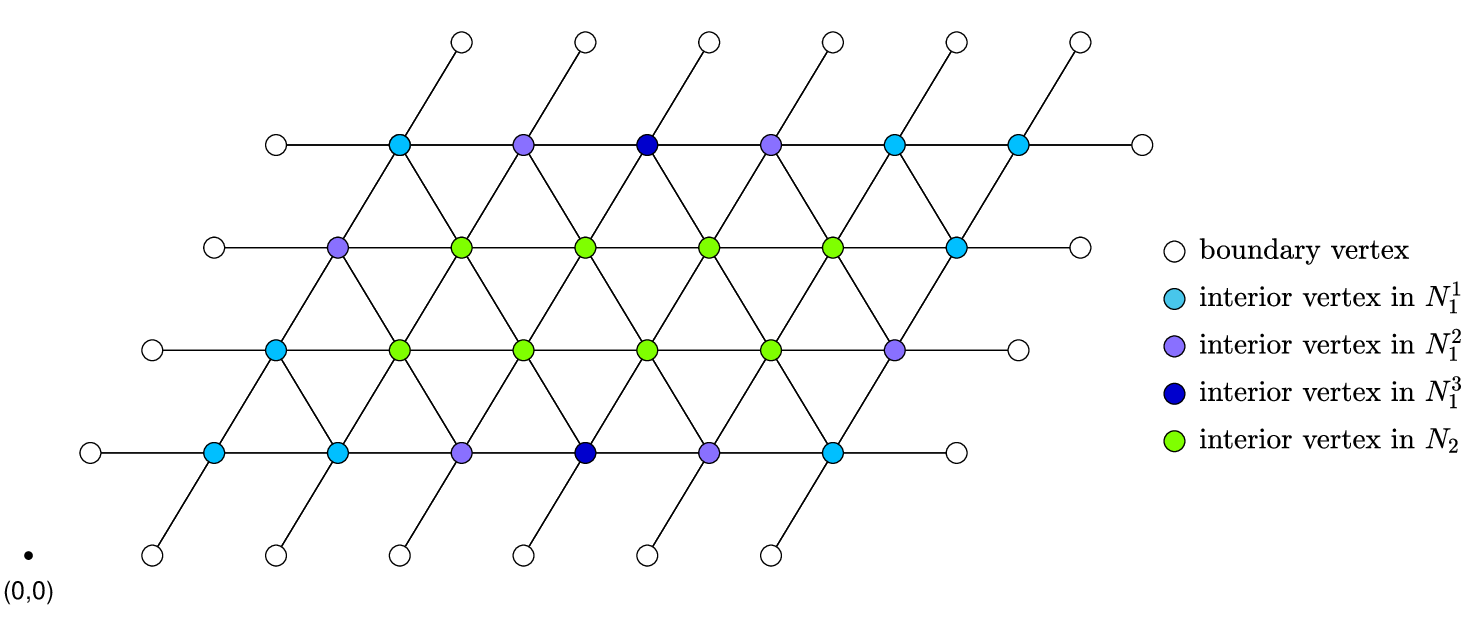}
\caption{\small\small{The graph $T_{6,4}$. }}
\label{fig_triangular}
\end{figure}
\end{exam}

\begin{lemm}
    For any integers $m,n\geq 2$, the graph $T_{m,n}$ satisfies Assumption~\ref{foliation condition} and $\frac{|\partial G|(|\partial G|+1)}{2}\geq |G|$.
\end{lemm}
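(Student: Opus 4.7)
The plan is to verify the two claims separately. The counting inequality is immediate: $|G| = mn$ and $|\partial G| = 2(m+n)$, hence
\begin{equation*}
\frac{|\partial G|(|\partial G|+1)}{2} = (m+n)(2m+2n+1) \geq 2(m+n)^2 \geq 8mn \geq mn
\end{equation*}
by the AM-GM bound $(m+n)^2 \geq 4mn$.

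To check Assumption~\ref{foliation condition}(i), I go through each of the four boundary sets. For a vertex $k\vec{v}_2 \in (\partial G)_L$, the edge rule allows it to connect only to interior vertices differing by $\pm \vec{v}_1$ or $\pm \vec{v}_2$; the candidates $k\vec{v}_2 \pm \vec{v}_2$ are themselves boundary and $k\vec{v}_2 - \vec{v}_1$ lies outside $\bar{G}$, leaving $\vec{v}_1 + k\vec{v}_2 \in G$ as the unique interior neighbor. Analogous case analyses handle $(\partial G)_R$, $(\partial G)_B$, and $(\partial G)_T$.

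For Assumption~\ref{foliation condition}(ii), I first establish the level formula $d(j\vec{v}_1 + k\vec{v}_2,\,\partial G) = \min(j,\,m+1-j,\,k,\,n+1-k)$. The upper bound comes from explicit monotone paths toward each of the four boundary sides (e.g., $k$ steps of $-\vec{v}_2$ reach $j\vec{v}_1 \in (\partial G)_B$). The lower bound uses the fact that the six allowed edge vectors $\pm\vec{v}_1, \pm\vec{v}_2, \pm(\vec{v}_1 - \vec{v}_2)$ each change the $\vec{v}_1$-coordinate (resp. $\vec{v}_2$-coordinate) by at most one in absolute value, so reducing the first coordinate to $0$ or raising it to $m+1$ requires at least $j$ or $m+1-j$ steps respectively, and likewise for the second coordinate. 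Thus $N_l$ is a rectangular ``frame'' with four sides and four corners, and the $l$-frame surrounds the $(l+1)$-frame.

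The hard part will be constructing the decomposition $N_l = N_l^1 \cup \cdots \cup N_l^{k_l}$. Writing $(j,k)$ for $j\vec{v}_1 + k\vec{v}_2$, I classify each vertex by counting its next-level neighbors among the six lattice directions. The decisive asymmetry of the triangular lattice is that the diagonal direction $\vec{v}_1 - \vec{v}_2$ points outward at the corners $(l,l)$ and $(m+1-l, n+1-l)$, so these two corners have zero next-level neighbors, while the opposite two corners $(m+1-l, l)$ and $(l, n+1-l)$ have exactly one. The same asymmetry forces the two side vertices adjacent to $(l,l)$, namely $(l+1, l)$ and $(l, l+1)$, to share the single next-level neighbor $(l+1, l+1)$; similarly for the two vertices adjacent to $(m+1-l, n+1-l)$. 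I collect these (at most) eight vertices into $N_l^1$. Every other vertex of $N_l$ lies on one of the four sides and has exactly two next-level neighbors; I then define $N_l^2, N_l^3, \ldots$ by sweeping along each side away from the two ``zero next-level'' corners $(l,l)$ and $(m+1-l, n+1-l)$, verifying inductively that each newly added vertex contributes exactly one next-level neighbor not yet absorbed into $\mathcal{N}(N_l^1 \cup \cdots \cup N_l^{k-1})$. For small $m, n, l$ where some of the eight special vertices coincide, the same scheme applies with fewer cases, which completes the proof.
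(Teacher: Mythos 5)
Your proof is correct and follows essentially the same route as the paper: the counting bound is identical, and your decomposition of the frame $N_l$ — the four corners plus the four side vertices adjacent to the two diagonal-outward corners forming $N_l^1$, followed by a sweep along the sides for $N_l^p$, $p\geq 2$ — is exactly the paper's decomposition, merely described by neighbor-counting rather than by explicit index formulas. You in fact supply details the paper omits (the level formula and the verification of Assumption~\ref{foliation condition}(i)), and your brief dismissal of the degenerate small-$m,n,l$ cases is no less careful than the paper's own treatment.
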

\begin{proof}
For $1\leq l\leq \max\limits_{x\in G}d(x,\partial G)$,
the set $N_l$ in $T_{m,n}$ is
\begin{align*}
N_l=&\{l\vec{v}_1+(l+k)\vec{v}_2 \mid 0\leq k \leq n-2l+1 \}:=\{x_0,x_1,\cdots,x_{n-2l+1}\}\\
&\cup\{(m-l+1)\vec{v}_1
+(l+k)\vec{v}_2\mid 0\leq k \leq n-2l+1 \}:=\{y_0,y_1,\cdots,y_{n-2l+1}\}\\
&\cup \{(l+j)\vec{v}_1+l\vec{v}_2 \mid 1\leq j \leq m-2l \}:=\{\gamma_1,\cdots,\gamma_{m-2l}\}\\
&\cup \{(l+j)\vec{v}_1+(n-l+1)\vec{v}_2\mid 1\leq j \leq m-2l \}:=\{\tau_1,\cdots,\tau_{m-2l}\}.
\end{align*}
For each vertex respectively in the above four subsets of $N_l$, there exists a boundary vertex closest to it in
$(\partial G)_L, (\partial G)_R, (\partial G)_B$ and $(\partial G)_T$ respectively.
The above $x_r, y_r,\gamma_r,\tau_r$ when $r\in\mathbb{N}$ are numbers of the vertices in the sets.

Let integer $p\geq 2$.
The decomposition of $N_l$ is
\begin{align*}
N_l^1=& \{x_0,x_1,x_{n-2l+1},y_0,y_{n-2l},y_{n-2l+1},\gamma_1,\tau_{m-2l}\},\\
N_l^p=& \{x_p,x_{n-2l+2-p},y_{p-1},y_{n-2l-p+1} \mid p \leq n-2l+2-p \}\\
&\cup\{\gamma_p,\gamma_{m-2l+2-p}, \tau_{p-1}, \tau_{m-2l-p+1} \mid p\leq m-2l+2-p\},\quad p\geq 2.
\end{align*}

To show the relation between the boundary and interior vertices, simply notice that $|\partial G|=2(m+n)$ and $|G|=mn$, thus $\frac{|\partial G|(|\partial G|+1)}{2} = (m+n)(2m+2n+1) \geq mn = |G|$.
\end{proof}

\begin{exam}
\textbf{The graph $H_{m,n}$, where $m,n\geq 1$ and $m$ is odd.} 

Let $\omega=\frac{1}{2}+\frac{\sqrt3}{2}{\rm i}$, then $\omega^6=1$, where $6$ is the power of $\omega$.
Take $\vec{v}_1=3\omega^6=3+0i$, $\vec{v}_2=0+\sqrt{3}{\rm i}$.
Let $D_0\subset\mathbb{R}^2$ be the hexagon domain with the set of 6 vertices $G_0:= \{\omega^0, \omega^1, \omega^2, \omega^3, \omega^4, \omega^5\}$.
Define
\begin{align*}
\mathcal{D}_{m,n}:=\bigcup_{\substack{0\leq j\leq \frac{m-1}{2}\\ 0\leq k\leq n-1}}(D_0+j\vec{v}_1+k\vec{v}_2),
\end{align*}
where $j,k\in \mathbb{N}$.
The set of interior vertices is
\begin{align*}
G:= \bigcup_{\substack{0\leq j\leq \frac{m-1}{2}\\ 0\leq k\leq n-1}}
(G_0+j\vec{v}_1+k\vec{v}_2),
\end{align*}
The set of boundary vertices is $\partial G := (\partial G)_L \cup (\partial G)_R \cup (\partial G)_B \cup (\partial G)_T$,  where
\begin{align*}
(\partial G)_L:=& \left\{k\vec{v}_2+2\omega^3\mid 0\leq k\leq n-1  \right\},\quad
(\partial G)_R:= \left\{k\vec{v}_2+\frac{3(m-1)}{2}+2\mid 0\leq k\leq n-1  \right\},\\
(\partial G)_B:=& \left\{j\vec{v}_1+2\omega^4, j\vec{v}_1+2\omega^4+2 \mid 0\leq j\leq \frac{m-1}{2} \right\},\\
(\partial G)_T:=& \left\{
j\vec{v}_1+(n-1)\vec{v}_2+2\omega^2, j\vec{v}_1+(n-1)\vec{v}_2+2\omega^2+2 \mid 0\leq j\leq \frac{m-1}{2}  \right\}.
\end{align*}
The edge set $\mathcal{E}$ is defined by assigning an edge to any pair of vertices in $\bar{G}$ that is of Euclidean distance $1$, where any two boundary vertices are not connected.
This graph is denoted by $H_{m,n}$, where $m,n$ indicate the number of hexagons on the border along the directions
$\vec{v}_1, \vec{v}_2$, respectively.
As an example, $H_{3,4}$ and its vertices decomposition are shown in Fig. \ref{figure_hexagonal_para_addline}.
\end{exam}

\begin{figure}[htbp]
\centering
\includegraphics[scale=0.7]{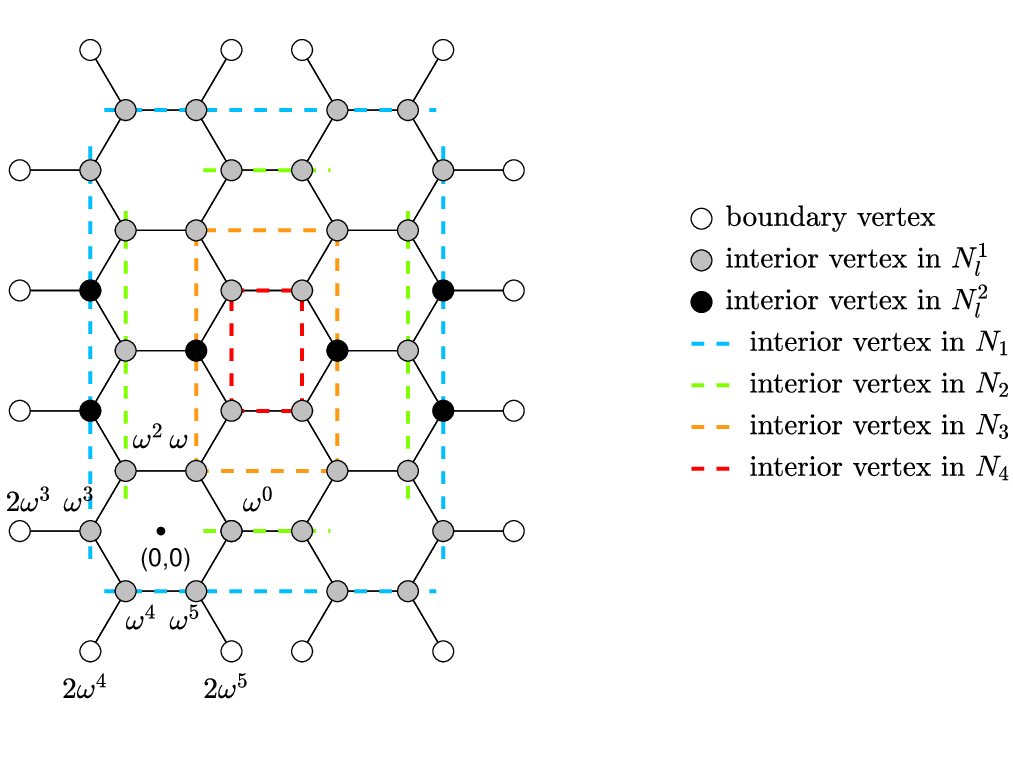}
\caption{\small\small{The graph $H_{3,4}$. }}
\label{figure_hexagonal_para_addline}
\end{figure}

\begin{lemm}
For any integers $m,n\geq 1$ and $m$ is odd, the graph $H_{m,n}$ satisfies Assumption~\ref{foliation condition} and $\frac{|\partial G|(|\partial G|+1)}{2}\geq |G|$.
\end{lemm}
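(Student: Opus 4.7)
The plan is to verify for $H_{m,n}$ (with $m\geq 1$ odd and $n\geq 1$) the two parts of Assumption~\ref{foliation condition} together with the inequality $\frac{|\partial G|(|\partial G|+1)}{2}\geq |G|$, following the template of the proofs just given for $R_{m,n}$ and $T_{m,n}$. The counting and part (i) are direct, whereas part (ii) requires a more elaborate description of the hexagonal level sets and is where the main work lies.

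For the counting, I would compute $|\partial G|=2n+2(m+1)=2(m+n+1)$ directly from the four definitions of the boundary components. To obtain $|G|$, observe that vertically adjacent translates of $G_0$ share exactly two vertices (via the identities $\omega^5+\vec{v}_2=\omega^1$ and $\omega^4+\vec{v}_2=\omega^2$), while translates separated by $\vec{v}_1$ share none since $|\vec{v}_1|=3$ exceeds the hexagon diameter. Each of the $\tfrac{m+1}{2}$ columns of $n$ hexagons therefore contributes $6n-2(n-1)=4n+2$ distinct vertices, so $|G|=(m+1)(2n+1)$. The required inequality $(m+n+1)(2m+2n+3)\geq (m+1)(2n+1)$ is then immediate for $m,n\geq 1$ by expanding and comparing term by term. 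For part (i), each boundary vertex is a unit-length translate of an interior corner of a border hexagon, and a coordinate check using the explicit formulas for $(\partial G)_L,(\partial G)_R,(\partial G)_B,(\partial G)_T$ confirms that the unit Euclidean ball around it contains exactly one interior vertex.

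The main obstacle is part (ii). I would first describe $N_l$ geometrically as a hexagonal ring surrounding the previously peeled layers, and parameterize its vertices according to the four ``sides'' naturally associated with $(\partial G)_L,(\partial G)_R,(\partial G)_B,(\partial G)_T$ together with the transitional ``corner'' vertices where adjacent sides meet. The decomposition is then built inductively: take $N_l^1$ to consist of the extremal corner vertices of the ring, which by the local hexagonal geometry have at most one next-level neighbor, and for $p\geq 2$ let $N_l^p$ contain those ring vertices whose only next-level neighbor not already in $\bigcup_{r=1}^{p-1}\mathcal{N}(N_l^r)$ is unique. The decisive geometric observation is that consecutive ring vertices along a hexagonal layer typically share next-level neighbors, so that after the earlier batches are processed every vertex newly admitted into $N_l^p$ has at most one remaining uncovered next-level neighbor.

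The most delicate step will be verifying that this procedure exhausts $N_l$ in finitely many stages and behaves correctly near the four corner regions, where the oddness of $m$ and the precise placement of $(\partial G)_B,(\partial G)_T$ interact; this requires a careful case analysis based on the position of each vertex within the ring and on the parity of $l$ relative to the ring dimensions, analogous to but strictly more involved than the decomposition displayed for $T_{m,n}$.
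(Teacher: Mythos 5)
Your counting is complete and correct: $|\partial G|=2(m+n+1)$, $|G|=(m+1)(2n+1)$, and the inequality follows by expansion, exactly as in the paper; part (i) is also unproblematic. The genuine gap is in part (ii), which is the substance of the lemma: you describe the decomposition $N_l=\bigcup_r N_l^r$ only as a plan and explicitly defer "the most delicate step" of verifying that the procedure exhausts $N_l$. Keep in mind that the sets $N_l^1, N_l^k$ are not chosen by you --- they are prescribed by the definitions in Assumption~\ref{foliation condition} --- so the \emph{only} content of part (ii) is the verification that these prescribed sets cover $N_l$, and that is precisely the step you do not carry out.

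Two concrete points indicate the deferred analysis would not go through as you sketch it. First, the level sets of the hexagonal tiling are periodic in $l$ with period $4$, not $2$: the paper's proof splits into the four cases $l \bmod 4 \in \{1,2,3,0\}$, and in two of them ($l\equiv 0,2 \pmod 4$) every vertex of $N_l$ has at most one next-level neighbor so that $N_l=N_l^1$ trivially, while in the other two the left and right sides of the ring contain vertices with two next-level neighbors that must be peeled off symmetrically from both ends. Your appeal to "the parity of $l$ relative to the ring dimensions" suggests a mod-$2$ analysis and misses this structure. Second, your identification of $N_l^1$ with "the extremal corner vertices of the ring" does not match what the definition forces: for $l\equiv 1,3\pmod 4$, the set $N_l^1$ consists of the \emph{entire} bottom and top sides of the ring together with the endpoints of the two vertical sides, and the later batches $N_l^{p+1}$ are quadruples of the form $\{x_p, x_{n-l-p}, y_p, y_{n-l-p}\}$ drawn from the interiors of the vertical sides. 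Without explicit coordinates for $N_l$ in each residue class and a check that each such quadruple has at most one next-level neighbor not already covered by $\bigcup_{r\le p}\mathcal{N}(N_l^r)$, Assumption~\ref{foliation condition}(ii) is asserted rather than proved.
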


\begin{proof}
Let $1 \leq l\leq \max\limits_{x\in G}d(x,\partial G)-1$.

For $l~mod~4=1$, the set $N_l$ in $H_{m,n}$ is
\begin{align*}
N_l=&\left\{\frac{l-1}{4}\vec{v}_1 + (\frac{l-1}{2}+k)\vec{v}_2+\omega^3 \mid 0\leq k\leq n-l\right\}:=\{x_0,x_1,\cdots,x_{n-l}\}\\
&\cup \left\{\frac{3(m-1)}{2}-\frac{l-1}{4}\vec{v}_1+ (\frac{l-1}{2}+k)\vec{v}_2+\omega^0 \mid 0\leq k\leq n-l \right\}:=\{y_0,y_1,\cdots,y_{n-l}\}\\
&\cup \left \{(\frac{l-1}{4}+j)\vec{v}_1 + \frac{l-1}{2}\vec{v}_2+\omega^4, (\frac{l-1}{4}+j)\vec{v}_1 + \frac{l-1}{2}\vec{v}_2+\omega^5 \mid 0\leq j\leq \frac{m-l}{2} \right\} \\
&\cup \left\{(\frac{l-1}{4}+j)\vec{v}_1 +(n-\frac{l+1}{2})\vec{v}_2+\omega^2, (\frac{l-1}{4}+j)\vec{v}_1 +(n-\frac{l+1}{2})\vec{v}_2+\omega^1 \mid 0\leq j\leq \frac{m-l}{2}  \right\}.
\end{align*}
For each vertex respectively in the above four subsets of $N_l$, there exists a boundary vertex closest to it in $(\partial G)_L, (\partial G)_R, (\partial G)_B, (\partial G)_T$ respectively.
The decomposition of $N_l$ is
\begin{align*}
    N_l^1=& N_l\setminus \{x_1,x_2,\cdots,x_{n-l-1},y_1,y_2,\cdots,y_{n-l-1}\},\\
    N_l^{p+1}=& \{x_p,x_{n-l-p},y_p,y_{n-l-p} \mid p\leq m-1-p
\},\quad p\in \mathbb{N}_+.
\end{align*}

For $l~mod~4=2$, the set $N_l$ in $H_{m,n}$ is
\begin{align*}
N_l=&\left\{\frac{l-2}{4}\vec{v}_1 + (\frac{l-2}{2}+k)\vec{v}_2+\omega^2 \mid 0\leq k\leq n-l \right\}\\
&\cup \left\{\frac{3(m-1)}{2}-\frac{l-2}{4}\vec{v}_1+ (\frac{l-2}{2}+k)\vec{v}_2+\omega \mid 0\leq k\leq n-l \right\}\\
&\cup \left\{(\frac{l-2}{4}+j)\vec{v}_1 + \frac{l-2}{2}\vec{v}_2+\omega^0, (\frac{l-2}{4}+j)\vec{v}_1 + \frac{l-2}{2}\vec{v}_2+2\omega^0 \mid 0\leq j\leq \frac{m-l-1}{2} \right\}\\
&\cup \left\{(\frac{l-2}{4}+j)\vec{v}_1 +(n-\frac{l}{2})\vec{v}_2+\omega^0, (\frac{l-2}{4}+j)\vec{v}_1 +(n-\frac{l}{2})\vec{v}_2+2\omega^0 \mid 0\leq j\leq \frac{m-l-1}{2}  \right\}.
\end{align*}
For each vertex respectively in the above four subsets of $N_l$, there exists a boundary vertex closest to it in $(\partial G)_L, (\partial G)_R, (\partial G)_B, (\partial G)_T$ respectively.
The decomposition of $N_l$ is $N_l=N_l^1$.

For $l~mod~4=3$, the set $N_l$ in $H_{m,n}$ is
\begin{align*}
N_l=&\left\{\frac{l-3}{4}\vec{v}_1 + (\frac{l-3}{2}+k)\vec{v}_2+\omega \mid 1\leq k\leq n-l \right\}:=\{\gamma_1,\cdots,\gamma_{n-l}\}\\
&\cup  \left\{\frac{3(m-1)}{2}-\frac{l-3}{4}\vec{v}_1+ (\frac{l-3}{2}+k)\vec{v}_2+\omega^2 \mid 1\leq k\leq n-l \right \}:=\{\tau_1,\cdots,\tau_{n-l}\} \\
&\cup \left \{(\frac{l-3}{4}+j)\vec{v}_1 + \frac{l-3}{2}\vec{v}_2+\omega, (\frac{l-3}{4}+j)\vec{v}_1 + \frac{l-3}{2}\vec{v}_2+\omega+2 \mid 0\leq j\leq \frac{m-l}{2} \right\}\\
&\cup \left\{(\frac{l-3}{4}+j)\vec{v}_1 +(n-\frac{l-1}{2})\vec{v}_2+\omega^5, (\frac{l-3}{4}+j)\vec{v}_1 +(n-\frac{l-1}{2})\vec{v}_2+\omega^5+2
 \mid 0\leq j\leq \frac{m-l}{2}  \right\}.
\end{align*}
For each vertex respectively in the above four subsets of $N_l$, there exists a boundary vertex closest to it in $(\partial G)_L, (\partial G)_R, (\partial G)_B, (\partial G)_T$, respectively.
The decomposition of $N_l$ is
\begin{align*}
    N_l^1=& N_l\setminus \{\gamma_1,\cdots,\gamma_{n-l},\tau_1,\cdots,\tau_{n-l}\},\\
    N_l^{p+1}=& \{\gamma_p,\gamma_{n-l-p},\tau_p,\tau_{n-l-p} \mid p\leq m-1-p
\},\quad p\in \mathbb{N}_+.
\end{align*}

For $l~mod~4=0$, the set $N_l$ in $H_{m,n}$ is
\begin{align*}
N_l=&\left\{\frac{l-4}{4}\vec{v}_1 + (\frac{l-4}{2}+k)\vec{v}_2+2\omega \mid 1\leq k\leq n-l\right\}\\
&\cup \left\{\frac{3(m-1)}{2}-\frac{l-4}{4}\vec{v}_1+ (\frac{l-4}{2}+k)\vec{v}_2+2\omega^2 \mid 1\leq k\leq n-l \right\}\\
&\cup \left\{(\frac{l-4}{4}+j)\vec{v}_1 + \frac{l-4}{2}\vec{v}_2+2\omega, (\frac{l-4}{4}+j)\vec{v}_1 + \frac{l-4}{2}\vec{v}_2+2\omega+1 \mid 0\leq j\leq \frac{m-l+1}{2} \right\}\\
&\cup \left\{(\frac{l-4}{4}+j)\vec{v}_1 +(n-\frac{l}{2}+1)\vec{v}_2+2\omega^5, (\frac{l-4}{4}+j)\vec{v}_1 +(n-\frac{l}{2}+1)\vec{v}_2+2\omega^5+1 \right.\\
&\left.
\mid 0\leq j\leq \frac{m-l+1}{2}  \right\}.
\end{align*}
For each vertex respectively in the above four subsets of $N_l$, there exists a boundary vertex closest to it in $(\partial G)_L, (\partial G)_R, (\partial G)_B, (\partial G)_T$ respectively.
The decomposition of $N_l$ is $N_l=N_l^1$.

If $l=\max\limits_{x\in G}d(x,\partial G)$, take $N_l=N_l^1$.

To show the relation between the boundary and interior vertices, simply notice that $|\partial G|=2(m+n+1)$ and $|G|=(m+1)(2n+1)$, thus $\frac{|\partial G|(|\partial G|+1)}{2} =(m+n+1)(2m+2n+3)\geq (m+1)(2n+1) = |G|$.
\end{proof}

\section{Proof of Theorem~\ref{Neubspecd_to_NDmap}}\label{Calculate the ND map from the Neumann boundary spectral data}

We prove Theorem \ref{Neubspecd_to_NDmap}, which gives an explicit formula to represent the ND map in terms of the Neumann boundary spectral data.

\medskip
The following Green's formula is proved in \cite[Lemma 2.1]{MR4620352}.
\begin{lemm}\label{Green's formula on graph}
\rm(Green's formula).
Let $u_1,~u_2:~\bar{G}\rightarrow\mathbb{R}$ be two real-valued functions on $\bar{G}$. Then 
\begin{align*}
(u_1,\Delta_G u_2)_G-(u_2,\Delta_G u_1)_G = (u_2,\partial_{\nu}u_1)_{\partial G}-(u_1,\partial_{\nu}u_2)_{\partial G}.
\end{align*}
\end{lemm}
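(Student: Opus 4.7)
The plan is to reduce the theorem to a discrete Duhamel computation for the Fourier coefficients of $u^f$ in the Neumann eigenbasis, and then to use Assumption~\ref{foliation condition}(i) to translate interior values into boundary values through the Neumann boundary condition.

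First, for $x\in G$, I expand $u^f(t,x)=\sum_{j=1}^{|G|}a_j(t)\phi_j(x)$ with $a_j(t):=(u^f(t,\cdot),\phi_j)_G$. Testing the wave equation $D_{tt}u^f-\Delta_G u^f=0$ against $\phi_j$ on $G$ and applying Lemma~\ref{Green's formula on graph} together with $\Delta_G\phi_j=-\lambda_j\phi_j$ and the homogeneous Neumann condition $\partial_\nu\phi_j|_{\partial G}=0$ yields the scalar three-term recurrence
\[
D_{tt}a_j(t)+\lambda_j a_j(t)=-(f(t,\cdot),\phi_j)_{\partial G}.
\]
The initial conditions $u^f(0,\cdot)=0$ on $\bar G$ and $D_tu^f(0,\cdot)=0$ on $G$ translate into $a_j(0)=a_j(1)=0$, so the recurrence is driven from rest. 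I would verify by induction on $t$ that its discrete Duhamel solution is $a_j(t)=\sum_{k=1}^{t}c_k^{(j)}(f(t+1-k,\cdot),\phi_j)_{\partial G}$, where the Chebyshev-type coefficients $c_k^{(j)}$ are defined by $c_1^{(j)}=0$, $c_2^{(j)}=-1$, and $c_k^{(j)}=(2-\lambda_j)c_{k-1}^{(j)}-c_{k-2}^{(j)}$ for $k\geq 3$, matching the coefficients $c_k$ in the statement.

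Next, to produce boundary values from these interior values, I invoke Assumption~\ref{foliation condition}(i): each $z\in\partial G$ has exactly one interior neighbor $x\in G$. The Neumann condition $\partial_\nu u^f(t,z)=f(t,z)$ then collapses to the single-term identity
\[
u^f(t,z)=u^f(t,x)-\frac{\mu_z f(t,z)}{w(x,z)}.
\]
Applying the same one-term reduction to each Neumann eigenfunction $\phi_j$, whose Neumann data vanishes, gives the key equality $\phi_j(z)=\phi_j(x)$. Substituting the eigenexpansion of $u^f(t,x)$ into the displayed identity and replacing $\phi_j(x)$ by $\phi_j(z)$ produces exactly~\eqref{formula_of_Neubsd_to_express_NDmap}.

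The argument is essentially a direct computation, so there is no serious obstacle; I would mostly need to be careful with bookkeeping in the induction that verifies the Duhamel formula. The mildly delicate conceptual point is that Assumption~\ref{foliation condition}(i) plays a double role here: it both simplifies the Neumann boundary condition into a one-term identity and forces $\phi_j(z)=\phi_j(x)$. This double role is what allows the right-hand side of~\eqref{formula_of_Neubsd_to_express_NDmap} to involve only the known boundary quantities $\phi_j|_{\partial G}$, $\mu|_{\partial G}$, $w$, and the eigenvalues $\lambda_j$, explaining why only part~(i) of Assumption~\ref{foliation condition} is required for this theorem.
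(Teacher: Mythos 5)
Your proposal does not prove the statement in question. The statement is the discrete Green's formula
\begin{align*}
(u_1,\Delta_G u_2)_G-(u_2,\Delta_G u_1)_G = (u_2,\partial_{\nu}u_1)_{\partial G}-(u_1,\partial_{\nu}u_2)_{\partial G},
\end{align*}
a purely algebraic summation-by-parts identity for arbitrary functions $u_1,u_2$ on $\bar{G}$. What you have written is instead an outline of the proof of Theorem~\ref{Neubspecd_to_NDmap} (the reconstruction of the Neumann-to-Dirichlet map from the boundary spectral data): the eigenfunction expansion of $u^f$, the Duhamel recurrence for $a_j(t)$, and the use of Assumption~\ref{foliation condition}(i) to pass from $u^f(t,x)$ to $u^f(t,z)$. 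Worse, your argument explicitly \emph{invokes} Lemma~\ref{Green's formula on graph} in its first step (to move $\Delta_G$ onto $\phi_j$ and pick up the boundary term $-(f(t),\phi_j)_{\partial G}$), so read as a proof of that lemma it is circular. Note also that Green's formula requires no hypothesis on the graph at all, whereas your argument leans on Assumption~\ref{foliation condition}(i); that mismatch alone signals that the wrong target is being proved.

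For the record, the paper does not prove this lemma itself but cites \cite[Lemma 2.1]{MR4620352}. A direct proof is a two-line computation: expand
\begin{align*}
(u_1,\Delta_G u_2)_G=\sum_{x\in G}\sum_{\substack{y\in\bar{G}\\ y\sim x}}w(x,y)\,u_1(x)\bigl(u_2(y)-u_2(x)\bigr),
\end{align*}
subtract the symmetric expression with $u_1$ and $u_2$ swapped, and use $w(x,y)=w(y,x)$ to cancel all contributions from edges with both endpoints in $G$; the surviving terms are exactly the edges joining $G$ to $\partial G$, and regrouping them by their boundary endpoint $z$ yields $(u_2,\partial_\nu u_1)_{\partial G}-(u_1,\partial_\nu u_2)_{\partial G}$ via the definition \eqref{def_Neumann boundary value}. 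You should supply that computation (or the citation) here, and reserve your Duhamel/eigenexpansion argument for Theorem~\ref{Neubspecd_to_NDmap}, where it is essentially the paper's own proof.
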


For each $j=1,2,\dots, |\partial G|$, the scalar orthogonal projection of the wave time solution $u^f(t)$ onto the Neumann eigenfunction $\phi_j$ is denoted by 
\begin{align*}
a_j(t):=(u^f(t),\phi_j)_G = \sum_{x\in G}\mu_x u^f(t,x) \phi_j(x), \quad t=0,1,2,\cdots.
\end{align*}
These scalar orthogonal projections can be explicitly computed from the Neumann boundary spectral data as follows.

\begin{lemm}
For $t=1,2,3,\dots$, we have
\begin{align}\label{a_j(t)}
a_j(t)=c_t(f(1),\phi_j)_{\partial G}+c_{t-1}(f(2),\phi_j)_{\partial G}+\cdots+c_{2}(f(t-1),\phi_j)_{\partial G}+c_{1}(f(t),\phi_j)_{\partial G},
\end{align}
where the constants $c_k$ are defined recursively by $c_1=0$, $c_2=-1$, and $c_k=(2-\lambda_j)c_{k-1}-c_{k-2}$ for $k\geq3$. 
\end{lemm}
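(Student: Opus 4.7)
The plan is to derive a three-term linear recurrence for $a_j(t)$ from the graph wave equation, identify the initial data, and then verify that the stated closed-form expression is the unique solution.

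First, I pair the wave equation $D_{tt}u^f(t,\cdot) - \Delta_G u^f(t,\cdot) = 0$ against $\phi_j$ in the $l^2(G)$ inner product. The temporal term produces the discrete second difference $a_j(t+1) - 2 a_j(t) + a_j(t-1)$. For the spatial term, I apply Green's formula (Lemma~\ref{Green's formula on graph}) with $u_1 = \phi_j$ and $u_2 = u^f(t)$, then invoke the eigenvalue equation $-\Delta_G \phi_j = \lambda_j \phi_j$, the homogeneous Neumann condition $\partial_\nu \phi_j|_{\partial G} = 0$ on the eigenfunction, and the prescribed Neumann data $\partial_\nu u^f(t)|_{\partial G} = f(t,\cdot)$. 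The only surviving boundary contribution is $-(\phi_j, f(t))_{\partial G}$, yielding
\[
(\phi_j,\Delta_G u^f(t))_G = -\lambda_j\, a_j(t) - (f(t),\phi_j)_{\partial G}.
\]
Equating the two sides and rearranging gives the recurrence
\[
a_j(t+1) = (2-\lambda_j)\, a_j(t) - a_j(t-1) - (f(t),\phi_j)_{\partial G}, \qquad t = 1, 2, \dots.
\]
The initial conditions $u^f(0,\cdot) \equiv 0$ on $\bar{G}$ and $D_t u^f(0,x) = 0$ on $G$ force $u^f(1,x) = 0$ for $x \in G$, hence $a_j(0) = a_j(1) = 0$.

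Finally, I verify by induction on $t$ that $a_j(t) = \sum_{k=1}^{t} c_{t-k+1}(f(k),\phi_j)_{\partial G}$ solves this initial value problem. The base case $t = 1$ is immediate since $c_1 = 0$ matches $a_j(1) = 0$. For the inductive step, assume the formula holds at $t-1$ and $t$; forming $(2-\lambda_j)\, a_j(t) - a_j(t-1)$ and reindexing, the coefficient of each $(f(k),\phi_j)_{\partial G}$ with $1 \leq k \leq t-1$ becomes $(2-\lambda_j) c_{t-k+1} - c_{t-k} = c_{t-k+2}$ by the defining recursion, while the inhomogeneous term $-(f(t),\phi_j)_{\partial G}$ is supplied by $c_2 = -1$ for $k = t$, and the absence of an $f(t+1)$ contribution is enforced by $c_1 = 0$; together these reproduce the claimed expression for $a_j(t+1)$. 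The only care needed is to track signs consistently through Green's formula under the inward-normal convention of $\partial_\nu$; after the signs are settled, the remainder is a routine telescoping argument for a linear recurrence with Chebyshev-type coefficients, so no serious obstacle is expected.
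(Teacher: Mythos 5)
Your proposal is correct and follows essentially the same route as the paper: pairing the wave equation with $\phi_j$, using Green's formula together with $\partial_\nu\phi_j=0$ to obtain the recurrence $a_j(t+1)=(2-\lambda_j)a_j(t)-a_j(t-1)-(f(t),\phi_j)_{\partial G}$ with $a_j(0)=a_j(1)=0$, and then verifying the closed form by induction. The sign bookkeeping through Green's formula and the reindexing in the inductive step both check out against the paper's argument.
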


\begin{proof}
Apply $D_{tt}$ to $a_j$ to get
\begin{align*}
D_{tt}a_j(t) & = ( D_{tt} u^f(t), \phi_j)_G \\
 & = ( \Delta_G  u^f(t), \phi_j)_G \\
 & = (u^f(t), \Delta_G \phi_j)_G - (f(t), \phi_j)_{\partial G} \\
 & =  -(u^f(t),\lambda_j \phi_j)_G - (f(t), \phi_j)_{\partial G} \\
 & = -\lambda_j a_j(t) - (f(t), \phi_j)_{\partial G},
\end{align*}
where the second equality follows from the wave equation, the third equality  is derived from  the Green's formula in Lemma \ref{Green's formula on graph} with $\partial_\nu\phi_j=0$, and the  final  equality from the definition of $a_j(t)$. 
This is a finite difference equation for $a_j(t)$, which, using the definition of $D_{tt}$, 
can be defined by the inductive relation
\begin{align*}
a_j(t+1) = (2-\lambda_j) a_j(t)-a_j(t-1)-(f(t),\phi_j)_{\partial G}.
\end{align*}
The initial conditions are  given by
\begin{align*}
a_j(0) = (u^f(0),\phi_j)_G = 0, \quad a_j(1) = (u^f(1),\phi_j)_G = 0,
\end{align*}
which are derived from the initial conditions of the function $u^f$. 
We now prove the validity of  formula~\eqref{a_j(t)} by induction.

The base case $t=1$ is true since $c_1=0$. For the inductive step,  suppose that the formula~\eqref{a_j(t)} has been proved for all positive integers less than or equal to $t$, then
\begin{align*}
a_j(t-1)=& c_{t-1}(f(1),\phi_j)_{\partial G}+ \cdots +(\lambda_j-2)(f(t-3),\phi_j)_{\partial G}
-(f(t-2),\phi_j)_{\partial G},\\
a_j(t)=& c_t(f(1),\phi_j)_{\partial G}+\cdots + (\lambda_j-2)(f(t-2),\phi_j)_{\partial G}-(f(t-1),\phi_j)_{\partial G}.
\end{align*}
Insert these representations into the inductive relation to get
\begin{align*}
    a_j(t+1) & = (2-\lambda_j) a_j(t)-a_j(t-1)-(f(t),\phi_j)_{\partial G} \\
    & = ((2-\lambda_j)c_t-c_{t-1})(f(1),\phi_j)_{\partial G}+\cdots +(-(2-\lambda_j)^2+1)(f(t-2),\phi_j)_{\partial G}\\
& \quad -(2-\lambda_j)(f(t-1),\phi_j)_{\partial G}-(f(t),\phi_j)_{\partial G}.
\end{align*}
This completes the proof.
\end{proof}

Now, we prove Theorem~\ref{Neubspecd_to_NDmap}.

\begin{proof}[Proof of Theorem~\ref{Neubspecd_to_NDmap}]
As $\{\phi_j(x),x\in G\}^{|G|}_{j=1}$ forms an orthonormal basis of $l^2(G)$, we can write
\begin{align*}
u^f(t,x) = \sum^{|G|}_{j=1} (u^f(t),\phi_j)_G\phi_j(x) = \sum^{|G|}_{j=1} a_j(t)\phi_j(x),\quad x\in G.
\end{align*}
As each boundary vertex $z\in\partial G$ is connected to a unique interior vertex $x\in G$,
 using  the definition of $\partial_\nu u(z)$ in~\eqref{def_Neumann boundary value}, we get
\begin{align*}
f(t,z) = \partial_\nu u^f(t,z) = \frac{1}{\mu_z} w(x,z) (u^f(t,x) - u^f(t,z)).
\end{align*}
Solving for the Dirichlet data of the wave solution from this relation, we can obtain
\begin{align*}
\Lambda_\mu f(t,z) = u^f(t,z)=u^f(t,x)-\frac{\mu_zf(t,z)}{w(x,z)}= \sum^{|G|}_{j=1} a_j(t)\phi_j(z)-\frac{\mu_zf(t,z)}{w(x,z)},\quad x\sim z, \qquad t\geq 1.
\end{align*}
Theorem \ref{Neubspecd_to_NDmap} is proved by substituting the expression for $a_j(t)$ into the summation given in equation \eqref{a_j(t)}.
\end{proof}

\section{The reconstruction procedure}\label{Prove the reconstruction procedure}

In this section, we introduce a discrete version of the boundary control method. Additionally,  using this method, we will demonstrate the reconstruction procedure for the interior vertex weight.

\subsection{Calculating Inner Products of Waves.}
When $x$ serves as the spatial component of the function $u(t,x)$, we  simply denote $u(t,x)$ as $u(t)$.
For a  function $u(t,x)$, we introduce  the time reversal operator:
\begin{align*}
&\mathscr{R}: l^2\left(\{1,\cdots,T-1\}\times{\partial G}\right)\longmapsto l^2(\{1,\cdots,T-1\}\times{\partial G}),\\
&\mathscr{R} u(t):=u(T-t),\qquad t\in\{1,\cdots,T-1\}.
\end{align*}
Similarly, we introduce another operator:
\begin{align*}
&\mathscr{J}: l^2(\{0,1,\cdots,2T\}\times{\partial G})\longmapsto l^2(\{1,\cdots,T-1\}\times{\partial G}),\\
&\mathscr{J} u(t):=\sum\limits_{j=0}^{T-t-1}u(t+1+2j),\qquad t\in\{1,\cdots,T-1\}.
\end{align*}

Define  $P_T: l^2(\{0,\cdots,2T\}\times{\partial G})\mapsto  l^2(\{1,\cdots,T-1\}\times{\partial G})$ as  the truncation operator, while the adjoint operator $P^*_{T}: l^2(\{1,\cdots,T-1\}\times{\partial G})\mapsto  l^2(\{0,\cdots,2T\}\times{\partial G})$ is an extension operator. The  values of  $P^*_{T}$ on $\{0,T,T+1,\cdots,2T\}\times\partial G$ are extended by zero.
Define  the Dirichlet trace operator by $\tau_Du(t)=u(t)|_{\partial G}$ and the Neumann trace operator by $\tau_Nu(t)=\partial_\nu u(t)|_{\partial G}$, respectively.

For functions $v_1, v_2: \{1,\cdots,T-1\}\times \partial G\longrightarrow \mathbb{R}$, define the inner product on the boundary $l^2$-space $l^2(\{1,\cdots,T-1\}\times\partial G)$ as follows:
\begin{align*}
(v_1,v_2)_{\{1,\cdots,T-1\}\times\partial G}:= \sum^{T-1}_{t=1} (v_1(t), v_2(t))_{\partial G} = \sum\limits_{z\in \partial G}\mu_z\sum\limits_{t=1}^{T-1}v_1(t,z)v_2(t,z).
\end{align*}
Here,  for $z\in \partial G$, the values  $\mu_z$ are known.

The following is a discrete counterpart of the generalized Blagovescenskii identity ({\cite{MR4369044}). The original Blagovescenskii identity is proved in (\cite{Blagoveshchenskii1967}).

\begin{lemm}\label{(u^f(T),v(T))}
Let $u^f$ be the solution of \eqref{eq:ibvp1}, and let  $v\in l^2(\{0,\cdots,2T\}\times\bar{G})$ be the solution of the equation
\begin{align*}
D_{tt} v(t,x) - \Delta_G v(t,x) = 0,\quad (t,x)\in\{1,2,\cdots, 2T-1\}\times G,
\end{align*}
then
\begin{align*}
(u^f(T),v(T))_G=(P_{T}(\Lambda_\mu f), \mathscr J\tau_Nv)_{\{1,\cdots,T-1\}\times{\partial G}}-(P_{T}f, \mathscr J \tau_Dv)_{\{1,\cdots,T-1\}\times\partial G}.
\end{align*}
\end{lemm}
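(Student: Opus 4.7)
The plan is to adapt the classical Blagovescenskii argument to the graph setting by treating the inner product as a function of two discrete time variables and exploiting a telescoping structure. Define $W(s,t) := (u^f(s), v(t))_G$ for $s, t \in \{0, \ldots, 2T\}$. The initial conditions $u^f(0,\cdot)|_{\bar G} = 0$ and $D_t u^f(0,\cdot)|_G = 0$ immediately give $W(0, t) = W(1, t) = 0$ for every $t$, so only the ``diagonal'' value $W(T, T)$ is nontrivial at the endpoint, and our goal is to represent $W(T, T)$ through boundary data alone.

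First, I would apply Green's formula (Lemma~\ref{Green's formula on graph}) to the pair $u^f(s), v(t)$, and substitute $\Delta_G u^f(s) = D_{ss} u^f(s)$, $\Delta_G v(t) = D_{tt} v(t)$, $\partial_\nu u^f(s) = f(s)$, and $u^f(s)|_{\partial G} = \Lambda_\mu f(s)$. A direct computation then yields the two-dimensional discrete wave identity
\begin{equation*}
D_{ss} W(s,t) - D_{tt} W(s,t) = (\Lambda_\mu f(s), \partial_\nu v(t))_{\partial G} - (v(t)|_{\partial G}, f(s))_{\partial G} =: \beta(s,t),
\end{equation*}
valid for $s, t \in \{1, \ldots, 2T-1\}$.

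The key observation is a discrete factorization that makes this identity telescope: setting $\Phi(s,t) := W(s+1, t) - W(s, t+1)$, a one-line algebraic rearrangement shows
\begin{equation*}
D_{ss} W(s,t) - D_{tt} W(s,t) = \Phi(s, t) - \Phi(s-1, t-1).
\end{equation*}
For each fixed $k \in \{0, 1, \ldots, T-2\}$, summing $\beta(s, s+2k+1)$ for $s = 1, \ldots, T-1-k$ collapses the ``diagonal slice'' $t - s = 2k+1$ to $W(T-k, T+k) - W(T-1-k, T+k+1)$ after cancelling the endpoints on the slices $W(0,\cdot)$ and $W(1,\cdot)$. A second telescoping in $k$ then produces $W(T, T) - W(1, 2T-1) = W(T, T)$, again using $W(1, \cdot)=0$.

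It remains to match the double sum $\sum_{k=0}^{T-2} \sum_{s=1}^{T-1-k} \beta(s, s+2k+1)$ with the right-hand side of the claimed identity. Reindexing via $j = k$ and inspecting the definition $\mathscr{J} u(t) = \sum_{j=0}^{T-t-1} u(t+1+2j)$, one verifies that
\begin{equation*}
\sum_{s=1}^{T-1} \sum_{j=0}^{T-s-1} \beta(s, s+1+2j) = (P_T \Lambda_\mu f, \mathscr{J}\tau_N v)_{\{1,\dots,T-1\}\times\partial G} - (P_T f, \mathscr{J}\tau_D v)_{\{1,\dots,T-1\}\times\partial G},
\end{equation*}
completing the identification. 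I expect the main technical hurdle to be purely bookkeeping: tracking the parity constraint $t - s$ odd inherited from $\mathscr{J}$, confirming that every telescoped endpoint term either contributes to $W(T, T)$ or falls on the zero slices $s \in \{0, 1\}$, and ensuring no index wanders outside the admissible range $\{0, \ldots, 2T\}$ where $u^f$ and $v$ are defined.
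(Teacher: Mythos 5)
Your proposal is correct. You set up the same Blagovescenskii-type framework as the paper: the two-time inner product, Green's formula to get the two-dimensional discrete wave identity $D_{ss}W-D_{tt}W=\beta(s,t)$ (identical to the paper's $F(t,s)$ up to the naming of the two time variables), the vanishing of $W(0,\cdot)$ and $W(1,\cdot)$ from the initial data, and the same final double sum matching $\mathscr J$. Where you genuinely diverge is in how you solve the resulting difference equation. The paper rewrites the identity as a recursion $I(t+1,s)=-I(t-1,s)+I(t,s+1)+I(t,s-1)+F(t,s)$, \emph{guesses} the closed form $I(t,s)=\sum_{i=1}^{t-1}\sum_{j=0}^{t-i-1}F(i,s-t+i+1+2j)$, and verifies it by a somewhat laborious induction on $t$. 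You instead observe the exact factorization $D_{ss}W(s,t)-D_{tt}W(s,t)=\Phi(s,t)-\Phi(s-1,t-1)$ with $\Phi(s,t)=W(s+1,t)-W(s,t+1)$ (which checks out: both sides equal $W(s+1,t)+W(s-1,t)-W(s,t+1)-W(s,t-1)$), telescope along each diagonal $t-s=2k+1$ to get $W(T-k,T+k)-W(T-1-k,T+k+1)$ using $\Phi(0,2k+1)=0$, and telescope again in $k$ to isolate $W(T,T)$. This yields the same double sum $\sum_{s=1}^{T-1}\sum_{j=0}^{T-s-1}\beta(s,s+1+2j)$ without having to divine the general formula for $I(t,s)$ at arbitrary $(t,s)$; the trade-off is that your argument only produces the value on the diagonal $I(T,T)$, whereas the paper's induction gives the solution at all $(t,s)$ (which is not needed for this lemma). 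All indices in your slices stay within $\{0,\dots,2T\}$, so the bookkeeping you flag as the main hurdle does go through.
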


\begin{proof}
Set 
\begin{align*}
I(t,s):=(u^f(t),v(s))_G, \quad t,s \in \mathbb{N},~ t,s \geq 0.
\end{align*}
Using equation \eqref{eq:ibvp1} and the Green's formula,  we get 
\begin{align*}
(D_{tt}-D_{ss})I(t,s)
=&(D_{tt}u^f(t),v(s))_G-(u^f(t),D_{ss}v(s))_G\\
=&(\Delta_Gu^f(t),v(s))_G-(u^f(t),\Delta_Gv(s))_G\\
=&(u^f(t),\partial_\nu v(s))_{\partial G}-(v(s),\partial_\nu u^f(t))_{\partial G}\\
=&(\Lambda_{\mu}f(t),\tau_Nv(s))_{\partial G}-(\tau_Dv(s),f(t))_{\partial G},\quad s, t\geq 1.
\end{align*}
Let us denote the right hand side by $F(t, s)$, i.e.,
\begin{equation*}
F(t,s):=(\Lambda_{\mu}f(t),\tau_Nv(s))_{\partial G}-(\tau_Dv(s),f(t))_{\partial G}. 
\end{equation*}
Using the definition of $D_{tt}$ and $D_{ss}$, the relation can be written as
\begin{align}\label{difference equation}
I(t+1,s)=-I(t-1,s)+I(t,s+1)+I(t,s-1)+F(t,s), \quad  t,s \geq  1
\end{align}
On the other hand, the initial conditions for $u^f$ are $u^f(0,x)=0$ for $x\in\bar{G}$ and $D_tu^f(0,x) = 0$ for $x\in G$. 
Hence,
\begin{equation} 
\begin{aligned}\label{eq:initcond}
I(0,s)= & (u^f(0),v(s))_G = 0,\quad s\in\{0,1,\cdots, 2T\}, \\
I(1,s) = & D_t I(0,s) + I(0,s) = (D_t u^f(0),v(s))_G + I(0,s) = 0, \quad s\in\{0,1,\cdots, 2T\}.
\end{aligned}
\end{equation}
Consequently, we obtain a recursive relationship for $I(t,s)$ with  initial conditions. 
The solution to this recursive relationship is given by
\begin{align}\label{general solution}
I(t,s)=\sum\limits_{i=1}^{t-1}\sum\limits_{j=0}^{t-i-1}F(i,s-t+i+1+2j), \qquad t \geq 2, \quad s\geq 1.
\end{align}
This solution can be proved by induction.  Indeed, when $t=2$, we have from~\eqref{difference equation} and~\eqref{eq:initcond} that
\begin{align*}
I(2,s)=-I(0,s)+I(1,s+1)+I(1,s-1)+F(1,s)=F(1,s),
\end{align*}
which agrees with the solution formula~\eqref{general solution}. This establishes the base case. 

For the inductive step, suppose the solution formula~\eqref{general solution} holds for all $t\leq k$ for some positive integers $k\geq  2$.  Considering   the case $t=k+1$ and  using the recursive relation, we can get
\begin{align*}
I(k+1,s) & = -I(k-1,s)+I(k,s+1)+I(k,s-1)+F(k,s) \\
& = -\sum\limits_{i=1}^{k-2}\sum\limits_{j=0}^{k-i-2}F(i,s-k+i+2+2j)+\sum\limits_{i=1}^{k-1}\sum\limits_{j=0}^{k-i-1}F(i,s-k+i+2+2j) \\
& +\sum\limits_{i=1}^{k-1}\sum\limits_{j=0}^{k-i-1}F(i,s-k+i+2j)+F(k,s)\\
:= & -I_1+I_2+I_3+F(k,s).
\end{align*}
Notice that
\begin{align*}
    I_2 & = I_1 + \sum^{k-2}_{i=1} F(i,s-k+i+2+2j)|_{j=k-i-1} + F(i,s-k+i+2+2j)|_{i=k-1, j=0} \\
    & = I_1 + \sum^{k-2}_{i=1} F(i,s+k-i) + F(k-1,s+1) \\
    & = I_1 + \sum^{k-1}_{i=1} F(i,s+k-i),
\end{align*}
hence,
\begin{align*}
    I(k+1,s) & = -I_1+I_2+I_3+F(k,s) \\
    & = I_3 + \sum^{k-1}_{i=1} F(i,s+k-i) + F(k,s) \\
    & = \sum\limits_{i=1}^{k-1}\sum\limits_{j=0}^{k-i-1}F(i,s-k+i+2j) + \sum^{k-1}_{i=1} F(i,s+k-i) + F(k,s) \\
    & = \sum\limits_{i=1}^{k-1} \left( \sum\limits_{j=0}^{k-i-1} F(i,s-k+i+2j) + F(i,s-k+i+2j)|_{j=k-i} \right) + F(k,s) \\
    & = \sum\limits_{i=1}^{k-1} \left( \sum\limits_{j=0}^{k-i}F(i,s-k+i+2j) \right) + F(i,s-k+i+2j)|_{i=k,j=0} \\
    & = \sum\limits_{i=1}^{k} \sum\limits_{j=0}^{k-i}F(i,s-k+i+2j),
\end{align*}
which agrees with the solution formula~\eqref{general solution} with $t=k+1$. This completes the induction.

Finally,  by substituting  $t=s=T$ into~\eqref{general solution},  we obtain the following expression
\begin{align*}
\begin{split}
(u^f(T),v(T))_G
=&I(T,T)\\
=&\sum\limits_{i=1}^{T-1}\sum\limits_{j=0}^{T-i-1}F(i,i+1+2j) \\
=&\sum\limits_{i=1}^{T-1}\sum\limits_{j=0}^{T-i-1}\left[ (\Lambda_{\mu}f(i),\tau_Nv(i+1+2j))_{\partial G}-(\tau_Dv(i+1+2j),f(i))_{\partial G}\right]\\
=&\sum\limits_{i=1}^{T-1}\left[(\Lambda_{\mu}f(i),\sum\limits_{j=0}^{T-i-1}\tau_Nv(i+1+2j))_{\partial G}-(f(i),\sum\limits_{j=0}^{T-i-1}\tau_Dv(i+1+2j))_{\partial G}\right]\\
=&\sum_{i=1}^{T-1} \left[(\Lambda_{\mu}f(i),\mathscr J\tau_Nv(i))_{\partial G}-(f(i),\mathscr J\tau_D v(i))_{\partial G}\right] \\
=&(P_T(\Lambda_{\mu}f), \mathscr J\tau_Nv)_{\{1,\cdots,T-1\}\times\partial G}-(P_Tf, \mathscr J\tau_Dv)_{\{1,\cdots,T-1\}\times\partial G}.
\end{split}
\end{align*}
\end{proof}

Lemma \ref{(u^f(T),v(T))} shows that $(u^f(T),v(T))_G$ can be expressed by the ND map $\Lambda_\mu$.
Next, we approximate the wave on interior vertices at time $T$.

Define a linear operator  $W: l^2(\{1,\cdots,T-1\}\times\partial G)\longmapsto l^2(G)$ which is $h\longmapsto u^{P^*_Th}(T,x),~x\in G$.
It maps the Neumann boundary value to the solution of equation \eqref{eq:ibvp1} at time $t=T$ and $x\in G$.
Denote its adjoint operator by $W^*$.

\subsection{Calculating \texorpdfstring{$W^*W$}{}}

Let $\Lambda_{\mu,T}$ be the restricted ND map, for $h\in l^2({\{1,\cdots,T-1\}\times\partial G})$,
\begin{align*}
\Lambda_{\mu,T} h:= P_T(\Lambda_\mu P_T^*h)=u^{P_T^*h}|_{\{1,\cdots,T-1\}\times\partial G}.
\end{align*}

The Blagove\v{s}\v{c}enski\u{\i}'s identity was proposed in \cite{Blagoveshchenskii1967}. Let  us give the 
following discrete version.
\begin{prop} \label{thm:WstarW}
Let $h_1, h_2\in l^2(\{1,2,\cdots,T\}\times\partial G)$, and 
 $u^{P_T^*h_1}$, $u^{P_T^*h_2}$  be the solution of \eqref{eq:ibvp1} with Neumann boundary values $P_T^*{h_1}, P_T^*{h_2}\in l^2(\{0,1,\cdots,2T\}\times\partial G)$ respectively. 
 Then, we can obtain
\begin{align*}
(u^{P_T^*h_1}(T),u^{P_T^*h_2}(T))_G&=(Wh_1,Wh_2)_G=(h_1,W^*Wh_2)_{\{1,2,\cdots,T-1\}\times\partial G}\\
&=(h_1,(\mathscr{R}\Lambda_{\mu,T}\mathscr{R}\mathscr{J}P^*_T-\mathscr{J}\Lambda_{\mu}P_T^*)h_2)_{\{1,2,\cdots,T-1\}\times\partial G}.
\end{align*}
\end{prop}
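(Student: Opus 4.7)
The plan is to combine two ingredients: a direct application of the discrete Blagove\v{s}\v{c}enski\u{\i}-type identity already established in Lemma~\ref{(u^f(T),v(T))}, together with a time-reversal symmetry of the restricted ND map $\Lambda_{\mu,T}$. The first two equalities in the statement are immediate from the definition of $W$ and its adjoint, so all the content is in showing $W^*W=\mathscr{R}\Lambda_{\mu,T}\mathscr{R}\mathscr{J}P_T^*-\mathscr{J}\Lambda_\mu P_T^*$.

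First I would apply Lemma~\ref{(u^f(T),v(T))} with $f=P_T^*h_1$ and $v=u^{P_T^*h_2}$. Since $v$ is a wave, its Neumann trace is $\tau_N v=P_T^*h_2$ and its Dirichlet trace is $\tau_D v=\Lambda_\mu P_T^*h_2$. Using $P_T P_T^*=\mathrm{Id}$ on $\{1,\dots,T-1\}\times\partial G$ and the definition $\Lambda_{\mu,T}h_1=P_T(\Lambda_\mu P_T^*h_1)$, the lemma yields
\begin{equation*}
(Wh_1,Wh_2)_G = (\Lambda_{\mu,T}h_1,\mathscr{J}P_T^*h_2)_{\{1,\dots,T-1\}\times\partial G} - (h_1,\mathscr{J}\Lambda_\mu P_T^*h_2)_{\{1,\dots,T-1\}\times\partial G}.
\end{equation*}
The second term already matches the second piece of $W^*W$. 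What remains is to convert the first term into $(h_1,\mathscr{R}\Lambda_{\mu,T}\mathscr{R}\mathscr{J}P_T^*h_2)$, which is precisely the claim that $\Lambda_{\mu,T}^*=\mathscr{R}\Lambda_{\mu,T}\mathscr{R}$ (using $\mathscr{R}^*=\mathscr{R}$).

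The main obstacle is establishing this self-adjointness-up-to-time-reversal of $\Lambda_{\mu,T}$. My plan is a direct computation mimicking the proof of Lemma~\ref{(u^f(T),v(T))}. Given $h_1,h_2\in l^2(\{1,\dots,T-1\}\times\partial G)$, set $u_1=u^{P_T^*h_1}$, $w=u^{P_T^*\mathscr{R}h_2}$, and $J(t,s)=(u_1(t),w(s))_G$. Using the wave equation and Green's formula,
\begin{equation*}
(D_{tt}-D_{ss})J(t,s) = (h_1(t),w(s)|_{\partial G})_{\partial G} - (u_1(t)|_{\partial G},h_2(T-s))_{\partial G}\quad \text{for } t,s\in\{1,\dots,T-1\}.
\end{equation*}
I would then sum along the antidiagonal $s=T-t$, with the goal that $\sum_{t=1}^{T-1}[(D_{ss}-D_{tt})J](t,T-t)$ telescopes down to the four corner values $J(0,T-1),J(T-1,0),J(1,T),J(T,1)$, each of which vanishes because $u_1(0)=u_1(1)=0=w(0)=w(1)$ from the initial conditions in \eqref{eq:ibvp1}. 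This yields
\begin{equation*}
\sum_{t=1}^{T-1}(u_1(t)|_{\partial G},h_2(t))_{\partial G} = \sum_{t=1}^{T-1}(h_1(t),w(T-t)|_{\partial G})_{\partial G},
\end{equation*}
which, after recognizing $u_1(t)|_{\partial G}=\Lambda_{\mu,T}h_1(t)$ and $w(T-t)|_{\partial G}=\mathscr{R}(\Lambda_{\mu,T}\mathscr{R}h_2)(t)$, gives exactly $\Lambda_{\mu,T}^*=\mathscr{R}\Lambda_{\mu,T}\mathscr{R}$.

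Combining, the first term becomes $(\Lambda_{\mu,T}h_1,\mathscr{J}P_T^*h_2)=(h_1,\mathscr{R}\Lambda_{\mu,T}\mathscr{R}\mathscr{J}P_T^*h_2)$, and plugging this into the previous display completes the identification of $W^*W$. The bookkeeping part that I would check carefully is that the telescoping produces exactly the four boundary values of $J$ and nothing else, since the range shifts from $D_{tt}$ and $D_{ss}$ push indices to $t\pm1$ and $s\pm1$; this is the same index-shuffling done inside the proof of Lemma~\ref{(u^f(T),v(T))}, so it should work verbatim here.
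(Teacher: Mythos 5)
Your proposal is correct, and the main line of argument is exactly the paper's: apply Lemma~\ref{(u^f(T),v(T))} with $f=P_T^*h_1$ and $v=u^{P_T^*h_2}$ (so $\tau_Nv=P_T^*h_2$, $\tau_Dv=\Lambda_\mu P_T^*h_2$, $P_TP_T^*h_1=h_1$), then move $\Lambda_{\mu,T}$ across the pairing via $\Lambda_{\mu,T}^*=\mathscr{R}\Lambda_{\mu,T}\mathscr{R}$.

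The one place you genuinely diverge is in how you justify $\Lambda_{\mu,T}^*=\mathscr{R}\Lambda_{\mu,T}\mathscr{R}$. The paper (Appendix~\ref{appendixB}) proves this with a single time variable: it introduces a second wave $v$ with final data $v(T-1)=D_tv(T-1)=0$ and Neumann trace $g$, sums $(D_{tt}v-\Delta_Gv)u$ over $\{1,\dots,T-1\}\times G$, integrates by parts in $t$ and applies Green's formula, and then identifies $v=\mathscr{R}U$ with $U$ the forward solution driven by $\mathscr{R}g$. You instead run a two-time-variable Blagove\v{s}\v{c}enski\u{\i}-style argument: set $J(t,s)=(u^{P_T^*h_1}(t),u^{P_T^*\mathscr{R}h_2}(s))_G$ and sum $(D_{tt}-D_{ss})J$ along the antidiagonal $s=T-t$. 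This does telescope as you hope: writing $F(t)=J(t,T-t+1)$ and $H(t)=J(t,T-t-1)$, one gets $(D_{tt}-D_{ss})J(t,T-t)=[F(t+1)-F(t)]-[H(t)-H(t-1)]$, so the sum collapses to $J(T,1)-J(1,T)-J(T-1,0)+J(0,T-1)$, and all four corners vanish because both waves have $u(0)=u(1)=0$. Since $t$ and $T-t$ both stay in $\{1,\dots,T-1\}$, the Neumann traces encountered are exactly $h_1$ and $\mathscr{R}h_2$ (no contribution from the zero-extension), and the resulting identity is precisely $(\Lambda_{\mu,T}h_1,h_2)=(h_1,\mathscr{R}\Lambda_{\mu,T}\mathscr{R}h_2)$. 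Your approach has the advantage of reusing only forward (initial-value) solutions and the already-verified corner cancellation pattern, whereas the paper's appendix argument requires setting up and interpreting a backward problem; both are equally rigorous once written out. One cosmetic slip: your displayed formula for $(D_{tt}-D_{ss})J$ has the overall sign reversed relative to what Green's formula gives (it should read $(u_1(t)|_{\partial G},h_2(T-s))_{\partial G}-(w(s)|_{\partial G},h_1(t))_{\partial G}$), but since the antidiagonal sum equals zero this does not affect the conclusion.
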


\begin{proof}
As $\tau_Du^{P_T^*h_2}=\Lambda_{\mu}P_T^*h_2$, $\tau_Nu^{P_T^*h_2}=P_T^*h_2$, by Lemma \ref{(u^f(T),v(T))} we have
\begin{align*}
(u^{P_T^*h_1}(T),u^{P_T^*h_2}(T))_G
&=(P_T(\Lambda_{\mu}P_T^*h_1),\mathscr{J}P^*_Th_2)_{\{1,2,\cdots,T-1\}\times\partial G}-(h_1,\mathscr{J}\Lambda_{\mu}P^*_Th_2)_{\{1,2,\cdots,T-1\}\times\partial G}\\
&=(\Lambda_{\mu,T}h_1,\mathscr{J}P^*_Th_2)_{\{1,2,\cdots,T-1\}\times\partial G}-(h_1,\mathscr{J}\Lambda_{\mu}P^*_Th_2)_{\{1,2,\cdots,T-1\}\times\partial G}\\
&=(h_1,\Lambda_{\mu,T}^*\mathscr{J}P^*_Th_2)_{\{1,2,\cdots,T-1\}\times\partial G}-(h_1,\mathscr{J}\Lambda_{\mu}P^*_Th_2)_{\{1,2,\cdots,T-1\}\times\partial G}.
\end{align*}
The adjoint is $\Lambda_{\mu,T}^*=\mathscr{R}\Lambda_{\mu,T}\mathscr{R}$ ( see Appendix \ref{appendixB} for the calculation). Thus,
\begin{align*}
(u^{P_T^*h_1}(T),u^{P_T^*h_2}(T))_G=(h_1,(\mathscr{R}\Lambda_{\mu,T}\mathscr{R}\mathscr{J}P_T^*-\mathscr{J}\Lambda_{\mu}P_T^*)h_2)_{\{1,2,\cdots,T-1\}\times\partial G}.
\end{align*}
On the other hand, the definition of the operator $W$ implies
\begin{align*}
(u^{P_T^*h_1}(T),u^{P_T^*h_2}(T))_G=(Wh_1,Wh_2)_G=(h_1,W^*Wh_2)_{\{1,2,\cdots,T-1\}\times\partial G}.
\end{align*}
Since  $h_1, h_2$ are arbitrary, we can conclude that
\begin{align}\label{WstarWformula}
W^*W=\mathscr{R}\Lambda_{\mu,T}\mathscr{R}\mathscr{J}P_T^*-\mathscr{J}\Lambda_{\mu}P_T^*.
\end{align}
\end{proof}

The next proposition presents  an explicit formula for computing the action of the operator $W^*$ on any harmonic function $\varphi$.

\begin{prop} \label{thm:Wharm}
For $h\in l^2(\{1,2,\cdots,T\}\times\partial G)$, let $u^{P_T^*h}$ be the solution of equation ~\eqref{eq:ibvp1} with the Neumann boundary value $P_T^*h\in l^2(\{0,1,\cdots,2T\}\times\partial G)$.
If $\varphi(x)$ is an arbitrary harmonic function, then
\begin{align*}
(u^{P_T^*h}(T),\varphi)_G
=(h,(\mathscr{R}\Lambda_{\mu,T}\mathscr{R}\mathscr{J}\tau_N-\mathscr{J}\tau_D)\varphi)_{\{1,2,\cdots,T-1\}\times\partial G}.
\end{align*}
Therefore,
\begin{align}\label{Wstarformula}
W^*\varphi=(\mathscr{R}\Lambda_{\mu,T}\mathscr{R}\mathscr{J}\tau_N-\mathscr{J}\tau_D)\varphi.
\end{align}
\end{prop}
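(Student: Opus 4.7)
\textbf{Proof plan for Proposition \ref{thm:Wharm}.} The key idea is to exploit the fact that any harmonic function on $\bar G$ is automatically a stationary solution of the graph wave equation, and then apply the generalized Blagove\v{s}\v{c}enski\u{\i} identity (Lemma \ref{(u^f(T),v(T))}) in this special case. Concretely, I would define the time-independent function $v(t,x) := \varphi(x)$ for all $t\in\{0,1,\dots,2T\}$ and $x\in\bar G$. Since $\varphi$ is harmonic, $\Delta_G v(t,x) = \Delta_G\varphi(x)=0$ for $x\in G$, and clearly $D_{tt}v(t,x)=0$; hence $v$ satisfies the graph wave equation on $\{1,\dots,2T-1\}\times G$, as required by Lemma \ref{(u^f(T),v(T))}. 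Moreover, $v(T) = \varphi$, $\tau_D v(t) = \varphi|_{\partial G}$ and $\tau_N v(t) = \partial_\nu\varphi|_{\partial G}$ are independent of $t$.

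Next, I would apply Lemma \ref{(u^f(T),v(T))} to $u^{P_T^*h}$ (i.e.\ with $f=P_T^*h$) and this $v$. Because $P_T^*h$ vanishes at $t=0$ and $t\in\{T,\dots,2T\}$, and $h$ is already supported on $\{1,\dots,T-1\}\times\partial G$, we have $P_T P_T^* h = h$, so $(P_T f,\, \mathscr J\tau_Dv)_{\{1,\dots,T-1\}\times\partial G} = (h,\, \mathscr J\tau_D\varphi)_{\{1,\dots,T-1\}\times\partial G}$. Similarly $P_T(\Lambda_\mu P_T^*h) = \Lambda_{\mu,T}h$ by the very definition of the restricted ND map. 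Substituting these yields
\begin{align*}
(u^{P_T^*h}(T),\varphi)_G
= (\Lambda_{\mu,T}h,\, \mathscr J\tau_N\varphi)_{\{1,\dots,T-1\}\times\partial G}
- (h,\, \mathscr J\tau_D\varphi)_{\{1,\dots,T-1\}\times\partial G}.
\end{align*}

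Moving $\Lambda_{\mu,T}$ to the second slot via its adjoint and invoking the identity $\Lambda_{\mu,T}^* = \mathscr R\Lambda_{\mu,T}\mathscr R$ (computed in Appendix \ref{appendixB} and already used in the proof of Proposition \ref{thm:WstarW}) gives
\begin{align*}
(u^{P_T^*h}(T),\varphi)_G
= \bigl(h,\, (\mathscr R\Lambda_{\mu,T}\mathscr R\mathscr J\tau_N - \mathscr J\tau_D)\varphi\bigr)_{\{1,\dots,T-1\}\times\partial G},
\end{align*}
which is the first displayed formula of the proposition. Finally, since $Wh = u^{P_T^*h}(T)\big|_G$ by definition of $W$, the left-hand side equals $(Wh,\varphi)_G = (h, W^*\varphi)_{\{1,\dots,T-1\}\times\partial G}$. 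Because $h$ is arbitrary, this forces the operator identity \eqref{Wstarformula}.

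I do not anticipate any serious obstacle here; the argument is essentially the harmonic-function specialization of Proposition \ref{thm:WstarW}. The only points to verify carefully are that the time-constant extension of $\varphi$ is admissible in Lemma \ref{(u^f(T),v(T))} (no boundary conditions are imposed on $v$ there, so it is), and that the identifications $P_TP_T^*h = h$ and $P_T(\Lambda_\mu P_T^*h) = \Lambda_{\mu,T}h$ are used correctly. The adjoint formula $\Lambda_{\mu,T}^* = \mathscr R\Lambda_{\mu,T}\mathscr R$ is treated as given by Appendix \ref{appendixB}.
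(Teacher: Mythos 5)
Your proposal is correct and follows essentially the same route as the paper: both take $v$ to be the time-constant extension of $\varphi$ (admissible in Lemma \ref{(u^f(T),v(T))} precisely because harmonicity makes it a stationary wave solution), apply that lemma with $f = P_T^*h$, pass $\Lambda_{\mu,T}$ to the adjoint via $\Lambda_{\mu,T}^* = \mathscr{R}\Lambda_{\mu,T}\mathscr{R}$, and conclude from the arbitrariness of $h$. Your explicit check that $D_{tt}v = 0$ and $\Delta_G v = 0$ is a detail the paper leaves implicit, but the argument is identical.
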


\begin{proof}
We take $f=P_T^*h$, $v=\varphi$ in Lemma \ref{(u^f(T),v(T))} and use the fact that $\Lambda_{\mu,T}^* = \mathscr{R}\Lambda_{\mu,T}\mathscr{R}$ (see Appendix \ref{appendixB}) to get

\begin{align*}
(u^{P_T^*h}(T),\varphi)_G
&=(P_T(\Lambda_{\mu}P_T^*h),\mathscr{J}\tau_N\varphi)_{\{1,2,\cdots,T-1\}\times\partial G}-(h,\mathscr{J}\tau_D\varphi)_{\{1,2,\cdots,T-1\}\times\partial G}\\
&=(\Lambda_{\mu,T}h,\mathscr{J}\tau_N\varphi)_{\{1,2,\cdots,T-1\}\times\partial G}- (h,\mathscr{J}\tau_D\varphi)_{\{1,2,\cdots,T-1\}\times\partial G}\\
&=(h,\Lambda_{\mu,T}^*\mathscr{J}\tau_N\varphi)_{\{1,2,\cdots,T-1\}\times\partial G}- (h,\mathscr{J}\tau_D\varphi)_{\{1,2,\cdots,T-1\}\times\partial G}\\
&=(h,(\mathscr{R}\Lambda_{\mu,T}\mathscr{R}\mathscr{J}\tau_N-\mathscr{J}\tau_D)\varphi)_{\{1,2,\cdots,T-1\}\times\partial G}.
\end{align*}
On the other hand, the definition of $W$ gives
\begin{align*}
(u^{P_T^*h}(T),\varphi)_G=(Wh,\varphi)_G=(h,W^*\varphi)_{\{1,2,\cdots,T-1\}\times\partial G}.
\end{align*}
As $h$ is arbitrary, we conclude that  the action of $W^*$ on a harmonic function $\varphi(x)$ is
\begin{align*}
W^*\varphi=(\mathscr{R}\Lambda_{\mu,T}\mathscr{R}\mathscr{J}\tau_N-\mathscr{J}\tau_D)\varphi.
\end{align*}
\end{proof}

\subsection{Solving the Boundary Control Equation}

We aim to determine the existence of a function   $h\in l^2(\{1,\cdots,T-1\}\times\partial G)$ such that $Wh=u^{P_T^*h}(T,x)$ holds for  $x\in G$.
In fact, we need to verify that  $W$ is surjective.

\begin{prop} \label{thm:Wsurj}
Suppose the graph satisfies Assumption~\ref{foliation condition}.
If $T\geq \max\limits_{x\in G}d(x,\partial G)$, then $W$ is surjective.
\end{prop}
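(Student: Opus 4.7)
\medskip

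\noindent\textbf{Proof plan for Proposition~\ref{thm:Wsurj}.}

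Since $W$ acts between the finite-dimensional spaces $l^2(\{1,\dots,T-1\}\times\partial G)$ and $l^2(G)$, surjectivity of $W$ is equivalent to injectivity of $W^*$. The plan is therefore to take $\psi\in l^2(G)$ with $W^*\psi=0$ and show $\psi=0$ by constructing a suitable dual wave and invoking the unique continuation principle (Theorem~\ref{unique_continuation_principle}).

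\medskip

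\noindent\emph{Step 1 (dual wave).} Define $v:\{0,\dots,2T\}\times\bar G\to\mathbb{R}$ as the unique solution of the graph wave equation on $\{1,\dots,2T-1\}\times G$ with homogeneous Neumann data $\tau_N v=0$ and ``initial data'' at two consecutive times $v(T-1,x)=v(T,x)=\psi(x)$ for $x\in G$. By Assumption~\ref{foliation condition}(i), for every $z\in\partial G$ the Neumann condition $\partial_\nu v(t,z)=0$ reduces to $v(t,z)=v(t,x_z)$, where $x_z$ is the unique interior neighbor of $z$. In particular, the choice $v(T-1)=v(T)=\psi$ on $G$ forces the time symmetry $v(T+k,\cdot)=v(T-1-k,\cdot)$ on $\bar G$ for all admissible $k$.

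\medskip

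\noindent\emph{Step 2 (Blagove\v{s}\v{c}enskii identity).} Apply Lemma~\ref{(u^f(T),v(T))} with $f=P_T^*h$ for arbitrary $h\in l^2(\{1,\dots,T-1\}\times\partial G)$. Since $\tau_N v=0$ and $v(T)|_G=\psi$, the identity collapses to
\begin{equation*}
(Wh,\psi)_G \;=\; (u^{P_T^*h}(T),v(T))_G \;=\; -\bigl(h,\,\mathscr{J}\tau_D v\bigr)_{\{1,\dots,T-1\}\times\partial G}.
\end{equation*}
The hypothesis $W^*\psi=0$ forces the left-hand side to vanish for all $h$, so $\mathscr{J}\tau_D v(t,z)=0$ for every $t\in\{1,\dots,T-1\}$ and $z\in\partial G$.

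\medskip

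\noindent\emph{Step 3 (decoupling $\tau_D v$ pointwise).} The telescoping identity
$\mathscr{J}\tau_D v(t,z)-\mathscr{J}\tau_D v(t+2,z)=\tau_D v(t+1,z)$ (for $1\le t\le T-3$) together with the endpoint equation $\mathscr{J}\tau_D v(T-1,z)=\tau_D v(T,z)$ gives $\tau_D v(s,z)=0$ for $s\in\{2,\dots,T-2\}\cup\{T\}$. In particular, $\psi(x_z)=v(T,z)=0$ for every $z\in\partial G$, which by Assumption~\ref{foliation condition}(i) already yields $\psi\equiv 0$ on $N_1$. The time symmetry $v(T+k,\cdot)=v(T-1-k,\cdot)$ then propagates these vanishings to the full window $s\in\{2,\dots,2T-3\}$. (If needed, this window can be widened by also exploiting the identity at $I(T,T-1)$, which, because both $v(T-1)$ and $v(T)$ equal $\psi$ on $G$, likewise equals $(Wh,\psi)_G=0$ and yields a companion relation of opposite parity.)

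\medskip

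\noindent\emph{Step 4 (unique continuation).} Introduce the shifted wave $\tilde v(t,x):=v(t+T,x)$ for $t\in\{-T,\dots,T\}$. Then $\tilde v$ solves the wave equation on $\{-T+1,\dots,T-1\}\times G$, has $\tau_N\tilde v=0$, and $\tau_D\tilde v$ vanishes on a symmetric window inside $\{-T,\dots,T\}\times\partial G$ produced in Step~3. Theorem~\ref{unique_continuation_principle} then applies (using Assumption~\ref{foliation condition} in full) and yields $\tilde v(0,x)=v(T,x)=\psi(x)=0$ for every $x\in G$, i.e., $\psi=0$, completing the proof.

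\medskip

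\noindent\emph{Main obstacle.} The delicate point is Step~3: the operator $\mathscr{J}$ sums $\tau_D v$ along times of a single parity, so a single Blagove\v{s}\v{c}enskii identity cancels only half of the boundary times. Matching the resulting ``zero-data'' window of $\tau_D v$ with the symmetric hypothesis of Theorem~\ref{unique_continuation_principle} --- especially so that the foliation-based induction of that theorem reaches the deepest level $N_{\max d(x,\partial G)}$ at time $0$ --- is where the interplay between the time symmetry of $v$, the parity structure of $\mathscr{J}$, and Assumption~\ref{foliation condition} must be used most carefully.
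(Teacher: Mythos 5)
Your overall strategy (surjectivity of $W$ via injectivity of $W^*$, then unique continuation) matches the paper, but the route you take through the Blagove\v{s}\v{c}enskii identity contains a genuine gap, concentrated exactly where you flag the ``main obstacle.'' First, the telescoping identity in Step~3 is false: since $\mathscr{J}u(t)=\sum_{j=0}^{T-t-1}u(t+1+2j)$ samples times from $t+1$ up to $2T-t-1$, one has
\begin{equation*}
\mathscr{J}\tau_D v(t,z)-\mathscr{J}\tau_D v(t+2,z)=\tau_D v(t+1,z)+\tau_D v(2T-t-1,z),
\end{equation*}
not $\tau_D v(t+1,z)$ alone, so the pointwise vanishing you assert does not follow from the identity you state. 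The sums can still be unraveled, but only by invoking the even symmetry $v(s,\cdot)=v(2T-1-s,\cdot)$ at every step (start from $t=T-1$, which gives $v(T,z)=0$, and peel off one new time per step); doing this correctly yields $\tau_D v(s,z)=0$ for $s\in\{1,\dots,2T-2\}$. Second, and more seriously, this window is too short for Step~4: after shifting, the Cauchy data of $\tilde v$ vanish only on $\{-T+1,\dots,T-2\}\times\partial G$, a window that is not symmetric about $t=0$ and whose symmetric core has radius $T-2$. Tracking the level-$l$ conclusion of Theorem~\ref{unique_continuation_principle}, you reach $v(T,\cdot)=0$ on $N_l$ only for $l\le T-1$, so the deepest level $N_{\max_x d(x,\partial G)}$ is missed precisely in the borderline case $T=\max_x d(x,\partial G)$ allowed by the proposition. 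Your parenthetical suggestion to also use $I(T,T-1)$ does not help: it produces the opposite-parity sums, which after the same unraveling give the \emph{same} window $\{1,\dots,2T-2\}$, not a larger one.

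The paper closes this gap by a structurally different device. It does not build a dual wave with $v(T-1)=v(T)=\psi$ and does not unravel $\mathscr{J}$ at all. Instead it uses the explicit characterization of the adjoint from Appendix~\ref{appendixB}: $W^*g$ \emph{is} the Dirichlet boundary trace on $\{1,\dots,T-1\}\times\partial G$ of the backward wave $v$ with $v(T,\cdot)\equiv 0$ on $\bar G$, $D_tv(T-1,\cdot)=g$, and homogeneous Neumann data. Hence $W^*g=0$ gives vanishing Dirichlet data directly, and --- crucially --- because the time-reversed wave $V(t)=v(T-t)$ satisfies $V(0,\cdot)\equiv 0$, its \emph{odd extension} in time is again a solution whose Cauchy data vanish on a boundary cylinder of essentially doubled length $\{-T,\dots,T\}$, which is exactly what Theorem~\ref{unique_continuation_principle} needs to recover $g$ at time $0$. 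Your dual wave is even about $T-\tfrac12$ rather than vanishing at a time slice, so no analogous doubling is available, and that is why your window comes up short. To repair your argument you would either need to strengthen the hypothesis to $T\ge \max_x d(x,\partial G)+1$, or switch to the paper's adjoint-plus-odd-extension construction.
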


\begin{proof}
Note that $W$ is a linear operator between finite dimensional vector spaces. It remains to show that its adjoint operator $W^*$ is injective.

Given any $g\in l^2(G)$, we have (see Appendix \ref{appendixB} for the derivation) 
\[ W^*g = v(t,z) \]
with $(t,z)\in\{{1,2,\cdots,T-1\}\times\partial G}$, 
where $v$ satisfies
\begin{align} \label{v W*eq}
\begin{cases}
D_{tt} v(t,x) - \Delta_Gv(t,x) = 0, & (t,x)\in\{1,2,\cdots, T-1\}\times G, \\
v(T,x) = 0, & x\in \bar{G}, \\
D_t v(T-1,x) =  g(x), & x\in G, \\
\partial_\nu v(t,z)  = 0, & (t,z)\in \{0,1,\cdots, T\}\times \partial G.
\end{cases}
\end{align}
Introduce $V(t) := v(T-t)$.  Then $V$ solves
\begin{align*}
\begin{cases}
D_{tt} V(t,x) - \Delta_G V(t,x) = 0, & (t,x)\in\{1,2,\cdots, T-1\}\times G, \\
V(0,x) = 0, & x\in \bar{G}, \\
D_t V(0,x) =  -g(x), & x\in G, \\
\partial_\nu V(t,z)  = 0, & (t,z)\in \{0,1,\cdots, T\}\times \partial G.
\end{cases}
\end{align*}
Let $V_{odd}(t,x)$ be the odd extension of $V(t,x)$ with respect to $t$, that is
\begin{align*}
V_{odd}(t)=
\begin{cases}
-V(t,x), & t\in \{-T,-T+1,\cdots,-1\},\\
V(t,x), & t\in \{0,1,\cdots,T\}.
\end{cases}
\end{align*}
By construction, 
$V_{odd}(t,x)$ clearly satisfies the wave equation for $t>0$ and $t<0$. For $t=0$, we use $V_{odd}(0,x)=0$ to get
\begin{align*}
D_{tt} V_{odd}(0,x) - \Delta_G V_{odd}(0,x) = & V_{odd}(1,x)-2V_{odd}(0,x)+V_{odd}(-1,x)-\Delta_G V_{odd}(0,x)\\
=& V(1,x)-V(1,x)\\
=& 0.
\end{align*}
Therefore, $V_{odd}(t,x)$ satisfies
\begin{align*}
\begin{cases}
D_{tt} V_{odd}(t,x) - \Delta_G V_{odd}(t,x) = 0, & (t,x)\in\{-T+1,\cdots, T-1\}\times G,\\
V_{odd}(0,x) = 0, & x\in \bar{G},\\
D_tV_{odd}(0,x) = -g(x), & x\in G,\\
\partial_\nu V_{odd}(t,z)  = 0, & (t,z)\in \{-T,\cdots, T\}\times \partial G.
\end{cases}
\end{align*}
If $W^*g(x) = 0$ for $x\in G$, then $V_{odd}(t,z)=0$ for $(t,z)\in\{-T,\cdots,T\}\times\partial G$.
By the  unique continuation property in Theorem~\ref{unique_continuation_principle}, we have 
\begin{align*}
D_tV(0,x) = D_tV_{odd}(0,x) =-g(x)=0
\end{align*}
for every $x\in G$ when $T\geq \max\limits_{x\in G}d(x,\partial G)$.
Therefore,  $W^*$ is injective, which implies that $W$ is surjective.
\end{proof}

\begin{prop}\label{explicit h0}
Suppose the graph satisfies Assumption~\ref{foliation condition} and $T\geq \max\limits_{x\in G}d(x,\partial G)$. For any harmonic function $\psi$, the boundary Neumann data given by 
\begin{equation} \label{h0formula}
h_0 = (W^*W)^\dagger W^* \psi
\end{equation}
satisfies $u^{P_T^*h_0}(T,x)=\psi(x)$ for $x\in G$. Here ${\cdot}^\dagger$ denotes the pseudo-inverse. 

\end{prop}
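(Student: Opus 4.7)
The plan is to leverage the surjectivity of $W$ established in Proposition~\ref{thm:Wsurj}, combined with the defining property of the Moore--Penrose pseudo-inverse. Since we are assuming $T \geq \max_{x\in G} d(x,\partial G)$, Proposition~\ref{thm:Wsurj} guarantees that $W: l^2(\{1,\dots,T-1\}\times \partial G) \to l^2(G)$ is surjective. In the finite-dimensional setting this is equivalent to $W^*$ being injective, i.e., $\mathrm{Ker}(W^*) = \{0\}$. This is the critical ingredient that will be extracted from the discrete unique continuation principle already proved in Theorem~\ref{unique_continuation_principle}.

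Next I would verify that $W^*\psi$ lies in the range of $W^*W$, which is exactly the consistency condition needed for $h_0$ to solve the normal equation. For this I would use the standard identity $\mathrm{Ker}(W^*W) = \mathrm{Ker}(W)$, which follows from $\langle W^*Wh, h\rangle = \|Wh\|^2$. Taking orthogonal complements in $l^2(\{1,\dots,T-1\}\times \partial G)$ yields
\begin{equation*}
\mathrm{Range}(W^*W) \;=\; \mathrm{Ker}(W^*W)^\perp \;=\; \mathrm{Ker}(W)^\perp \;=\; \mathrm{Range}(W^*),
\end{equation*}
so in particular $W^*\psi \in \mathrm{Range}(W^*W)$. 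By the defining property of the pseudo-inverse on its domain of consistency, $h_0 = (W^*W)^\dagger W^*\psi$ then satisfies $W^*W h_0 = W^*\psi$.

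With this in hand, rearranging gives $W^*(Wh_0 - \psi|_G) = 0$, so $Wh_0 - \psi|_G \in \mathrm{Ker}(W^*) = \{0\}$, and consequently $Wh_0 = \psi|_G$. Unfolding the definition of $W$ yields $u^{P_T^* h_0}(T,x) = \psi(x)$ for every $x \in G$, which is the desired conclusion. The explicit formulas \eqref{WstarWformula} and \eqref{Wstarformula} from Propositions~\ref{thm:WstarW} and~\ref{thm:Wharm} make $h_0$ computable purely from the Neumann-to-Dirichlet map $\Lambda_\mu$ and the harmonic function $\psi$, which is what the algorithm in Section~\ref{sec:Reconstruction Algorithm} will exploit.

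I do not anticipate a serious obstacle: the argument is essentially a clean consequence of surjectivity together with finite-dimensional linear algebra. The only point to emphasize is that $W^*W$ may genuinely be singular when $\mathrm{Ker}(W)$ is nontrivial, so using the pseudo-inverse (rather than a true inverse) is essential; however, what matters for the conclusion is not the uniqueness of a solution to $W^*Wh = W^*\psi$ but rather that \emph{some} solution produces the desired wave value at time $T$, and the pseudo-inverse selects the minimum-norm one among them.
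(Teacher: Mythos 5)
Your proof is correct, and it reaches the conclusion by a genuinely different mechanism than the paper. The paper's proof is variational: it defines the quadratic functional $\mathcal{F}(h)=\|Wh-\psi\|_G^2$, computes its gradient and Hessian, invokes convexity to conclude that stationary points are global minimizers, and identifies $h_0$ as the minimum-norm least-squares solution of the normal equation $W^*Wh_0=W^*\psi$; the fact that $Wh_0=\psi$ then follows because surjectivity of $W$ forces $\min\mathcal{F}=0$. You instead argue purely algebraically: from $\ker(W^*W)=\ker(W)$ and self-adjointness of $W^*W$ you get $\operatorname{Range}(W^*W)=\operatorname{Range}(W^*)$, so the normal equation is consistent and the pseudo-inverse produces an exact solution; then injectivity of $W^*$ (equivalent to surjectivity of $W$, which is where Proposition~\ref{thm:Wsurj} and hence the unique continuation theorem enter in both arguments) upgrades $W^*(Wh_0-\psi|_G)=0$ to $Wh_0=\psi|_G$. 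Your route is arguably tighter on two points the paper leaves implicit: it verifies explicitly that $W^*\psi$ lies in the range of $W^*W$ (so that $(W^*W)^\dagger W^*\psi$ really solves the normal equation rather than merely least-squares-solving it), and it shows directly that any solution of the normal equation yields $Wh_0=\psi$ exactly, without appealing to the value of the minimum of $\mathcal{F}$. The paper's variational phrasing, on the other hand, connects more transparently to the numerical implementation, where $h_0$ is computed as a regularized minimum-norm solution. Both proofs rest on the same essential input and are equally valid.
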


\begin{proof}
If $T\geq \max\limits_{x\in G}d(x,\partial G)$, we know that $W: l^2({\{1,\cdots, T-1\}\times \partial G})\longmapsto l^2(G)$ is surjective by Proposition~\ref{thm:Wsurj}.
Hence, the equation $u^{P_T^*h}(T)=Wh=\psi$ admits solutions. It remains to prove the explicit formula~\eqref{h0formula}.

Define a quadratic functional $\mathcal{F}$ by
\begin{align*}
\mathcal{F}(h):=\|u^{P_T^*h}(T)-\psi\|_G^2&=\|Wh-\psi\|_G^2\\
&=(Wh,Wh)_G-2(Wh,\psi)_G+\|\psi\|_G^2\\
&=(h,W^*Wh)_{\{1,2,\cdots,T-1\}\times\partial G}-2(h,W^*\psi)_{\{1,2,\cdots,T-1\}\times\partial G}+\|\psi\|_G^2.
\end{align*}
The gradient and Hessian matrix of $\mathcal{F}$ are 
\begin{align*}
\mathcal{F}'(h)=2W^*Wh-2W^*\psi, \quad 
\mathcal{F}''(h)=2W^*W.
\end{align*}
Since  the Hessian matrix is positive semi-definite, the function  $\mathcal{F}$ is convex. Consequently, a local minimum of $\mathcal{F}$ is also a global minimum.
To find a local minimum, we set  $\mathcal{F}'(h)=0$ to obtain the normal equation
\begin{align*}
W^*Wh_0=W^*\psi.
\end{align*}
This is an under-determined linear system, and its  minimum norm least squares solution is given by~\eqref{h0formula}.
\end{proof}

Note that $W^*W$, and $W^*\psi$ can be computed  explicitly from the ND map using Proposition~\ref{thm:WstarW} and Proposition~\ref{thm:Wharm}.
Therefore, the formula~\eqref{h0formula} provides an explicit construction of a boundary control.

\subsection{Constructing \texorpdfstring{$\mu$}{}}

Define 
\begin{align}\label{sapn_space_M}
M = {\rm span}\{\varphi\psi|_G: \varphi, \psi \in l^2(\bar{G}), \Delta_G \varphi(x) = \Delta_G \psi(x) = 0, \ x\in G\},
\end{align}
that is, $M$ is the span of all the products of harmonic function on $G$.
Note that as $\mu_x>0$ for each $x\in G$, the concept of harmonic functions is independent of the weight $\mu$,  and so is $M$.

Let us give the  proof of  Theorem \ref{orthogonal_projection_of_mu_onto_M}.

\begin{proof}
Given two harmonic functions $\psi(x)$ and $\varphi(x)$ on the graph, we can find a boundary control $h_0$ such that $u^{P_T^*h_0}(T) = \psi$ by applying  Proposition~\ref{explicit h0}. Consequently, the following identity holds:
\begin{align}\label{solvemu_xformula}
\sum\limits_{x\in G}\mu_x\psi(x)\varphi(x) = (\psi,\varphi)_G = (u^{P_T^*h_0}(T),\varphi)_G = (Wh_0,\varphi)_G = (h_0,W^*\varphi)_{\{1,2,\cdots,T-1\}\times\partial G}.
\end{align}
The right-hand side can be explicitly calculated using Proposition  \ref{thm:Wharm}. 
The left-hand side represents  the inner product of $\mu$ with the product $\psi\varphi$. By varying the harmonic functions $\psi,\varphi$, we can compute  the orthogonal projection of $\mu|_{G}$ onto the space $M$. 
\end{proof}

\section{Uniqueness and Reconstruction} \label{sec:uniqueandrecon}

The proof only reconstructs the orthogonal projection of $\mu\in l^2(G)$ on the subspace $M\subset l^2(G)$ but not $\mu$ itself. General speaking, $M\neq l^2(G)$, see the discussion below. This is in contrast to the well-known fact that the products of (continuum) harmonic functions on a bounded open set $\Omega\subset\mathbb{R}^n$ $(n\geq 2)$ is dense in $L^2(\Omega)$ \cite{MR590275}. However, we can identify some sufficient conditions so that $M = l^2(G)$ for a generic class of edge weight functions.

Let us index all the vertices $x\in \bar{G}$ so that the interior vertices are indexed by $x_1,\dots, x_{|G|}$ and the boundary vertices by $x_{|G|+1}, \dots,x_{|\bar{G}|}$.
Let $\varphi^{(j)}$ solve the boundary value problem 
\begin{equation}
    \Delta_G \varphi^{(j)}(x) = 0 \text{ for } x\in G, \qquad \varphi^{(j)}|_{\partial G} = \delta^{(j)}
\end{equation}
where $\delta^{(j)}$ is a function on $\partial G$ such that 
\begin{align*}
\delta^{(j)}=
\begin{cases}
1 \quad \text{on}~x_{|G|+j},\\
0 \quad \text{on}~\partial G\backslash \{x_{|G|+j}\}.
\end{cases}
\end{align*}
This boundary value problem admits a unique solution, see Lemma~\ref{thm:ellipticBVP}. 
Denote the space 
\begin{equation*}
H := {\rm span}\{\varphi^{(j)}\in l^2(\bar{G}): j = 1,2,\dots,|\partial G|\}.
\end{equation*}

\begin{lemm}
    H is the space of harmonic functions on $\bar{G}$.
\end{lemm}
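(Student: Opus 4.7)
The plan is to prove the equality $H = \{\varphi \in l^2(\bar{G}) : \Delta_G \varphi(x) = 0 \text{ for all } x \in G\}$ by mutual inclusion. The inclusion $H \subset \{\text{harmonic functions}\}$ is immediate from the construction: each basis element $\varphi^{(j)}$ is defined as a solution of $\Delta_G \varphi^{(j)} = 0$ on $G$, and any linear combination of harmonic functions is harmonic since $\Delta_G$ is linear.

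The substantive direction is the reverse inclusion. Given an arbitrary harmonic function $\varphi$ on $\bar{G}$, I would explicitly write down the candidate expansion
\begin{equation*}
\tilde{\varphi} := \sum_{j=1}^{|\partial G|} \varphi(x_{|G|+j})\, \varphi^{(j)} \in H
\end{equation*}
and show $\tilde{\varphi} = \varphi$ on all of $\bar{G}$. First, $\tilde{\varphi}$ is harmonic on $G$ as a linear combination of the harmonic functions $\varphi^{(j)}$. Second, restricted to $\partial G$, the defining property $\varphi^{(j)}|_{\partial G} = \delta^{(j)}$ yields $\tilde{\varphi}(x_{|G|+i}) = \sum_j \varphi(x_{|G|+j}) \delta^{(j)}(x_{|G|+i}) = \varphi(x_{|G|+i})$ for each $i$, so $\tilde{\varphi}|_{\partial G} = \varphi|_{\partial G}$. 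Hence $\varphi - \tilde{\varphi}$ is harmonic on $G$ and vanishes on $\partial G$.

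The key step is then to invoke uniqueness of the Dirichlet boundary value problem for $\Delta_G$, which is precisely the content of Lemma~\ref{thm:ellipticBVP} (the existence-uniqueness result cited immediately before the statement), to conclude $\varphi - \tilde{\varphi} \equiv 0$ on $\bar{G}$. This gives $\varphi = \tilde{\varphi} \in H$ and completes the proof.

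I do not anticipate a genuine obstacle here; the argument is a textbook superposition argument once one has well-posedness of the discrete Dirichlet problem in hand. The only point worth being careful about is that the $\varphi^{(j)}$ are defined as elements of $l^2(\bar{G})$ (i.e., on both interior and boundary vertices), so the equality $\tilde\varphi = \varphi$ must be checked on $\bar{G}$ rather than just $G$; this is precisely why the boundary values are matched first and uniqueness is applied to the difference on all of $\bar{G}$.
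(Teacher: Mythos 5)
Your argument is correct and is essentially identical to the paper's own proof: both construct the same candidate $\tilde{\varphi} = \sum_j \varphi(x_{|G|+j})\varphi^{(j)}$, match boundary values, and invoke the uniqueness part of Lemma~\ref{thm:ellipticBVP} to conclude $\varphi = \tilde{\varphi}$. No issues.
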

\begin{proof}
    It is clear that any function in $H$, as a linear combination of harmonic functions, is harmonic. Conversely, suppose $\varphi \in l^2(\bar{G})$ is an arbitrary harmonic function. Define
    $$
    \tilde{\varphi} := \sum^{|\partial G|}_{j=1} \varphi(x_{|G|+j}) \varphi^{(j)} \quad \in H.
    $$
    Then $\tilde{\varphi}$ is a harmonic function and $\tilde{\varphi}|_{\partial G} = \sum^{|\partial G|}_{j=1} \varphi(x_{|G|+j}) \delta^{(j)} = \varphi|_{\partial G}$. We conclude $\varphi = \tilde{\varphi}$ by Lemma~\ref{thm:ellipticBVP}, hence $\varphi \in H$.
\end{proof}

Using the indices, we can vectorize functions on $\bar{G}$ as follows: A function $u\in l^2(\bar{G})$ can be identified with a vector $\vec{u} = (u(x_1), \dots, u(x_{|\bar{G}|}))^T \in\mathbb{R}^{|\bar{G}|}$ via
\begin{equation} \label{eq:iden}
 l^2(\bar{G}) \ni u \leftrightarrow \vec{u}:= \left(
    \begin{aligned}
            \vec{u}_G \\
            \vec{u}_{\partial G} \\
    \end{aligned} \right)
    \in \mathbb{R}^{|G|}\times\mathbb{R}^{|\partial G|}.
\end{equation}
The vectorized space of harmonic functions is
$$
\vec{H} := {\rm span}\{\vec{\varphi}^{(j)}\in \mathbb{R}^{|\bar{G}|}: j = 1,2,\dots,|\partial G|\}.
$$
The vectorized space of products of harmonic functions on $G$ is
\begin{equation}
    \vec{M} := {\rm span} \left\{
    \vec{\varphi}^{(j)}_G \odot \vec{\varphi}^{(k)}_G \in \mathbb{R}^{|G|}: \; j, k = 1,2,\dots,|\partial G|
    \right\}
\end{equation}
where $\odot$ is the Hadamard product between two vectors.

Using the indices, the graph Laplacian $\Delta_G: \bar{G}\rightarrow G$ is identified with a block matrix
\begin{align*}
[\Delta_G] = ([\Delta_{G,G}], [\Delta_{G,\partial G}]) \;\in \mathbb{R}^{|G|\times |\bar{G}|},
\end{align*}
 where $
[\Delta_{G,G}] \in \mathbb{R}^{|G|\times |G|}, \; 
[\Delta_{G,\partial G}] \in \mathbb{R}^{|G|\times |\partial G|}.$
Then $\vec{\varphi} \in\mathbb{R}^{|\bar{G}|}$ is a vectorized harmonic function if and only if $[\Delta_G] \vec{\varphi} = 0$. The discussion along with the rank-nullity relation leads to the following conclusion:
\begin{lemm}\label{rank_of_H}
  \begin{equation*}
    \vec{H} = \ker [\Delta_G] \quad \text{ and } \quad \dim \vec{H} + {\rm rank} [\Delta_G] = |\bar{G}|.
\end{equation*}
\end{lemm}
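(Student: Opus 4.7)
The statement splits naturally into two assertions: the identification $\vec{H} = \ker[\Delta_G]$, and the dimension count. My plan is to establish the set equality first, and then invoke the rank–nullity theorem for the linear map $[\Delta_G]:\mathbb{R}^{|\bar{G}|}\to\mathbb{R}^{|G|}$, which will immediately give the second claim.

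For the inclusion $\vec{H}\subseteq \ker[\Delta_G]$, I would argue that each generator $\vec{\varphi}^{(j)}$ of $\vec{H}$ is by construction the vectorization of a harmonic function $\varphi^{(j)}$, so $[\Delta_G]\vec{\varphi}^{(j)}=0$ directly from the definition of $[\Delta_G]$ via the identification~\eqref{eq:iden}. Linearity of $[\Delta_G]$ then closes the inclusion for the entire span.

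For the reverse inclusion $\ker[\Delta_G]\subseteq \vec{H}$, I would take an arbitrary $\vec{\varphi}\in \ker[\Delta_G]$, identify it with a function $\varphi\in l^2(\bar{G})$, and observe that $[\Delta_G]\vec{\varphi}=0$ translates exactly to $\Delta_G\varphi(x)=0$ for every $x\in G$, i.e.\ $\varphi$ is harmonic on $\bar{G}$. Applying the previous lemma (which asserts that $H$ equals the full space of harmonic functions on $\bar{G}$) yields $\varphi\in H$, hence $\vec{\varphi}\in \vec{H}$. The only non-trivial ingredient here is the previous lemma, whose proof relies on the unique solvability of the Dirichlet problem stated in Lemma~\ref{thm:ellipticBVP}; since the previous lemma is already established in the excerpt, I can simply cite it.

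The dimension formula is then a one-line consequence of the rank–nullity theorem: viewing $[\Delta_G]$ as a linear map from $\mathbb{R}^{|\bar{G}|}$ to $\mathbb{R}^{|G|}$, we have $\dim\ker[\Delta_G] + \mathrm{rank}[\Delta_G] = |\bar{G}|$, and substituting $\ker[\Delta_G]=\vec{H}$ gives the desired identity. I do not foresee a genuine obstacle; the main point is simply to carefully track the identification~\eqref{eq:iden} between functions and vectors so that ``$\varphi$ harmonic'' and ``$[\Delta_G]\vec{\varphi}=0$'' are recognized as the same condition.
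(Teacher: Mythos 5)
Your proposal is correct and follows essentially the same route as the paper: the paper derives the lemma from the observation that $\vec{\varphi}$ is a vectorized harmonic function if and only if $[\Delta_G]\vec{\varphi}=0$, combined with the preceding lemma identifying $H$ with the space of harmonic functions and the rank--nullity theorem. You have merely written out in full the two inclusions that the paper leaves implicit, so there is nothing to correct.
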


The discussion in the rest of this section adapts ideas from \cite{MR3507557}. Let us construct a matrix $\mathbf{H}$ using $\vec{\varphi}^{(j)}_G \odot \vec{\varphi}^{(k)}_G$ as columns,  where  $j\leq k$, and $j, k = 1,\dots,|\partial G|$.
It is evident that $\mathbf{H}\in\mathbb{R}^{|G|\times \frac{|\partial G| (|\partial G|+1)}{2}}$ and the range of $\mathbf{H}$ is $\vec{M}$. Moreover, the following three statements are equivalent:
\begin{enumerate}
    \item $M=l^2(G)$.
    \item $\vec{M}=\mathbb{R}^{|G|}$.
    \item ${\rm rank} (\mathbf{H})=|G|$.
\end{enumerate}

\begin{rema} \label{rema:verticenum}
Since  the rank of a matrix cannot exceed the number of columns, a necessary condition for $rank(\mathbf{H})=|G|$ is that $\frac{|\partial G|(|\partial G|+1)}{2} \geq |G|$.
This condition  requires the graph to have sufficient boundary vertices relative to the interior vertices.
\end{rema}

Note that the entries in $\mathbf{H}$ depend on the edge weight function $w_{x,y}\in \mathbb{R}_+^{|\mathcal{E}|}$, since the definition of $\Delta_G$ involves $w_{x,y}$.
We have the following alternatives for ${\rm rank} (\mathbf{H})$ with respect to $w_{x,y}$.

\begin{prop}\label{Prop_three_conditions_of_rec}
If the graph satisfies $\frac{|\partial G|(|\partial G|+1)}{2}\geq |G|$, then exactly one of the following cases occurs:
\begin{enumerate}
    \item there is no edge weight function $w_{x,y}\in \mathbb{R}_+^{|\mathcal{E}|}$ such that ${\rm rank}(\mathbf{H})=|G|$;
    \item ${\rm rank}(\mathbf{H})=|G|$ for all edge weight functions $w_{x,y}\in \mathbb{R}_+^{|\mathcal{E}|}$ except for a set of measure zero.
\end{enumerate}
\end{prop}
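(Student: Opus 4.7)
The plan is to view everything as an algebraic function of the edge weights $w \in \mathbb{R}_+^{|\mathcal{E}|}$ and then invoke the standard fact that the zero set of a nontrivial real-analytic (indeed polynomial or rational) function has Lebesgue measure zero. Since the concept of harmonic function is independent of $\mu$, the block $[\Delta_{G,G}]$ (using the unnormalized Laplacian $\sum_{y\sim x} w(x,y)(u(y)-u(x))$) has entries that are polynomials in $w$, and similarly for $[\Delta_{G,\partial G}]$. The boundary value problem for $\varphi^{(j)}$ reduces to $[\Delta_{G,G}]\vec{\varphi}^{(j)}_G = -[\Delta_{G,\partial G}]\vec{\delta}^{(j)}$, which by Lemma~\ref{thm:ellipticBVP} is uniquely solvable for every $w\in\mathbb{R}_+^{|\mathcal{E}|}$. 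By Cramer's rule, each component of $\vec{\varphi}^{(j)}_G$ is a rational function in $w$, with denominator $\det[\Delta_{G,G}](w)$, which is a nonzero polynomial (nonzero because $[\Delta_{G,G}]$ is invertible for \emph{every} admissible $w$).

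The entries of $\mathbf{H}$, being products $\varphi^{(j)}(x)\varphi^{(k)}(x)$, are therefore rational functions of $w$ with a common polynomial denominator $D(w) := \det[\Delta_{G,G}](w)^2$. Clearing denominators, I would write $\mathbf{H}(w) = D(w)^{-1}\widetilde{\mathbf{H}}(w)$ where $\widetilde{\mathbf{H}}(w)$ has polynomial entries; on the full-measure set $\{w: D(w)\neq 0\}$, $\operatorname{rank}\mathbf{H}(w) = \operatorname{rank}\widetilde{\mathbf{H}}(w)$. Now the condition $\operatorname{rank}\widetilde{\mathbf{H}}(w) = |G|$ is equivalent to the existence of some $|G|\times|G|$ minor of $\widetilde{\mathbf{H}}(w)$ that does not vanish, and each such minor is a polynomial $m_\alpha(w)$ in $w$ (indexed by subsets $\alpha$ of the $\tfrac{|\partial G|(|\partial G|+1)}{2}$ columns). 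Note this uses the hypothesis $\tfrac{|\partial G|(|\partial G|+1)}{2}\geq |G|$, so such minors exist.

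The dichotomy then reads: either every polynomial $m_\alpha$ vanishes identically on $\mathbb{R}_+^{|\mathcal{E}|}$, in which case $\operatorname{rank}\mathbf{H}(w) < |G|$ for every $w$ (case (1)); or at least one $m_{\alpha_0}$ is a nonzero polynomial, in which case $\operatorname{rank}\mathbf{H}(w) = |G|$ on the complement of the algebraic set $\{w: m_{\alpha_0}(w) = 0\} \cup \{w: D(w)=0\}$, which has Lebesgue measure zero in $\mathbb{R}_+^{|\mathcal{E}|}$ (case (2)). These two alternatives are mutually exclusive and exhaustive by construction.

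The only mildly delicate point is verifying that $D(w) = \det[\Delta_{G,G}](w)^2 \not\equiv 0$ as a polynomial in $w$, but this is immediate from Lemma~\ref{thm:ellipticBVP}: the Dirichlet problem has a unique solution for each $w\in\mathbb{R}_+^{|\mathcal{E}|}$, so $\det[\Delta_{G,G}](w)\neq 0$ \emph{pointwise} on $\mathbb{R}_+^{|\mathcal{E}|}$, and a polynomial that is nonzero on an open set of $\mathbb{R}^{|\mathcal{E}|}$ cannot be identically zero. I expect no real obstacle here; the argument is essentially a one-line genericity statement dressed up with bookkeeping of the dependence of $\mathbf{H}$ on $w$.
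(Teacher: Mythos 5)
Your argument is correct and is essentially the paper's own proof: both reduce the dichotomy to whether the maximal minors of $\mathbf{H}$, viewed as functions of $w$, vanish identically, and both conclude via the fact that the zero set of a nontrivial polynomial (or real-analytic) function has Lebesgue measure zero. The only cosmetic difference is that you clear the denominator $\det[\Delta_{G,G}](w)^2$ to work with genuine polynomials, whereas the paper works directly with the rational minors $\det(\mathbf{H}_{:,\beta})$ and invokes their real-analyticity (Lemma~\ref{app:realanalytic}); the substance is identical.
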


\begin{proof}
Let $\beta$ be an arbitrary selection of $|G|$ columns from $\mathbf{H}$. Observe that
$$
{\rm rank}(\mathbf{H}) = |G| \quad \text{ if and only if } \quad \exists~\beta \text{ such that } \det(\mathbf{H}_{:,\beta})\neq 0,
$$
or equivalently,
$$
{\rm rank}(\mathbf{H}) < |G| \quad \text{ if and only if } \quad\det(\mathbf{H}_{:,\beta})=0, \ \ \forall\beta.
$$
We will use the fact that for a fixed $\beta$, $\det(\mathbf{H}_{:,\beta})$ is a real analytic function of $w_{x,y}$, see Lemma~\ref{app:realanalytic}.

If $\det(\mathbf{H}_{:,\beta})$ is the zero function for all $\beta$, that is, if $\det(\mathbf{H}_{:,\beta})\equiv 0$ regardless of $w_{x,y}$ for all $\beta$, then there is no edge weight function such that ${\rm rank}(\mathbf{H}) = |G|$ holds, accounting for Case (1). 
On the other hand, if there exists $\beta$ such that $\det(\mathbf{H}_{:,\beta})$ is not the zero function, that is $\det(\mathbf{H}_{:,\beta})\not\equiv 0$, then it is a non-trivial real analytic function of $w_{x,y}$, hence the zeros
$$
S_\beta := \{w_{x,y}\in \mathbb{R}_+^{|\mathcal{E}|} : \det(\mathbf{H}_{:,\beta})=0\}
$$
form a set of measure zero~\cite{MR4070868}. The collection of edge weight functions that ensure ${\rm rank}(\mathbf{H}) < |G|$ is
\begin{align*}
     & \{w_{x,y}\in \mathbb{R}_+^{|\mathcal{E}|} : {\rm rank}(\mathbf{H}) < |G|\} \\
     = & \{w_{x,y}\in \mathbb{R}_+^{|\mathcal{E}|} : \det(\mathbf{H}_{:,\beta})=0, \; \forall\beta\} \\
      = & \bigcap_\beta S_\beta. 
\end{align*}
This is a finite intersection of sets of measure zero, hence is also of measure zero.

\end{proof}

We remark that Case (1) in Proposition~\ref{Prop_three_conditions_of_rec} can indeed occur. Here is an example.

\begin{exam} 
For the graph in Fig. \ref{fig_weight_counterexample}, $|\partial G|= |G| = 2$. It is easy to see that the graph satisfies Assumption~\ref{foliation condition} and $\frac{|\partial G|(|\partial G|+1)}{2}=3>|G|=2$. 
For $x_1,x_2\in G$ and $z_1,z_2\in \partial G$, matrix $\mathbf{H}_{:,i+(j-1)|\partial G|}=[{\Delta_{G,G}^{-1}}\Delta_{G,\partial G}]_{:,i}\odot[{\Delta_{G,G}^{-1}}\Delta_{G,\partial G}]_{:,j}$, where $1\leq i,j\leq|\partial G|$ and $\mathbf{H}$ is independent on vertex weight. $\mathbf{H}$ is obviously a $|G|\times|\partial G|^2$ matrix.
In this case,
\begin{align*}
\begin{split}
\mathbf{H}=\frac{1}{(w_{x_1,z_1}+w_{x_1,z_2})^2}
\begin{pmatrix}
 w_{x_1,z_1}^2 & w_{x_1,z_1}w_{x_1,z_2} & w_{x_1,z_1}w_{x_1,z_2} & w_{x_1,z_2}^2 \\
 w_{x_1,z_1}^2 & w_{x_1,z_1}w_{x_1,z_2} & w_{x_1,z_1}w_{x_1,z_2} & w_{x_1,z_2}^2
\end{pmatrix},
\end{split}
\end{align*}
and ${\rm rank} (\mathbf{H})=1<|G|=2$.
\end{exam}

\begin{figure}[htbp]
\centering
\includegraphics[scale=0.66]{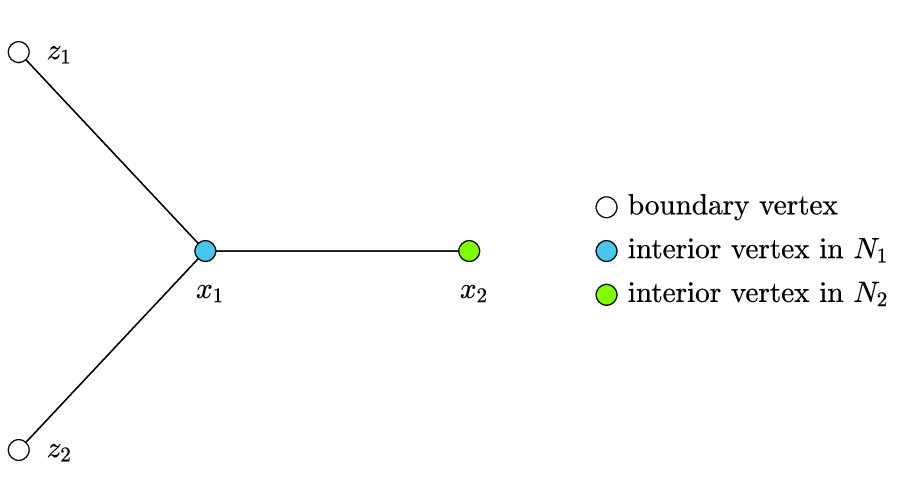}
\caption{\small\small{The interior vertex weight cannot be reconstructed by our method. }}
\label{fig_weight_counterexample}
\end{figure}

\section{Reconstruction Algorithm}\label{sec:Reconstruction Algorithm}

In this section, we implement the reconstruction procedure and validate it using numerical examples. Here, the reconstruction procedure is summarized in Algorithm~\ref{alg:Framwork}.

\begin{algorithm}[htb]
\caption{Reconstruction Algorithm of the interior vertex weight.}
\label{alg:Framwork}
\begin{algorithmic}[1] 
\REQUIRE Time reversal operator $\mathscr{R}$, operator $\mathscr{J}$, truncation operator $P_T$, boundary vertex weight $\mu|_{\partial G}$ and the Neumann boundary spectral data $\{(\lambda_j,\phi_j|_{\partial G})_{j=1}^{|G|}\}$.
\vspace{1ex}

    \STATE Calculate the ND map $\Lambda_{\mu}$ from the Neumann boundary spectral data by \begin{align*}
\Lambda_\mu f(t,z) = u^f(t,z) = \sum^{|G|}_{j=1} \sum\limits_{k=1}^tc_{k}(f(t+1-k),\phi_j)_{\partial G}\phi_j(z)
-\frac{\mu_zf(t,z)}{w(x,z)},\quad x\sim z,
\end{align*}
for $t\geq1$,
where $c_1=0,c_2=-1$, and $c_k=(2-\lambda_j)c_{k-1}-c_{k-2}$ for integer $k\geq3$ (see (\ref{formula_of_Neubsd_to_express_NDmap})).

    \STATE Calculate the operator $W^*W=\mathscr{R}\Lambda_{\mu,T}\mathscr{R}\mathscr{J}P_T^*-\mathscr{J}\Lambda_{\mu}P_T^*$  (see \eqref{WstarWformula}).
    
    \STATE Calculate the operator $W^*$ operating on a harmonic function $\psi$ on a graph 
    \begin{align*}
    W^*\varphi=(\mathscr{R}\Lambda_{\mu,T}\mathscr{R}\mathscr{J}\tau_N-\mathscr{J}\tau_D)\varphi\quad \text{({\rm see } \eqref{Wstarformula})}.
    \end{align*}

    \STATE Solve $h_0$ from $W^*Wh_0=W^*\psi$, where $\psi$ is a harmonic function on the graph  (see (\ref{h0formula})).
    
    \STATE Construct the harmonic functions  $\psi(x)$ and $\varphi(x)$ on the graph to reconstruct $\mu|_G$ from
\begin{align*}
(u^{P_T^*h_0}(T),\varphi)_G=(\psi,\varphi)_G=\sum\limits_{x\in G}\mu_x\psi(x)\varphi(x)=(h_0,W^*\varphi)_{\{1,2,\cdots,T-1\}\times\partial G} \quad 
\text{
({\rm see} \eqref{solvemu_xformula})}.
\end{align*}
\RETURN $\mu|_{G}$.

\vspace{1ex}
\ENSURE  The interior vertex weight, $\mu|_G$;

\end{algorithmic}
\end{algorithm}

To implement the algorithm, we will index the vertices so that functions on graphs can be identified with vectors, and linear operators on graphs can be identified with matrices.
Recall that the vertices of $\bar{G}$ are ordered in the way that the interior vertices are indexed by $x_1,\dots, x_{|G|}$ and the boundary vertices by $x_{|G|+1}, \dots,x_{|\bar{G}|}$. For a spatial function $u\in l^2(\bar{G})$, it is vectorized as in \eqref{eq:iden}.
For a spatiotemporal function $f(t,x)$ with $t=0,1,2,\dots, T$ and $x\in\bar{G}$, we follow the lexicographical order to identify
\begin{equation*}
f \leftrightarrow \vec{f} := (f(0,\cdot), f(1,\cdot), \cdots, f(T,\cdot))^*,
\end{equation*}
where $*$ denotes the adjoint, which is the transpose for a real vector. Using such an ordering, linear operators can be identified with matrices. For instance, the ND map $\Lambda_\mu$ is realized as an ND matrix via the following identification
\begin{equation*}
\Lambda_\mu: l^2(\{0,1,\dots,2T\}\times\partial G) \rightarrow l^2(\{0,1,\dots,2T\}\times\partial G)  \leftrightarrow  [\Lambda_{\mu}]\in \mathbb{R}^{|\partial G|(2T+1) \times |\partial G|(2T+1) }
\end{equation*}
where we use the square parenthesis $[\cdot]$ to indicate matrix representations of linear operators.

The algorithm is implemented in the following steps.

{\bf{Step 1: Assemble the Discrete Neumann-to-Dirichlet matrix.}} 
Given the Neumann boundary spectral data, the ND matrix can be readily calculated  by following the formula presented in ~\eqref{formula_of_Neubsd_to_express_NDmap}.
See Fig. \ref{fig_ND_map} for an example of the ND matrix.

\begin{figure}[htbp]
\centering
\includegraphics[scale=0.15]{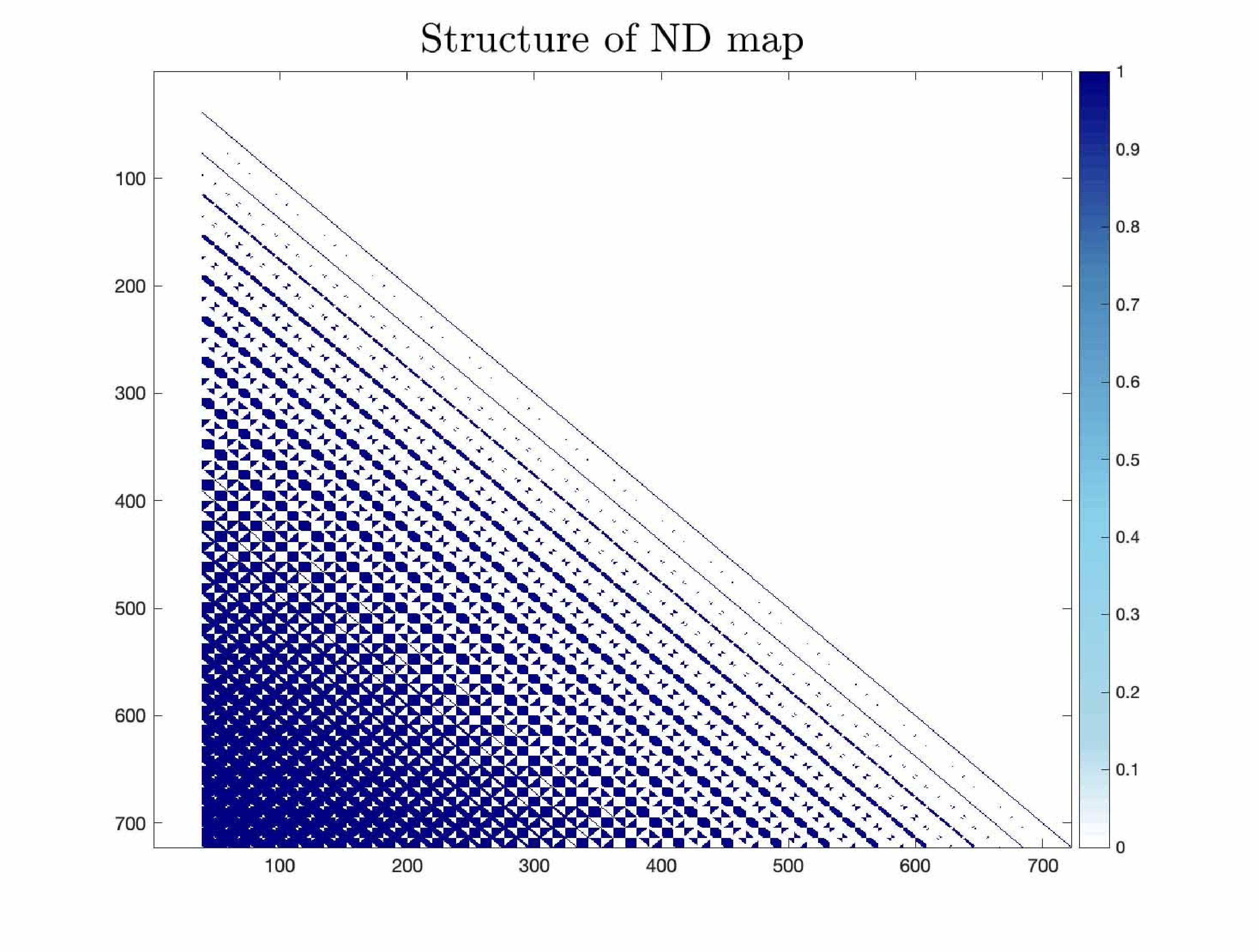}
\caption{\small\small{The structure of the ground-truth ND map $[\Lambda_{\mu}]$ for the graph $R_{m,n}$ when  $m=10, n=9, |\partial G|=38, T=9$. Blank space represents zero element. Blue areas represent nonzero values.}}
\label{fig_ND_map}
\end{figure}

{\bf{Step 2: Calculate the matrix $[W^{*}W]$.}} 
Using the ordering of the vertices, the operators 
\begin{align*}
 \mathscr{R}: & l^2\left(\{1,\cdots,T-1\}\times{\partial G}\right)\longmapsto l^2(\{1,\cdots,T-1\}\times{\partial G}), \\   
 \mathscr{J}: & l^2(\{0,1,\cdots,2T\}\times{\partial G})\longmapsto l^2(\{1,\cdots,T-1\}\times{\partial G}) ,\\
 P_T: & l^2(\{0,\cdots,2T\}\times{\partial G})\longmapsto  l^2(\{1,\cdots,T-1\}\times{\partial G}),
\end{align*}
are represented by the matrices
\begin{equation*}
[\mathscr{R}]\in \mathbb{R}^{|\partial G|(T-1)\times |\partial G|(T-1)}, \quad 
[\mathscr{J}]\in \mathbb{R}^{|\partial G|(T-1)\times |\partial G|(2T+1)}, \quad
[P_T]\in \mathbb{R}^{|\partial G|(T-1)\times |\partial G|(2T+1)}.
\end{equation*}
The matrix representation of the adjoint operator $P^*_T$ is the transpose matrix $[P_T]^*$. 
Following~\eqref{formula_of_Neubsd_to_express_NDmap}, the matrix $[W^{*}W]$ is computed as the matrix product:
\begin{equation}\label{WstarW}
[W^{*}W]=[\mathscr{R}][\Lambda_{\mu,T}][\mathscr{R}][\mathscr{J}][P_T^{*}]-[\mathscr{J}][\Lambda_{\mu}][P_T^{*}] \quad \in \mathbb{R}^{|\partial G|(T-1)\times |\partial G|(T-1) }.
\end{equation}

{\bf{Step 3: Calculate the matrix $[W^{*}]$.}}
Using the ordering of the vertices in $\bar{G}$ and the vectorization~\eqref{eq:iden}, the matrix representations of the Dirichlet trace operator $\tau_D$ and the Neumann trace operator $\tau_N$ are
\begin{equation*}
[\tau_D]=( O\quad I)\ \in  \mathbb{R}^{|\partial G|\times |\bar{G}|}, \quad [\tau_N]=\tau_D\cdot [\Delta_{\bar{G}}]  \in  \mathbb{R}^{|\partial G|\times |\bar{G}|},
\end{equation*}
where $O\in \mathbb{R}^{|\partial G| \times |G|}$ is the zero matrix, $I\in\mathbb{R}^{|\partial G|\times |\partial G|}$ is the identity matrix, and $\cdot$ denotes matrix multiplication.
$[\Delta_{\bar{G}}]\in \mathbb{R}^{|\bar{G}| \times |\bar{G}|}$ is the matrix form of a continuation operator of the graph Laplace $\Delta_{G}$
that its domain of definition is extended from $G$ to $\bar{G}$. 
Following~\eqref{Wstarformula}, we have
\begin{equation}\label{Wstar}
[W^{*}] = [\mathscr{R}][\Lambda_{\mu,T}][\mathscr{R}][\mathscr{J}] (I_{2T+1}\otimes\tau_N) - [\mathscr{J}](I_{2T+1}\otimes\tau_D) \quad \quad\in \mathbb{R}^{|\partial G|(T-1)\times (2T+1)|\bar{G}| },
\end{equation}
where $I_{2T+1} \in\mathbb{R}^{(2T+1)\times (2T+1)}$ is the identity matrix, and $\otimes$ denotes the matrix tensor product. The tensor product is needed as $\tau_D,\tau_N$ are spatial operators while the other operators are spatiotemporal.

{\bf{Step 4: Calculate the Boundary Control $\vec{h}_0$.}} For any harmonic function $\vec{\psi}$, the boundary control $\vec{h}_0 \in \mathbb{R}^{|\partial G|(T-1)\times 1}$ is given by Proposition~\ref{explicit h0}:
\begin{align}\label{h0}
\vec{h}_0 = [W^* W]^\dagger [W^*] (\mathbf{1}_{2T+1}\otimes \vec{\psi}),
\end{align}
where $\mathbf{1}_{2T+1} \in\mathbb{R}^{2T+1}$ denotes the vector of all one's, that is, $\mathbf{1}_{2T+1} = (1,1,\dots, 1)^T$. Again, the tensor product is needed to turn a spatial function into a spatiotemporal one.

{\bf{Step 5: Solve for $\vec{\mu}_{G}$.}} 
Based on~\eqref{solvemu_xformula}, it remains to solve the linear system 
\begin{equation}\label{mu}
 [\vec{\varphi}_G \odot \vec{\psi}_G]^* \vec{\mu}_{G}= \vec{h}^*_0 [W^*](\mathbf{1}_{2T+1}\otimes\vec{\varphi})
\end{equation}
for various vectorized harmonic functions $\vec{\varphi}$ and $\vec{\psi}$. Note that there is a total of ($|\bar{G}|-{\rm rank}(\Delta_G)$) distinct harmonic functions in $G$ by Lemma \ref{rank_of_H}. This results in a linear system, whose reduced row echelon form is calculated using the MATLAB command `rref' in order to obtain $\vec{\mu}_{G}$.

\section{Numerical Experiments}\label{sec:Numerical Expreiments}

In this section, we validate the algorithm using several numerical examples in MATLAB\textsuperscript{TM}. 
We will use two types of discrepancy metrics to measure the difference between quantities. 
The first step of the algorithm requires construction of the ND map, which is represented by a matrix. We will use the 
Frobenius relative norm error ($\mathrm {FRNE}$) 
$$\mathrm {FRNE}=\frac{\lVert [\Lambda_{\mu}]-[\Lambda_\mu']  \rVert_{F} }{\lVert [\Lambda_{\mu}]\rVert_{F} }*100\%$$
to quantify the discrepancy between matrices. Here, $[\Lambda_{\mu}]$ denotes the ground truth ND map and $[\Lambda_\mu']$ denotes the reconstructed ND map based on the algorithm.
For the vertex weight, it is vectorized in the calculation, and the reconstruction accuracy is quantified by the absolute error
 \[\mathrm {Error}:=|\vec{\mu}_x -\vec{\mu}_x'|,\]
as well as the $L_2$-relative norm error ($\mathrm {L_2RNE}$)
\[\mathrm {L_2RNE}:=\frac{\lVert \vec{\mu}_x-\vec{\mu}_x'\rVert_{2} }{\lVert\vec{\mu}_x \rVert_{2} }* 100\%,\]
where $\vec{\mu}_x$ denotes the ground truth and $\vec{\mu}_x'$ denotes the reconstruction.

\subsection{Experiment 1: the graph \texorpdfstring{$R_{m,n}$}{}}

In this experiment, we set the following parameters: $m=10, n=9, |\partial G|=38, |G|=90, T=9$, $w_{x,y}=0.25$, and the ground-truth vertex weight is $\mu_x= \mathrm {deg}(x)$ for all $x \in \bar{G}$.

\textbf{Case 1.1: No Noise.} We implement Algorithm~\ref{alg:Framwork} without noise to validate its efficacy.
The first step of the algorithm assembles the discrete ND map using the Neumann boundary spectral data following~\eqref{formula_of_Neubsd_to_express_NDmap}.  
This can be done with high precision. In fact, let $\Lambda_\mu$ be the ground-truth ND map, and $\Lambda_\mu'$ be the reconstructed ND map using the Neumann boundary spectral data. 
The $\mathrm {FRNE}$ between them is 
$5.9501*10^{-13}\%.$

When solving the equation (\ref{h0}), the matrix $[W^{*}W]$ is ill conditioned, see Fig. \ref{fig_case3_singular} for its singular values. We employ the truncated SVD regularization along with the `lsqminnorm' command in MATLAB to find the minimum norm solution as $\vec{h}_0$.
When solving the  linear equations (\ref{mu}),  we find that ${\rm rank}(\Delta_G)=90$. By Lemma~\ref{rank_of_H}, we conclude the vectorized space of harmonic functions $\vec{H}$ has dimension $|\bar{G}|-{\rm rank}(\Delta_G)=38$. 
In this case, from MATLAB, there are 128 linearly independent vectors of the form $\vec{\varphi}\odot\vec{\psi}$ with $\vec{\varphi}, \vec{\psi}\in \vec{H}$.
Here, we use these $128$ linearly independent vectors as columns to construct the matrix  $[\vec{\varphi}_G \odot \vec{\psi}_G]$ in order to solve~\eqref{mu}. However, the matrix $[\vec{\varphi}_G \odot \vec{\psi}_G]$ is again ill conditioned, as is shown in  Fig. \ref{fig_case3_singular}, so we apply the truncated SVD regularization to find the minimum norm solution.
The reconstruction and the errors are shown in Fig. \ref{fig_case3}.

\begin{figure}[htbp]
\centering
\includegraphics[scale=0.39]{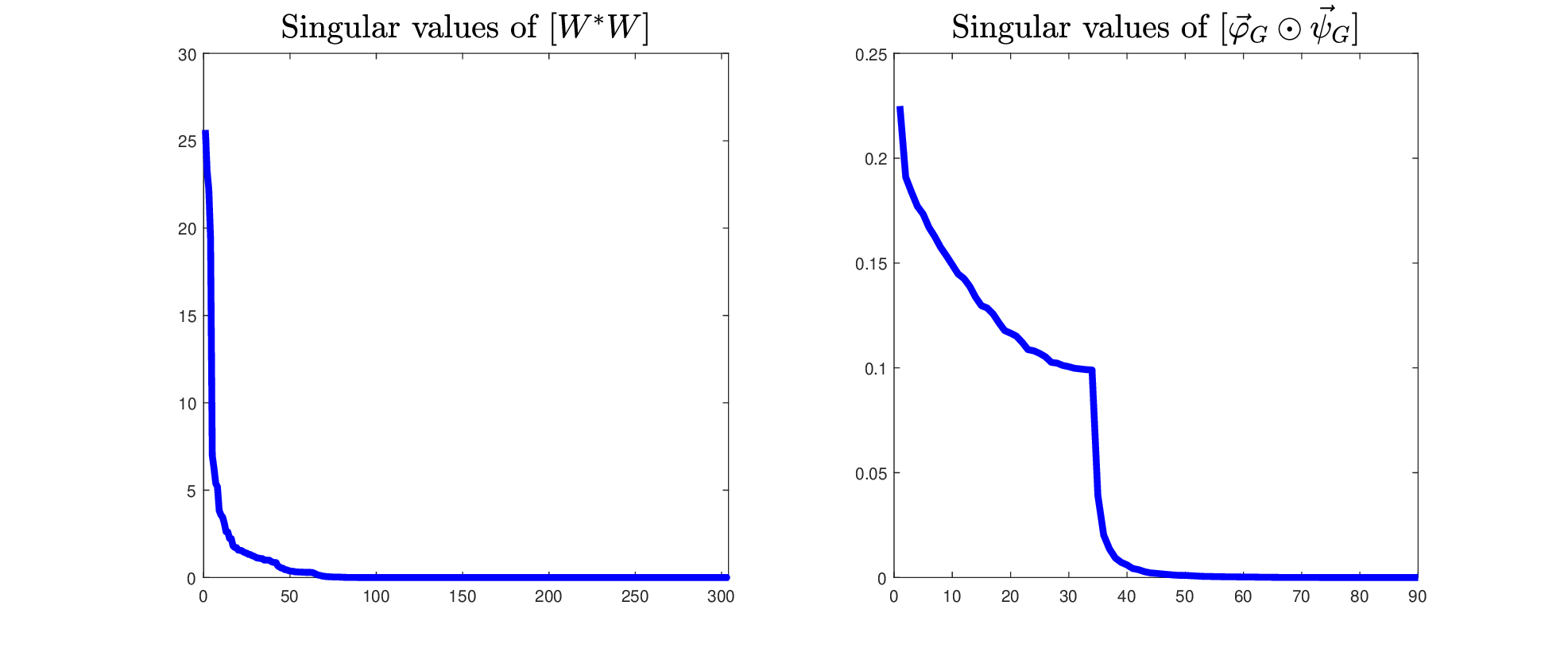}
\caption{\small\small{Experiment 1: The singular values of $[W^{*}W]$ and $[\vec{\varphi}_G \odot \vec{\psi}_G]$. The minimum singular values are $1.3448*10^{-15}$ and $2.5176*10^{-9}$, respectively.}}
\label{fig_case3_singular}
\end{figure}

\begin{figure}[htbp]
\centering
\includegraphics[width=15.5cm,height=4.5cm]{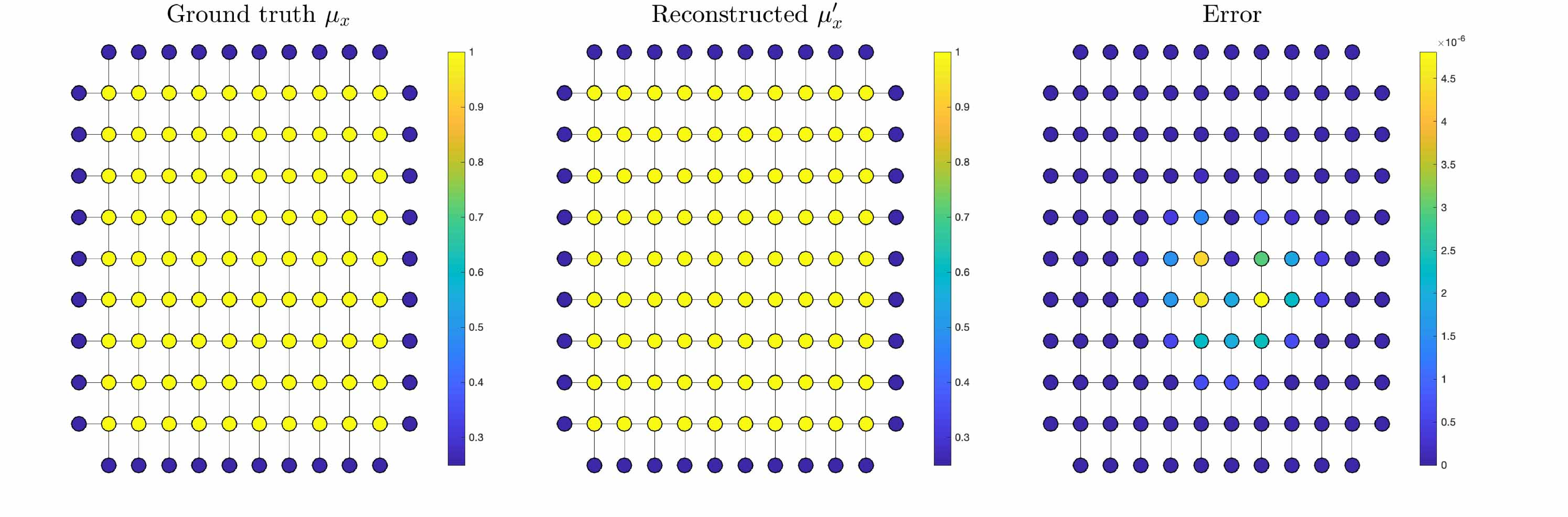}
\caption{\small\small{Experiment 1: The ground-truth $\mu_x$, the reconstructed $\mu_x'$ and the  absolute error.
  $\mathrm {L_2RNE}=1.0983*10^{-4}\%$.}}
\label{fig_case3}
\end{figure}

\bigskip

\textbf{Case 1.2: Gaussian Noise.} Next, we validate the stability of the algorithm by adding Gaussian noise to the Neumann boundary spectral data. The noisy spectral data in use is of the form $\big((1+\varepsilon)\lambda_j,(1+\varepsilon)\phi_{j}|_{\partial G}\big)^{|G|}_{j=1}$, where $\phi$'s are the normalized Neumann eigenfunctions and $\varepsilon \sim \mathcal{N} (0, \sigma)$ is a zero mean Gaussian random variable/vector. We choose $\sigma\in [0.1\%, 0.2\%, 0.5\%]$  in the experiment, respectively. 

In the presence of noise, the FRNEs for reconstructing $\Lambda_\mu$ are $0.086344\%$, $0.17\%$, $3.23\%$, respectively; the FRNEs for reconstructing $[W^{*}W]$ are $8.5562*10^{-2}\%$, $0.16\%$ , $4.78\%$, respectively.  
When applying the truncated SVD to solve \eqref{h0}, the thresholds for singular value truncation are $0.003, 0.005,$  and $0.007$, respectively.
When applying the truncated SVD to solve \eqref{mu}, the thresholds for singular value truncation are $0.001, 0.001,$  and $0.003$, respectively.
Here, different empirical thresholds are taken to achieve optimal results.
The reconstruction $\mu'_x$ and the absolute errors are shown in Fig.\ref{fig_case3_noise}, where the $\mathrm {L_2RNE}$s are $12.6\%$, $12.76\%$, $17.75\%$ respectively.

\begin{figure}[htbp]
\centering
\includegraphics[width=15.5cm,height=11.5cm]{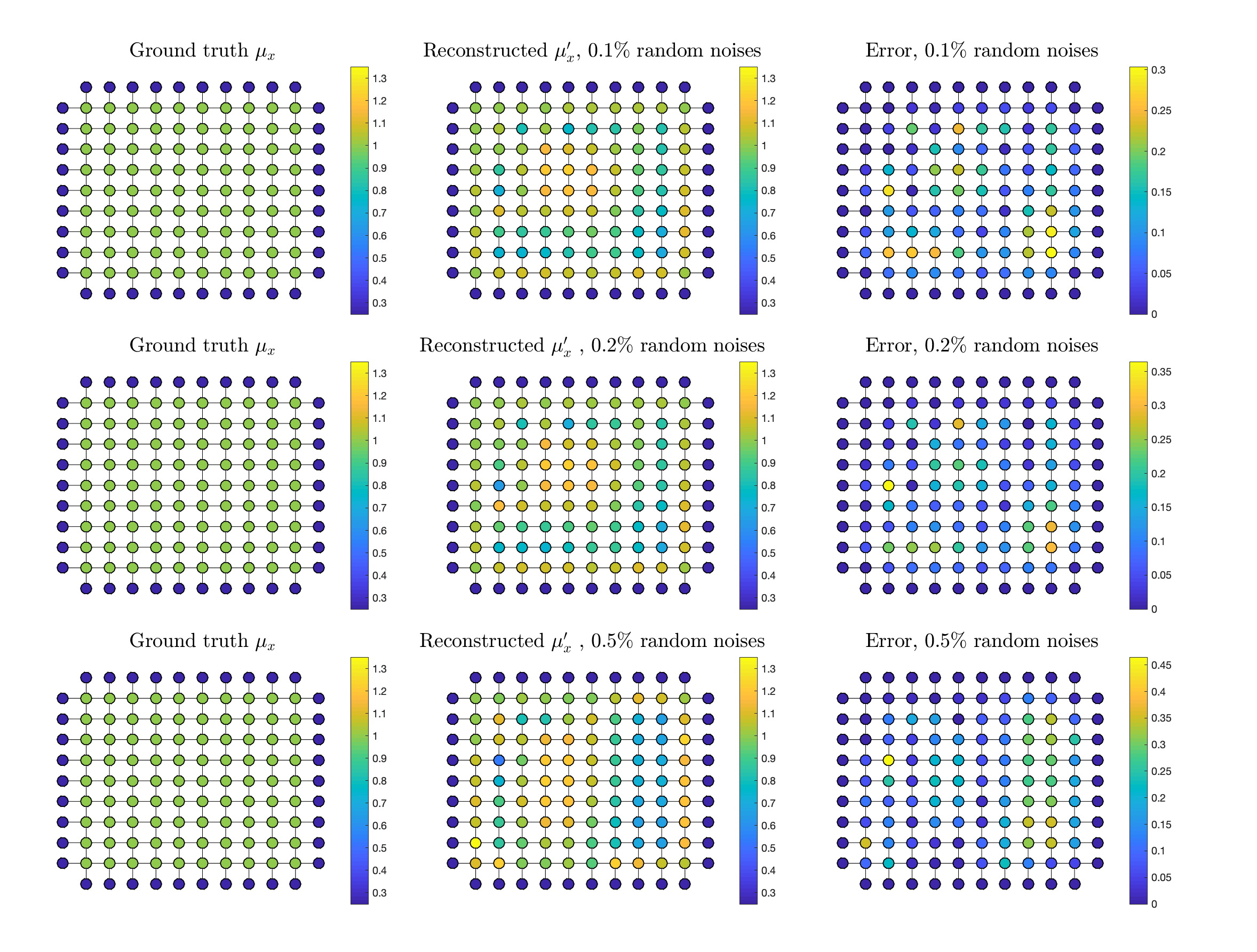}
\caption{\small\small{Experiment 1: Reconstructions and absolute errors in the presence of $0.1\%$, $0.2\%$ and $0.5\%$ Gaussian random noise.
 The $\mathrm {L_2RNE}$s are equal to $12.6\%$, $12.76\%$ and $17.75\%$ respectively.
For comparison, we set the same color bar for the ground truth and the reconstruction.}}
\label{fig_case3_noise}
\end{figure}

%==============================================================================

\subsection{Experiment 2 : the graph \texorpdfstring{$T_{m,n}$}{}}

In this experiment, we set the following parameters: $m=10, n=9, |\partial G|=38, |G|=90, T=9$, $w_{x,y}=\frac{1}{2}(\mathrm {deg}(x)+\mathrm{deg}(y))$,  and the ground-truth vertex weight is $\mu_x= 1+0.5\sin(x)+0.5\cos(x)$ for all $x = 1,2,\cdots,|\bar{G}|$.

\textbf{Case 2.1: No Noise.} We implement Algorithm~\ref{alg:Framwork} without noise to validate its efficacy.
The $\mathrm {FRNE}$ between the reconstructed ND map $\Lambda_\mu'$ using the Neumann boundary spectral data and the ground truth ND map $\Lambda_\mu$ is  $3.3222*10^{-12}\%$.

When solving the equation (\ref{h0}), the matrix $[W^{*}W]$ is ill conditioned, see
Fig. \ref{fig_three_singular} for its singular values.
We employ the truncated SVD regularization along with the `lsqminnorm' command in MATLAB to find the minimum norm solution as $\vec{h}_0$.
When solving the linear equations (\ref{mu}), we find that ${\rm rank}(\Delta_G)=90$.  
By Lemma~\ref{rank_of_H}, we conclude the vectorized space of harmonic functions $\vec{H}$ has dimension $|\bar{G}|-rank(\Delta_G)=38$.
In this case, from MATLAB, there are
 $128$ linearly independent vectors of the form $\vec{\varphi}\odot\vec{\psi}$ with $\vec{\varphi}, \vec{\psi}\in \vec{H}$.
 We use these $128$ linearly independent vectors as columns to construct the matrix  $[\vec{\varphi}_G \odot \vec{\psi}_G]$ in order to solve (\ref{mu}).
 However, the matrix $[\vec{\varphi}_G \odot \vec{\psi}_G]$ is again ill conditioned, as is shown in  Fig. \ref{fig_three_singular}, so we apply the truncated SVD regularization to find the minimum norm solution.
The reconstruction and the error are shown in Fig. \ref{fig_three}.

\begin{figure}[htbp]
\centering
\includegraphics[scale=0.39]{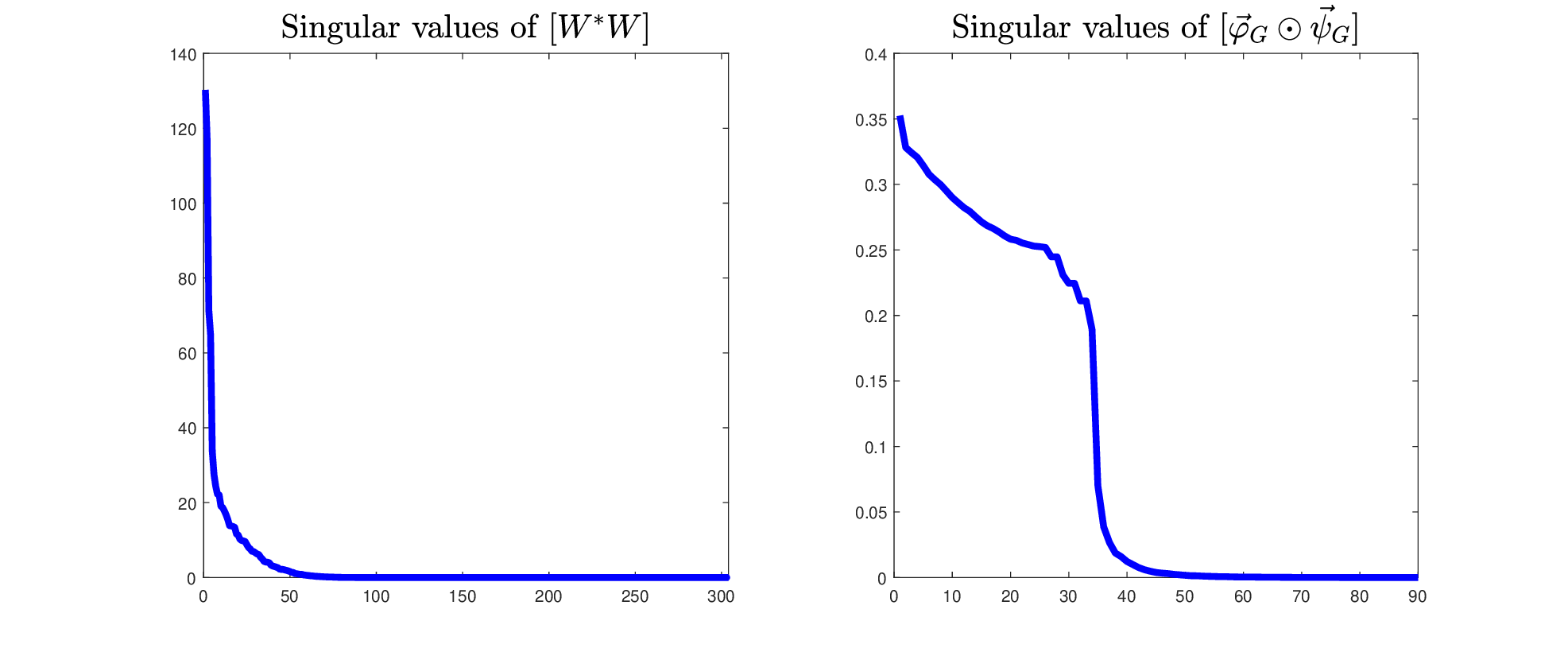}
\caption{\small\small{Experiment 2: The singular values of $[W^{*}W]$ and $[\vec{\varphi}_G \odot \vec{\psi}_G]$. The minimum singular values are $4.2386*10^{-15}$ and $1.1704*10^{-7}$, respectively.}}
\label{fig_three_singular}
\end{figure}

\begin{figure}[htbp]
\centering
\includegraphics[width=15.5cm,height=4.5cm]{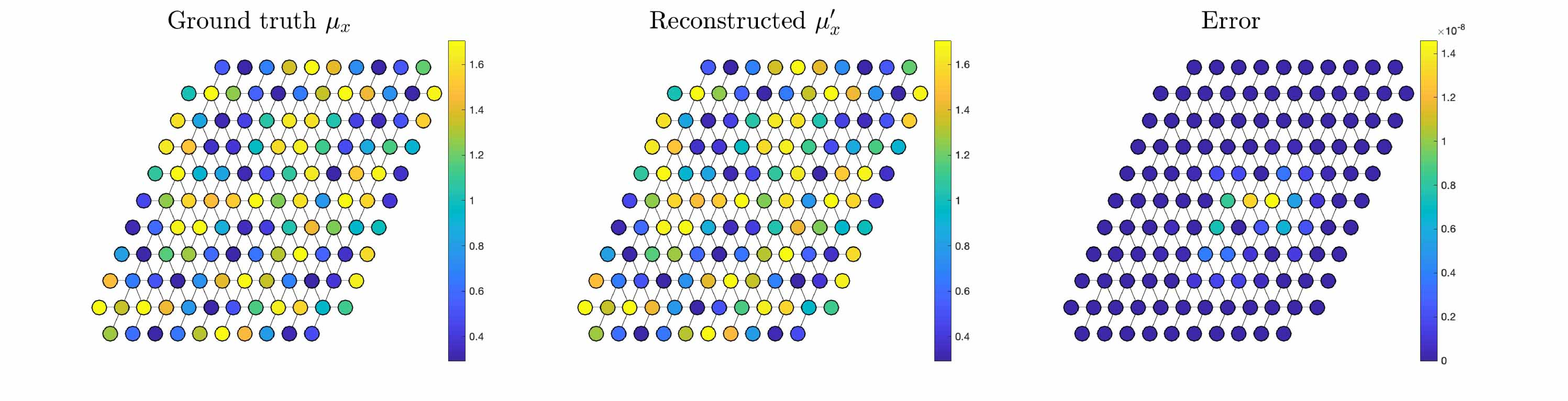}
\caption{\small\small{Experiment 2: The ground-truth $\mu_x$, the reconstructed $\mu_x'$ and the absolute errors.
  $\mathrm {L_2RNE}=2.3817*10^{-7}\%$.}}
\label{fig_three}
\end{figure}

\bigskip

\textbf{Case  2.2: Gaussian Noise.} The noisy spectral data in use is of the form $\big((1+\varepsilon)\lambda_j,(1+\varepsilon)\phi_{j}|_{\partial G}\big)^{|G|}_{j=1}$, where $\phi$'s are the normalized Neumann eigenfunctions and $\varepsilon \sim \mathcal{N} (0, \sigma)$ is a zero mean Gaussian random variable/vector. We choose $\sigma\in [0.1\%, 0.2\%, 0.5\%]$  in the experiment, respectively.

In the presence of noise, the FRNEs for reconstructing $\Lambda_\mu$ are $\mathrm {FRNE}$s are $0.46\%$, $0.94\%$, $2.01\%$, respectively; the $\mathrm {FRNE}$s for reconstructing $[W^{*}W]$ are $0.38\%$, $0.8\%$, $1.02\%$, respectively.
When applying the truncated SVD to solve \eqref{h0}, the thresholds for singular value truncation are $0.001$, $0.005$, and $0.003$, respectively.
When applying the truncated SVD to solve \eqref{mu}, the thresholds for singular value truncation are $0.001, 0.001$, and $0.003$, respectively.
Here, different empirical thresholds are taken to achieve optimal results.
The reconstruction $\mu'_x$ and the absolute errors are shown in Fig. \ref{fig_three_noise}, where the $\mathrm {L_2RNE}$s are $27.81\%$, $28.17\%$ and $32.98\%$, respectively.

 \begin{figure}[htbp]
\centering
\includegraphics[width=15.5cm,height=11.5cm]{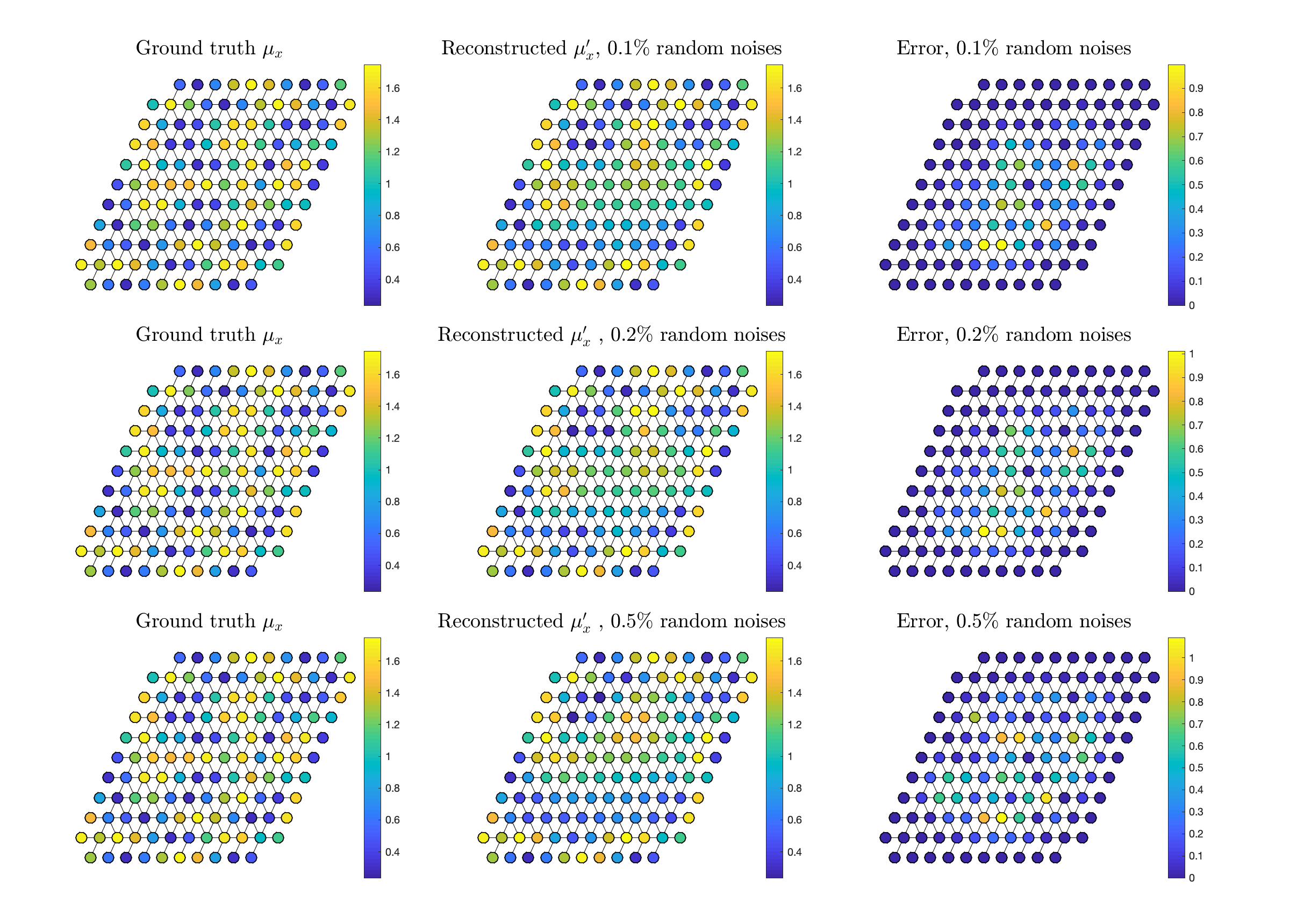}
\caption{\small\small{Experiment 2: Reconstructions and absolute errors in the presence of $0.1\%$, $0.2\%$ and $0.5\%$ Gaussian random noise.  The $\mathrm {L_2RNE}$ are equal to $27.81\%$, $28.17\%$ and $32.98\%$ respectively.
For comparison, we set the same color bar for the ground truth value and the reconstructed value.}}
\label{fig_three_noise}
\end{figure}

\subsection{Experiment 3 : the graph \texorpdfstring{$H_{m,n}$}{}}

In this experiment, we set the following parameteres: $m=9, n=4, |\partial G|=28, |G|=90, T=9$, $w_{x,y}=\frac{1}{2}(\mathrm {deg}(x)+\mathrm{deg}(y))$ and the ground-truth vertex weight is $\mu_x= 1$ for all $x\in \bar{G}$.

\textbf{Case 3.1: No Noise.} We implement Algorithm~\ref{alg:Framwork} without noise to validate its efficacy.
The $\mathrm {FRNE}$ between the reconstructed ND map $\Lambda_\mu'$ using the Neumann boundary spectral data and the ground truth ND map $\Lambda_\mu$ is  $5.3689*10^{-13}\%$.

When solving the equation (\ref{h0}), the matrix $[W^{*}W]$ is ill conditioned, see
Fig. \ref{fig_six_singular} for its singular values.
We employ the truncated SVD regularization along with the `lsqminnorm' command in MATLAB to find the minimum norm solution as $\vec{h}_0$.
When solve the linear equations (\ref{mu}), we find that ${\rm rank}(\Delta_G)=90$.  
By Lemma~\ref{rank_of_H}, we conclude the vectorized space of harmonic functions $\vec{H}$ has dimension $|\bar{G}|-rank(\Delta_G)=28$.
In this case, from MATLAB, there are
 $118$ linearly independent vectors of the form $\vec{\varphi}\odot\vec{\psi}$ with $\vec{\varphi}, \vec{\psi}\in \vec{H}$.
 Here, we use these $118$ linearly independent vectors as columns to construct the matrix  $[\vec{\varphi}_G \odot \vec{\psi}_G]$ in order to solve (\ref{mu}).
 However, the matrix $[\vec{\varphi}_G \odot \vec{\psi}_G]$ is again ill conditioned, as is shown in Fig. \ref{fig_six_singular}, so we apply the truncated SVD regularization to find the minimum norm solution.
The reconstruction and the errors are shown in Fig. \ref{fig_six}.

\begin{figure}[htbp]
\centering
\includegraphics[scale=0.39]{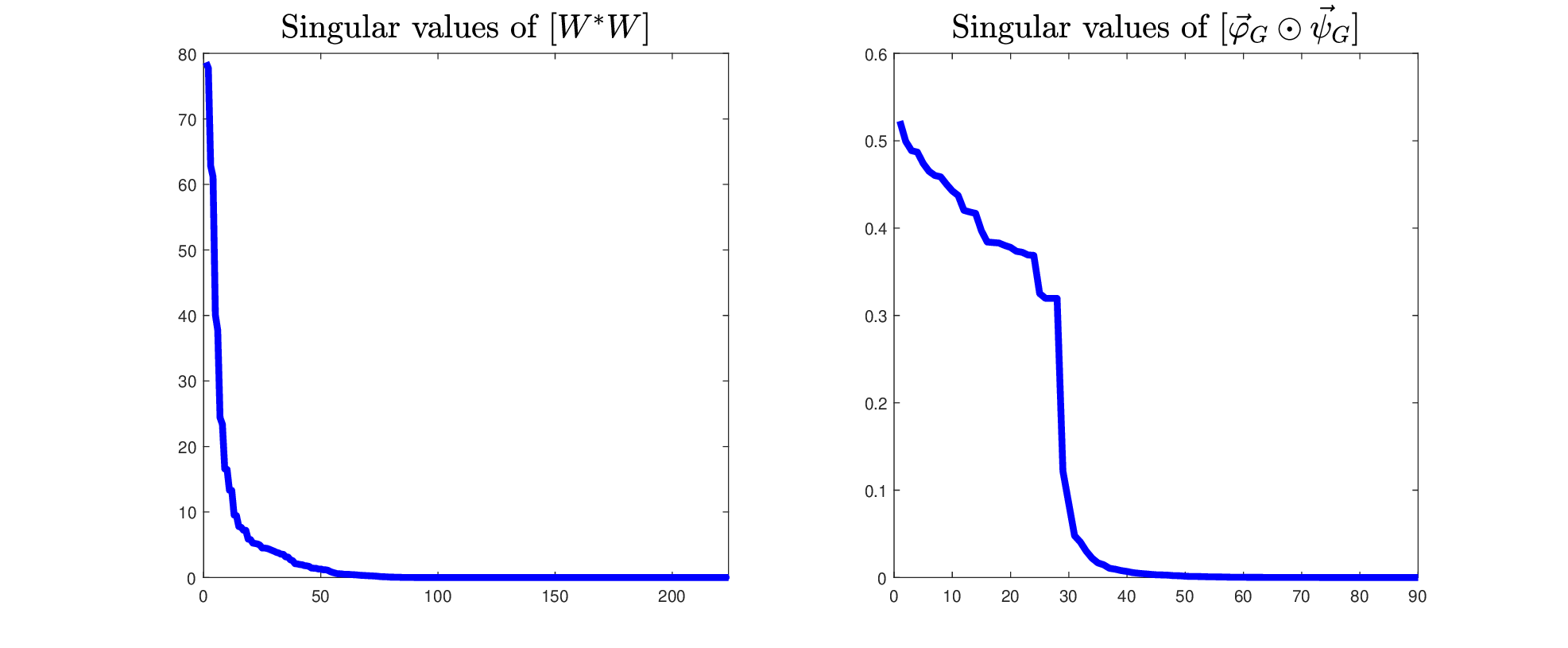}
\caption{\small\small{Experiment 3: The singular values of $[W^{*}W]$ and $[\vec{\varphi}_G \odot \vec{\psi}_G]$. The minimum singular values are $3.7372*10^{-15}$ and $1.1152*10^{-8}$, respectively.}}
\label{fig_six_singular}
\end{figure}

\begin{figure}[htbp]
\centering
\includegraphics[width=15.5cm,height=4.5cm]{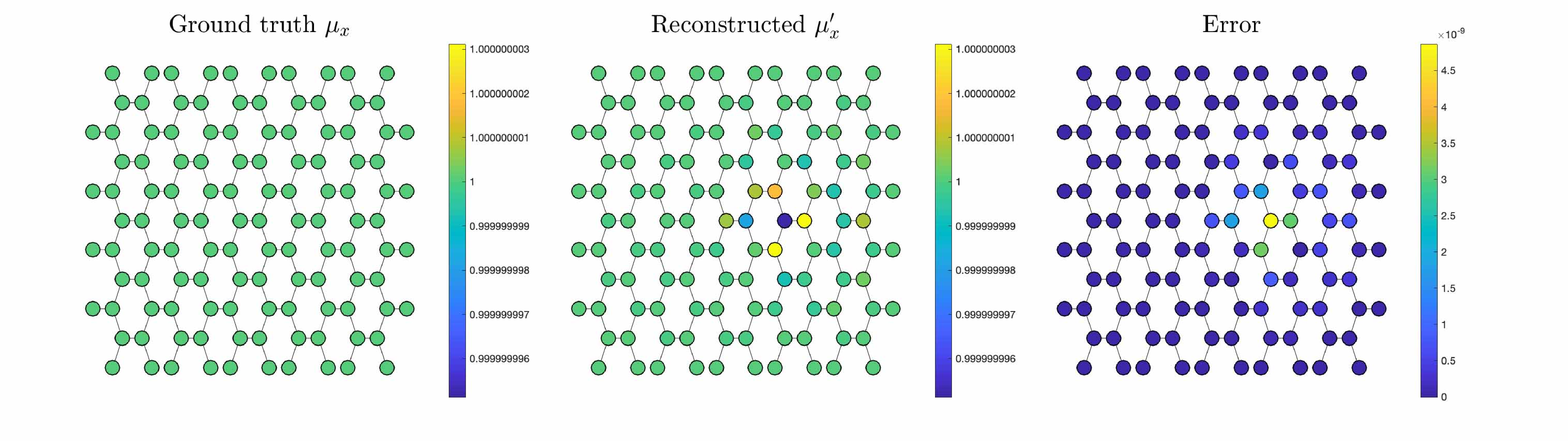}
\caption{\small\small{Experiment 3: The ground-truth $\mu_x$, the reconstructed $\mu_x'$ and the  errors.
  $\mathrm {L_2RNE}=7.7674*10^{-8}\%$.}}
\label{fig_six}
\end{figure}

\textbf{Case  3.2: Gaussian Noise.} The noisy spectral data in use is of the form $\big((1+\varepsilon)\lambda_j,(1+\varepsilon)\phi_{j}|_{\partial G}\big)^{|G|}_{j=1}$, where $\phi$'s are the normalized Neumann eigenfunctions and $\varepsilon \sim \mathcal{N} (0, \sigma)$ is a zero mean Gaussian random variable/vector. We choose $\sigma\in [0.1\%, 0.2\%, 0.5\%]$  in the experiment, respectively.

In the presence of noise, the FRNEs for reconstructing $\Lambda_\mu$ are $0.12\%$, $0.23\%$, $0.67\%$, respectively; the FRNEs for reconstructing $[W^{*}W]$ are $8.8772*10^{-2}\%$, $0.17\%$, $0.47\%$, respectively.  
When applying the truncated SVD to solve \eqref{h0}, the thresholds for singular value truncation are $0.001, 0.005$, and $0.003$, respectively.
When applying the truncated SVD to solve \eqref{mu}, the thresholds for singular value truncation are $0.001, 0.001$, and $0.003$, respectively.
Here, different empirical thresholds are taken to achieve optimal results.
The reconstruction $\mu'_x$ and the absolute errors are shown in Fig. \ref{fig_six_noise}, where the $\mathrm {L_2RNE}$s are $14.59\%$, $16.38\%$ and $24.89\%$ respectively.

 \begin{figure}[htbp]
\centering
\includegraphics[width=15.5cm,height=11.5cm]{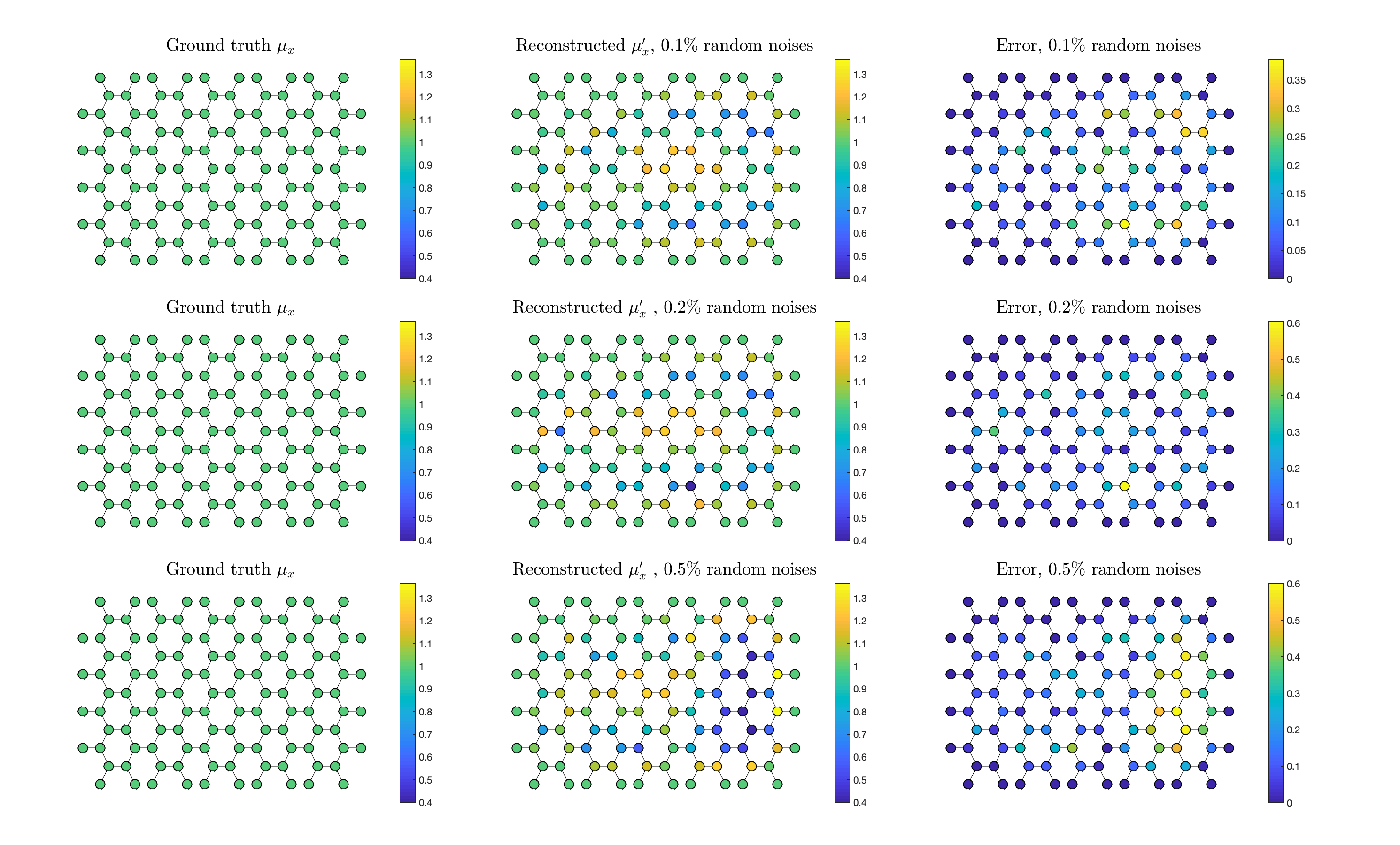}
\caption{\small\small{Experiment 3: Reconstructions and absolute errors in the presence of $0.1\%$, $0.2\%$ and $0.5\%$ Gaussian random noise.  The $\mathrm {L_2RNE}$s are equal to $14.59\%$, $16.38\%$ and $24.89\%$ respectively. For comparison, we set the same color bar for the ground truth value and the reconstructed value.}}
\label{fig_six_noise}
\end{figure}

\begin{appendix}
\section{}\label{appendixA}

In this article, we proved that the Neumann boundary spectral data determines (in a constructive way) the interior vertex weight under Assumption \ref{foliation condition}. On the other hand, the same conclusion is given  in \cite{MR4620352} with different assumptions on graphs.
This appendix compares the two types of assumptions with the goal of highlighting their difference. In particular, we show that neither of the assumptions implies the other. As a result, our assumption identifies a novel class of graphs for which the discrete Gel'fand's inverse spectral problem can be solved.

Recall some definitions and results in~\cite{MR4620352}. 
Let $\mathbb{G}$ be a finite graph with boundary. 
Let $G$ be the set of interior vertices of $\mathbb{G}$.  A subset of these vertices is denoted by $X\subset G$ . 
A vertex $x\in X$ is called an \textit{extreme vertex of $X$ with respect to $\partial G$} if there exists a boundary vertex $z_0\in \partial G$ such that 
\begin{equation*}
d(x,z_0) < \min_{y\in X, \, y\neq x} d(y,z_0).
\end{equation*}
In other words, $x$ is the unique nearest vertex in $X$ to $z_0$.

The major assumption on the graph in~\cite{MR4620352} is the following two conditions:
\begin{enumerate}
	\item Any two interior vertices that are connected to the same boundary vertex are also connected to each other.
	\item (\textit{Two-points condition}) Any subset $X$ with $|X|\geq 2$ has at least two extreme vertices with respect to $\partial G$.
\end{enumerate}
Note that Condition $(1)$ is void when a graph satisfies our Assumption \ref{foliation condition}$(i)$. 
Condition $(2)$ is referred to as the \textit{two-points condition} in~\cite{MR4620352}. Moreover, the following criterion provides sufficient conditions for a graph to satisfy the two-points condition, see \cite[Proposition 1.8]{MR4620352}. 

\begin{lemm}(\cite[Proposition 1.8]{MR4620352}) \label{thm:twoptlemma}
If there exists a function $g: \bar{G}\rightarrow \mathbb{R}$ that satisfies the following conditions:
\begin{enumerate}[(i)]
  \item  $|g(x)-g(y)|\leq 1$ when $x\sim y$;
  \item for every $x\in G$, there is exactly one vertex $y_1\in \mathcal{N}(x)$ such that $g(y_1)-g(x)=1$, and there is exactly one vertex $y_2\in \mathcal{N}(x)$ such that $g(y_2)-g(x)=-1$;
  \item for every $z\in \partial G$, there is at most one vertex $y_3\in \mathcal{N}(z)$ such that $g(y_3)-g(z)=1$, and there is at most one vertex $y_4\in \mathcal{N}(z)$ such that $g(y_4)-g(z)=-1$,
\end{enumerate}
then the graph  is said to satisfy the two-points condition. 
\end{lemm}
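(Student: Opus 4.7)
The plan is to treat $g$ as a discrete height function and exploit conditions (ii)--(iii) to produce, for each interior vertex, a canonical boundary ``witness'' that certifies extremality. First I would establish the Lipschitz estimate $|g(a)-g(b)|\leq d(a,b)$ for all $a,b\in \bar G$ by telescoping condition (i) along any shortest path. Then, for each $x\in G$, condition (ii) supplies a unique up-neighbor $y_1$ with $g(y_1)=g(x)+1$; iterating produces a monotone up-chain $x=x_0,x_1,x_2,\dots$ that is strictly $g$-increasing and therefore cannot repeat a vertex in the finite graph. The chain must exit $G$ at some boundary vertex, which I call $z^+(x)\in\partial G$. By construction $d(x,z^+(x))\leq g(z^+(x))-g(x)$, and the Lipschitz bound forces equality. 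Symmetrically I define $z^-(x)$ via the down-chain.

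Next I would prove an injectivity statement: if the up-chains of two interior vertices $x\neq x'$ terminate at the same boundary vertex $z$, then condition (iii) (at most one $y$ with $g(y)=g(z)-1$) forces the two chains to share their penultimate vertex $w$; since $w$ is interior, condition (ii) forces them to share the vertex before $w$, and so on, contradicting $x\neq x'$. This argument is the crux of the proof: it shows that distinct interior vertices cannot possess coincident up-chains, and without condition (iii) the whole argument collapses at the boundary step.

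Given any $X\subset G$ with $|X|\geq 2$, I would choose $x_+\in X$ maximizing $g$ on $X$ and $x_-\in X$ minimizing $g$ on $X$, then claim $x_+$ is extreme with respect to $z^+(x_+)$ and $x_-$ is extreme with respect to $z^-(x_-)$. For the first, write $z=z^+(x_+)$ and $k=g(z)-g(x_+)$; for any $y\in X\setminus\{x_+\}$,
\[
d(y,z)\;\geq\;g(z)-g(y)\;\geq\;g(z)-g(x_+)\;=\;k\;=\;d(x_+,z).
\]
Equality forces both $g(y)=g(x_+)$ and that every edge of a shortest $y$-to-$z$ path increases $g$ by exactly $1$, i.e.\ the path is a monotone up-chain from $y$ to $z$; the injectivity statement above then forces $y=x_+$, contradicting $y\neq x_+$. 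A symmetric argument handles $x_-$.

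The final obstacle is merely to verify that $x_+\neq x_-$. If $g$ is not constant on $X$ this is automatic. If $g$ is constant on $X$ the display above specializes to show that \emph{every} $x\in X$ is extreme through its own witness $z^+(x)$, and since $|X|\geq 2$ we are again done. The main difficulty in carrying out this plan is the bookkeeping in the injectivity step, particularly handling the last step at the boundary where only condition (iii), not condition (ii), is available; this is precisely why the hypothesis on $\partial G$ cannot be dropped.
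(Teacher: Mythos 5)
The paper does not actually prove this statement: it is quoted directly from \cite[Proposition 1.8]{MR4620352} and used in Appendix A only as a black-box criterion for the two-points condition, so there is no in-paper proof to compare yours against. Judged on its own, your argument is sound and gives a correct, self-contained derivation: the Lipschitz bound, the construction of the monotone up- and down-chains $z^{\pm}(x)$ with $d(x,z^{\pm}(x))=|g(z^{\pm}(x))-g(x)|$, and the max/min selection of witnesses all work as you describe. Two pieces of bookkeeping need repair, though neither is fatal. First, your injectivity statement is false as literally stated: if $x'$ lies on the canonical up-chain of $x$, then the two up-chains terminate at the same boundary vertex even though $x\neq x'$; the backward tracing from $z$ only yields a contradiction when the two chains have equal length, i.e.\ when $g(x)=g(x')$. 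That extra hypothesis is exactly what the equality case hands you in the application (equality forces $g(y)=g(x_+)$), so the proof survives, but the lemma must be stated with it. Second, the monotone shortest path from $y$ to $z$ produced in the equality case need not be the \emph{canonical} up-chain of $y$: a shortest path may pass through intermediate boundary vertices, where condition (ii) is unavailable. The backward tracing still closes because condition (iii) also guarantees at most one neighbor $u$ of a boundary vertex $v$ with $g(u)=g(v)-1$, so uniqueness of the predecessor holds at every vertex of the path regardless of whether it is interior or boundary; but this means you cannot literally cite your injectivity lemma at that step and must rerun the tracing argument allowing boundary vertices in the interior of the path. With those two adjustments the proof is complete, and the treatment of the constant-$g$ case at the end is correct.
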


We provide two specific graphs to show that the two sets of assumptions are different. First, there exist graphs that satisfy our Assumption~\ref{foliation condition} but not the two-points condition, see Fig. \ref{not belong to they}. This graph satisfies Assumption~\ref{foliation condition} because every vertex has no more than one next-level neighbor. 
However, the subset $X=\{x,y\}$ has just one extreme vertex $x$ with respect to $\partial G$. 
Any path between vertex $y$ and a boundary vertex must contain $x$. Therefore, $y$ cannot be the unique nearest vertex in $X$ to any boundary vertex.

On the other hand, there also exist graphs that satisfy the two-points condition but not our Assumption~\ref{foliation condition}, see Fig. \ref{not belong to we}.
For  ease of notation, we constructed a Cartesian coordinate system in which  the origin is marked, and the vertices are represented by the coordinates $(j,k)\in \mathbb{Z}^2$.
This graph satisfies the two-points condition because the function $g(j,k)=\frac{j}{2}+k$ defined on $\bar{G}$ satisfies all the conditions in Lemma~\ref{thm:twoptlemma}. To demonstrate that it does not satisfy Assumption~\ref{foliation condition}, note that 
\begin{equation*}
N_1 = N_1^1 \cup N_1^2 \cup \{(2,3), (4,2)\}
\end{equation*}
where 
\begin{equation*}
N_1^1=\{(1,0),(1,3),(4,1),(4,5),(2,0),(3,0),(3,5)\}, \qquad N_1^2=\{(1,1)\}.
\end{equation*}
Recall the definition of $N_1^3$ in Assumption \ref{foliation condition}, 
we find that  $(2,3),(4,2)\not\in N_1^3$,
because their next-level neighbors are respectively $(2,2),(3,3)$ and $(3,2),(4,3)$, none of which belong to $\mathcal{N}(N_1^1\cup N_1^2)$. Therefore, the decomposition in Assumption~\ref{foliation condition} does not hold for this graph.

\begin{figure}[htbp]
\center
\subfigure[]{
\begin{minipage}[c]{0.4\linewidth} 
\centering
\includegraphics[width=7.2cm,height=6cm]{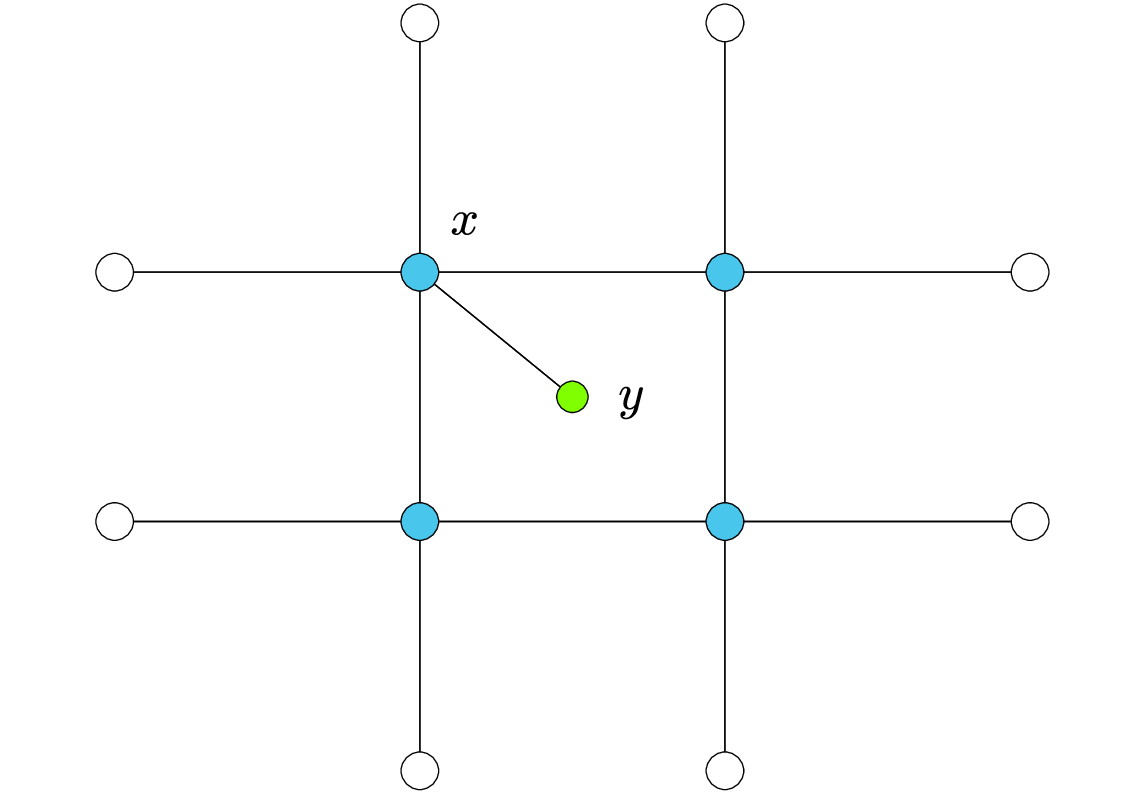} \label{not belong to they}
\hspace{4pt}
\end{minipage}}
\hspace{2pt}
\subfigure[]{
\begin{minipage}[c]{0.4\linewidth} 
\centering
\includegraphics[width=8.6cm,height=6cm]{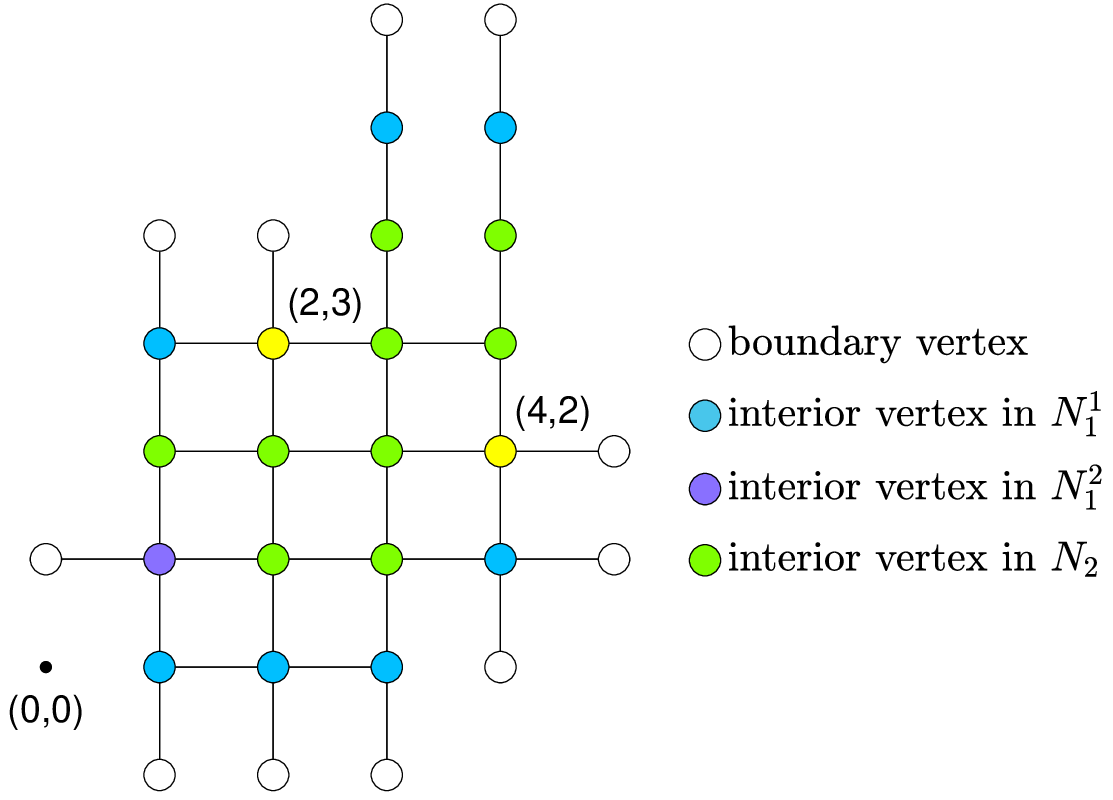}
\label{not belong to we}
\end{minipage}}
\caption{\small\small{(a) A graph satisfies Assumption~\ref{foliation condition} but not the two-points condition.
(b) A graph  satisfies the two-points condition but not Assumption~\ref{foliation condition}.
} }
\end{figure}

\section{}\label{appendixB}

We compute some adjoint operators in this appendix. First, the linear operator  
\begin{equation*}
W: l^2(\{1,\cdots,T-1\}\times\partial G)\longmapsto l^2(G), \qquad 
h\longmapsto u^{P^*_Th}(T,x),~x\in G
\end{equation*}
is introduced in Section~\ref{Prove the reconstruction procedure}.
It maps the Neumann boundary values to the solution of equation \eqref{eq:ibvp1} at time $t=T$ and $x\in G$.

\begin{lemm}
    The adjoint $W^*$ is given by
\[ W^*g = v(t,z), \qquad  (t,z)\in\{{1,2,\cdots,T-1\}\times\partial G} \] 
where $v$ satisfies the following problem:
\begin{align*}
\begin{cases}
D_{tt} v(t,x) - \Delta_Gv(t,x) = 0, & (t,x)\in\{1,2,\cdots, T-1\}\times G, \\
v(T,x) = 0, & x\in \bar{G}, \\
D_t v(T-1,x) =  g(x), & x\in G, \\
\partial_\nu v(t,z)  = 0, & (t,z)\in \{0,1,\cdots, T\}\times \partial G.
\end{cases}
\end{align*}
\end{lemm}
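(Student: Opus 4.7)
The plan is to verify the adjoint relation $(Wh,g)_G = (h,W^*g)_{\{1,\dots,T-1\}\times\partial G}$ directly by combining discrete summation by parts in time with the graph Green's formula in space (Lemma~\ref{Green's formula on graph}). Given $g \in l^2(G)$, I would let $v$ denote the solution to the prescribed backward initial boundary value problem and set $u := u^{P_T^*h}$ for an arbitrary $h \in l^2(\{1,\dots,T-1\}\times\partial G)$. The goal is then to show
\[
(Wh,g)_G \;=\; (u(T),g)_G \;=\; \sum_{t=1}^{T-1} (h(t),\, v(t)|_{\partial G})_{\partial G},
\]
from which the stated identification $W^*g = v|_{\{1,\dots,T-1\}\times\partial G}$ follows by the arbitrariness of $h$.

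The first main step is to form the quantity
\[
S \;:=\; \sum_{t=1}^{T-1} \bigl[(D_{tt} u(t), v(t))_G - (u(t), D_{tt} v(t))_G\bigr],
\]
which vanishes once one invokes the wave equations $D_{tt}u - \Delta_G u = 0$ and $D_{tt}v - \Delta_G v = 0$ on $\{1,\dots,T-1\}\times G$ together with Lemma~\ref{Green's formula on graph}; the latter converts $S$ into the boundary sum $-\sum_{t=1}^{T-1}(v(t)|_{\partial G}, \partial_\nu u(t))_{\partial G} + \sum_{t=1}^{T-1}(u(t)|_{\partial G}, \partial_\nu v(t))_{\partial G}$. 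Since $\partial_\nu u(t,z) = h(t,z)$ for $t \in \{1,\dots,T-1\}$ by construction of $P_T^*h$, and $\partial_\nu v \equiv 0$ on $\partial G$ by hypothesis, this reduces to $-\sum_{t=1}^{T-1}(v(t)|_{\partial G}, h(t))_{\partial G}$.

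The second main step is to evaluate $S$ directly by discrete summation by parts in the time variable, exactly as in the derivation leading to \eqref{general solution} in Lemma~\ref{(u^f(T),v(T))}. A careful bookkeeping of the shifted indices collapses the telescoping sums to four boundary terms at $t=0,1,T-1,T$:
\[
S \;=\; (u(T), v(T-1))_G - (u(T-1), v(T))_G + (u(0), v(1))_G - (u(1), v(0))_G.
\]
Applying the initial data of $u$, namely $u(0) = 0$ and $D_t u(0) = 0$ (so $u(1)=0$), kills the last two terms. Applying the terminal data of $v$, namely $v(T) = 0$ and $D_t v(T-1) = v(T) - v(T-1) = g$ (so $v(T-1) = -g$), reduces the first two terms to $-(u(T), g)_G = -(Wh, g)_G$.

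Equating the two expressions for $S$ yields $(Wh,g)_G = \sum_{t=1}^{T-1}(h(t), v(t)|_{\partial G})_{\partial G}$, which is the desired adjoint identity. The only subtle point, and the step I expect to require the most care, is the sign and indexing in the discrete summation by parts: in particular, converting $D_t v(T-1) = g$ into $v(T-1) = -g$ and tracking the resulting sign so it matches the boundary contribution from the Green's formula. Once that bookkeeping is correct, the remaining steps are routine and the characterization of $W^*g$ follows immediately since $h$ was arbitrary.
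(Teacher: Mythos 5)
Your proposal is correct and follows essentially the same route as the paper's Appendix B proof: both pair $u=u^{P_T^*h}$ against the backward solution $v$ over $\{1,\dots,T-1\}\times G$, use discrete summation by parts in time to produce the corner terms $(u(T),v(T-1))_G$ etc., and use the graph Green's formula to produce the boundary flux term $\sum_t(v(t),h(t))_{\partial G}$; the paper merely organizes this as $0=I_1-I_2$ rather than computing one quantity $S$ two ways. Your sign bookkeeping ($v(T-1)=-g$ matching the minus sign from Green's formula) checks out.
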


\begin{proof}
Let $u$ be the solution of \eqref{eq:ibvp1}.
As $v$ satisfies the graph wave equation above, we have 
\begin{align*}
0&= \sum\limits_{x\in G}\mu_x\sum\limits_{t=1}^{T-1}  (D_{tt} v(t,x) - \Delta_G v(t,x))u(t,x)\\
&= \sum\limits_{x\in G}\mu_x\sum\limits_{t=1}^{T-1}D_{tt} v(t,x)u(t,x)-\sum\limits_{x\in G}\mu_x\sum\limits_{t=1}^{T-1}u(t,x)\Delta_G v(t,x)\\
&:= I_1-I_2.
\end{align*}
For $I_1$, using  the definition of the operators $D_t$ and $D_{tt}$, we can obtain
\begin{align*}
I_1
=&\sum\limits_{x\in G}\mu_x\sum\limits_{t=1}^{T-1}v(t+1,x)u(t,x)-2\sum\limits_{x\in G}\mu_x\sum\limits_{t=1}^{T-1}v(t,x)u(t,x)+\sum\limits_{x\in G}\mu_x\sum\limits_{t=1}^{T-1}v(t-1,x)u(t,x)\\
=&\sum\limits_{x\in G}\mu_x\sum\limits_{t=1}^{T-1}(v(t,x)u(t-1,x)-2v(t,x)u(t,x)+v(t,x)u(t+1,x))\\
&+\sum\limits_{x\in G}\mu_x(-v(1,x)u(0,x)+v(T,x)u(T-1,x)+v(0,x)u(1,x)-v(T-1,x)u(T,x))\\
=&\sum\limits_{x\in G}\mu_x\sum\limits_{t=1}^{T-1}v(t,x)D_{tt}u(t,x)+\sum\limits_{x\in G}\mu_xg(x)u(T,x),
\end{align*}
where  we have used the fact that 
$u(0,x)=u(1,x)=0$, $v(T,x)=0$ and $v(T-1,x)=-g(x)$ for $x\in G$.

For $I_2$, we compute it using Lemma~\ref{Green's formula on graph}:
\begin{align*}
I_2 &= \sum\limits_{x\in G}\mu_x\sum\limits_{t=1}^{T-1}v(t,x)\Delta_G u(t,x)+ \sum\limits_{z\in \partial G}\mu_z\sum\limits_{t=1}^{T-1}(v(t,z)\partial_\nu u(t,z)-u(t,z)\partial_\nu v(t,z))\\
&= \sum\limits_{x\in G}\mu_x\sum\limits_{t=1}^{T-1}v(t,x)\Delta_G u(t,x)+ \sum\limits_{z\in \partial G}\mu_z\sum\limits_{t=1}^{T-1}v(t,z)f(t,z),
\end{align*}
where  we have used the boundary conditions $\partial_\nu v(t,z)=0$ and $\partial_\nu u(t,z)=f$ for $z\in\partial G$.

As $I_1-I_2=0$ and the first terms of $I_1$ and $I_2$ are identical, we conclude:
\begin{align*}
\sum_{z\in \partial G}\mu_z\sum_{t=1}^{T-1}v(t,z)f(t,z) = \sum\limits_{x\in G}\mu_xg(x)u(T,x) = (u(T),g)_G=(Wf,g)_G.
\end{align*}
The proof is complete when we observe that the left hand side is exactly $(f,v)_{\{1,\cdots,T-1\}\times\partial G}$.

\end{proof}

\subsection{Calculate the map  \texorpdfstring{$\Lambda_{\mu,T}^*$}{}} 

Next, we derive the adjoint to the operator $\Lambda_{\mu,T}$ introduced in Section~\ref{Prove the reconstruction procedure}. Recall that
\begin{align*}
\Lambda_{\mu,T} h:= P_T(\Lambda_\mu P_T^*h)=u^{P_T^*h}|_{\{1,\cdots,T-1\}\times\partial G}
\end{align*}
for $h\in l^2({\{1,\cdots,T-1\}\times\partial G})$.

\begin{lemm}
    The adjoint $\Lambda^*_{\mu,T}$ is given by
 \begin{equation*}   \Lambda_{\mu,T}^*=\mathscr{R}\Lambda_{\mu,T}\mathscr{R}.
\end{equation*}
\end{lemm}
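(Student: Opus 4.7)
The plan is to verify the identity by a direct spectral computation, exploiting the time-translation invariance of the graph wave equation encoded in the coefficients $c_k$ appearing in Theorem~\ref{Neubspecd_to_NDmap}.

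First, I observe that the time-reversal operator $\mathscr{R}$ is self-adjoint (the substitution $t\mapsto T-t$ in the defining sum of the boundary inner product leaves it invariant) and satisfies $\mathscr{R}^2=I$. Consequently, the claim $\Lambda_{\mu,T}^{*}=\mathscr{R}\Lambda_{\mu,T}\mathscr{R}$ is equivalent to the bilinear reciprocity
\[
(\Lambda_{\mu,T} h_1,h_2)_{\{1,\ldots,T-1\}\times\partial G}
= (\mathscr{R} h_1,\Lambda_{\mu,T}\mathscr{R} h_2)_{\{1,\ldots,T-1\}\times\partial G}
\]
for all $h_1,h_2 \in l^2(\{1,\ldots,T-1\}\times\partial G)$; reducing to this form is the first step.

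Second, I substitute the explicit representation from Theorem~\ref{Neubspecd_to_NDmap} (writing $c^{(j)}_k$ to emphasize that $c_k$ depends on $\lambda_j$):
\[
(\Lambda_{\mu,T} h)(t,z)
= \sum_{j=1}^{|G|} \sum_{s=1}^{t} c^{(j)}_{t+1-s}\,(h(s),\phi_j)_{\partial G}\,\phi_j(z)
- \frac{\mu_z h(t,z)}{w(x_z,z)},
\]
where $x_z\in G$ is the unique interior vertex adjacent to $z$. When the two sides of the reciprocity identity are expanded, the ``diagonal'' contribution $-\mu_z h(t,z)/w(x_z,z)$ produces the bilinear form $-\sum_{t,z}\tfrac{\mu_z^{2}}{w(x_z,z)}h_1(t,z)h_2(t,z)$, which is manifestly invariant under the joint substitution $h_i\mapsto \mathscr{R} h_i$ after the reindexing $t\mapsto T-t$, and therefore matches on both sides.

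Third, the main work concerns the non-local piece
\[
\sum_{j}\sum_{1\le s\le t\le T-1} c^{(j)}_{t+1-s}\,(h_1(s),\phi_j)_{\partial G}(h_2(t),\phi_j)_{\partial G},
\]
and its counterpart with $\mathscr{R} h_1,\mathscr{R} h_2$ in place of $h_1,h_2$ on the right-hand side. The crucial observation is that the kernel $c^{(j)}_{t+1-s}$ depends on $s,t$ only through the difference $t-s$, which is a manifestation of the time-translation invariance of the underlying graph wave equation. Applying the change of variables $(s,t)\mapsto(T-t,T-s)$ maps the triangle $\{1\le s\le t\le T-1\}$ bijectively onto itself (with the inequality flipped and then restored by relabeling the dummy indices), while sending the kernel to $c^{(j)}_{t-s+1}$ in the new labels; combined with the symmetry of the real inner product, the transformed expression matches the non-local piece on the right-hand side term by term.

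The principal obstacle I expect is bookkeeping: correctly tracking the triangular summation domain $\{1\le s\le t\le T-1\}$ under the substitution, and coordinating the action of $\mathscr{R}$ on each time argument with the swap of dummy indices, so that the inner products $(h_i(s),\phi_j)$ end up paired with the correct factor $c^{(j)}_{t+1-s}$. Once this is carried out carefully, the identity follows purely from the convolutional structure (dependence on $t-s$ alone) of $c^{(j)}_{t+1-s}$. An alternative route would combine the time-reversal symmetry of the wave equation (if $u(t,x)$ solves the equation on $\{0,\ldots,2T\}$, so does $u(2T-t,x)$) with the Blagove\v{s}\v{c}enski\u{\i} identity of Lemma~\ref{(u^f(T),v(T))} applied in two symmetric configurations of $(f,v)$, but the direct spectral argument above is the most transparent.
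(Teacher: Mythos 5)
Your argument is correct, but it proceeds by a genuinely different route than the paper. The paper's proof in Appendix~\ref{appendixB} is a PDE-style computation: it introduces a second wave solution $v$ with data prescribed at the final time, applies Green's formula (Lemma~\ref{Green's formula on graph}) together with a discrete summation by parts in time, and then identifies $v$ with the time reversal $\mathscr{R}U$ of a forward solution $U$; no spectral representation is used, so the argument works on any graph. You instead plug in the explicit kernel of $\Lambda_\mu$ from Theorem~\ref{Neubspecd_to_NDmap} and verify the reciprocity $(\Lambda_{\mu,T}h_1,h_2)=(\mathscr{R}h_1,\Lambda_{\mu,T}\mathscr{R}h_2)$ directly: the local term $-\mu_z h(t,z)/w(x_z,z)$ is symmetric and reversal-invariant, and the change of variables $(s,t)\mapsto(T-t,T-s)$ maps the triangle $\{1\le s\le t\le T-1\}$ onto itself while preserving $t+1-s$, so the convolutional kernel $c^{(j)}_{t+1-s}$ is unchanged; this checks out, including the handling of $P_T^{*}$ (the zero extension guarantees only $h(s)$ with $1\le s\le T-1$ enters). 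What your route buys is transparency: it exhibits the identity as a direct consequence of time-translation invariance (dependence of the kernel on $t-s$ only) plus the symmetry of the spectral expansion. What it costs is generality: Theorem~\ref{Neubspecd_to_NDmap} requires Assumption~\ref{foliation condition}(i) (each boundary vertex has a unique interior neighbor), so your proof of the adjoint formula inherits that hypothesis, whereas the paper's Green's-formula proof does not need it. Since the lemma is only invoked in the paper under Assumption~\ref{foliation condition}, this restriction is harmless in context, but it should be stated. There is no circularity: Theorem~\ref{Neubspecd_to_NDmap} is established independently of the adjoint computation.
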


\begin{proof}
Let $u$ be the solution of~\eqref{eq:ibvp1} and $v$ be the solution of 
\begin{align*}
\begin{cases}
D_{tt} v(t,x) - \Delta_G v(t,x) = 0, & (t,x)\in \{1,\cdots, T-1\}\times G, \\
v(T-1,x) = 0, & x\in \bar{G}, \\
D_t v(T-1,x) = 0, & x\in G,\\
\partial_\nu v(t,z)  = g(t,z), & (t,z)\in \{0,\cdots, T\}\times \partial G.
\end{cases}
\end{align*}
Using Lemma \ref{Green's formula on graph}, we obtain
the equation
\begin{align}\label{rl}
0= &\sum\limits_{x\in G}\mu_x\sum\limits_{t=1}^{T-1}  (D_{tt} v(t,x) - \Delta_G v(t,x))u(t,x)\nonumber\\
=&\sum\limits_{x\in G}\mu_x\sum\limits_{t=1}^{T-1}D_{tt} v(t,x)u(t,x)-\sum\limits_{x\in G}\mu_x\sum\limits_{t=1}^{T-1}\Delta_G v(t,x)u(t,x)\nonumber\\
=&\sum\limits_{x\in G}\mu_x\sum\limits_{t=1}^{T-1}(v(t,x)u(t-1,x)-2v(t,x)u(t,x)+v(t,x)u(t+1,x))\nonumber\\
&+\sum\limits_{x\in G}\mu_x(-v(1,x)u(0,x)+v(T,x)u(T-1,x)+v(0,x)u(1,x)-v(T-1,x)u(T,x))\nonumber\\
&-\sum\limits_{x\in G}\mu_x\sum\limits_{t=1}^{T-1}v\Delta_G u+\sum\limits_{z\in \partial G}\mu_z\sum\limits_{t=1}^{T-1}(\partial_\nu v(t,z)u(t,z)-v(t,z)\partial_\nu u(t,z))\nonumber\\
=& \sum\limits_{x\in G}\mu_x\sum\limits_{t=1}^{T-1}v(t,x)(D_{tt}u(t,x)-\Delta_G u(t,x))+\sum\limits_{z\in \partial G}\mu_z\sum\limits_{t=1}^{T-1}(\partial_\nu v(t,z)u(t,z)-v(t,z)\partial_\nu u(t,z))\nonumber\\
=&\sum\limits_{z\in \partial G}\mu_z\sum\limits_{t=1}^{T-1}g(\Lambda_{\mu,T}f)- \sum\limits_{z\in \partial G}\mu_z\sum\limits_{t=1}^{T-1}v(t,z)f.
\end{align}

On the other hand, consider $U$ that  satisfies  the following problem
\begin{align*}
\begin{cases}
D_{tt} U(t,x) - \Delta_G U(t,x) = 0,  & (t,x)\in \{1,\cdots, T-1\}\times G, \\
U(0,x) = 0, & x\in \bar{G}, \\
D_t U(0,x) = 0, & x\in G,\\
\partial_\nu U(t,z)  = \mathscr{R} g(t,z), & (t,z)\in \{0,\cdots, T\}\times \partial G.
\end{cases}
\end{align*}
Then, $\mathscr{R} U = v$, since they solve the initial boundary value problem. We conclude
\begin{align*}
\mathscr{R}\Lambda_{\mu,T}\mathscr{R}g=\mathscr{R}(U|_{\{1,\cdots,T-1\}\times\partial G})=v|_{\{1,\cdots,T-1\}\times\partial G}.
\end{align*}
Substitute this relation into  equation \eqref{rl} to get 
\begin{align*}
0&=\sum\limits_{z\in \partial G}\mu_z\sum\limits_{t=1}^{T-1}g(\Lambda_{\mu,T}f)- \sum\limits_{z\in \partial G}\mu_z\sum\limits_{t=1}^{T-1}f\mathscr{R}\Lambda_{\mu,T}\mathscr{R}g\\
&=(\Lambda_{\mu,T}f,g)_{\{1,\cdots,T-1\}\times\partial G}-(f,\mathscr{R}\Lambda_{\mu,T}\mathscr{R}g)_{\{1,\cdots,T-1\}\times\partial G}
\end{align*}
for all $f,g$. This completes the proof.

\end{proof}

\section{}\label{appendixC}
\subsection{Matrix form of the graph Laplacian operator \texorpdfstring{$\Delta_G$}{}}

In this appendix, we use the matrix form of the graph Laplacian to prove a few auxiliary results. As usual, we index the interior vertices by $x_1,x_2,\cdots,x_{|G|}$ and the boundary vertices by $x_{|G|+1},x_{|G|+2},\cdots,x_{|\bar{G}|}$ on $\bar{G}$. 
Recall that in Section~\ref{sec:uniqueandrecon}, the graph Laplacian $\Delta_G: \bar{G}\rightarrow G$ is identified with a block matrix
\begin{equation*}
[\Delta_G] = ([\Delta_{G,G}], [\Delta_{G,\partial G}]) \;\in \mathbb{R}^{|G|\times |\bar{G}|}, \quad \text{ where }
[\Delta_{G,G}] \in \mathbb{R}^{|G|\times |G|}, \; 
[\Delta_{G,\partial G}] \in \mathbb{R}^{|G|\times |\partial G|}.
\end{equation*}
Then, the graph Laplacian operator $\Delta_G$ can be written as matrix form as follows
\begin{align*}
\begin{split} 
\begin{pmatrix}
\frac {-\sum\limits_{\substack{y\in \bar{G}\\ y\sim {x_1}}} w(x_1,y)} {\mu_{x_1}}  & \frac {w(x_1,x_2)} {\mu_{x_1}} & \cdots & \frac {w(x_1,x_{|G|})}{\mu_{x_1}} & \cdots & \frac {w(x_1,x_{|\bar{G}|})} {\mu_{x_1}}\\
\frac {w(x_2,x_1)} {\mu_{x_2}} & \frac {-\sum\limits_{\substack{y\in \bar{G}\\ y\sim {x_2}}} w(x_2,y)} {\mu_{x_2}}  & \cdots & \frac {w(x_2,x_{|G|})} {\mu_{x_2}} & \cdots & \frac {w(x_2,x_{|\bar{G}|})} {\mu_{x_2}}\\
\vdots & \vdots & & \vdots & & \vdots\\
\frac {w(x_{|G|},x_1)} {\mu_{x_{|G|}}} & \frac {w(x_{|G|},x_2)} {\mu_{x_{|G|}}}  & \cdots & \frac {-\sum\limits_{\substack{y\in \bar{G}\\ y\sim {x_{|G|}}}} w(x_{|G|},y)} {\mu_{x_{|G|}}} & \cdots & \frac {w(x_{|G|},x_{|\bar{G}|})} {\mu_{x_{|G|}}}\\
\end{pmatrix}
:=
\begin{pmatrix}
\Delta_{G,G}  & \vdots & \Delta_{G,\partial G}
\end{pmatrix},
\end{split}
\end{align*}
where
\begin{align*}
\begin{split}
\Delta_{G,G}=
\begin{pmatrix}
\frac{1}{\mu_{x_1}} &  \\
 & \frac{1}{\mu_{x_2}}\\
 & & \ddots &\\
 & & & \frac{1}{\mu_{x_{|G|}}}
\end{pmatrix}
\begin{pmatrix}
-\sum\limits_{\substack{y\in \bar{G}\\ y\sim {x_1}}} w(x_1,y)  & w(x_1,x_2) & \cdots & w(x_1,x_{|G|})\\
w(x_2,x_1) & -\sum\limits_{\substack{y\in \bar{G}\\ y\sim {x_2}}} w(x_2,y)  & \cdots & w(x_2,x_{|G|})\\
\vdots & \vdots & & \vdots\\
w(x_{|G|},x_1) & w(x_{|G|},x_2) & \cdots & -\sum\limits_{\substack{y\in \bar{G}\\ y\sim {x_{|G|}}}} w(x_{|G|},y)
\end{pmatrix},
\end{split}
\end{align*}
and
\begin{align*}
\begin{split}
\Delta_{G,\partial G}=
\begin{pmatrix}
\frac{1}{\mu_{x_1}} &  \\
 & \frac{1}{\mu_{x_2}}\\
 & & \ddots &\\
 & & & \frac{1}{\mu_{x_{|G|}}}
\end{pmatrix}
\begin{pmatrix}
w(x_1,x_{|G|+1}) & \cdots &  w(x_1,x_{|\bar{G}|})\\
w(x_2,x_{|G|+1}) & \cdots & w(x_2,x_{|\bar{G}|})\\
\vdots & & \vdots\\
w(x_{|G|},x_{|G|+1}) & \cdots & w(x_{|G|},x_{|\bar{G}|})
\end{pmatrix}.
\end{split}
\end{align*}
Since the edge weight function $w(\cdot,\cdot)$ is symmetric, the resulting matrix $\Delta_{G,G}$ is also symmetric.

\begin{lemm} \label{thm:ellipticBVP}
For any function $g:\partial G\rightarrow\mathbb{R}$, the boundary value problem
\begin{equation*}
    \Delta_G \varphi(x) = 0~\text{ for }~x\in G, \qquad \varphi|_{\partial G}=g
\end{equation*}
has a unique solution $\varphi:\bar{G}\to \mathbb{R}$.
\end{lemm}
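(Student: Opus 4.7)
The plan is to first establish uniqueness via a discrete maximum principle and then deduce existence by a dimension-counting argument in matrix form.

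For uniqueness, I would assume $\varphi:\bar{G}\to\mathbb{R}$ is harmonic in $G$ with $\varphi|_{\partial G}=0$ and pick a vertex $x_0\in\bar{G}$ where $\varphi$ attains its maximum. If $x_0\in\partial G$, the maximum is $0$. Otherwise $x_0\in G$ and harmonicity gives
$$
0=\Delta_G\varphi(x_0)=\frac{1}{\mu_{x_0}}\sum_{y\sim x_0}w(x_0,y)\bigl(\varphi(y)-\varphi(x_0)\bigr).
$$
Each summand is $\leq 0$ by maximality and $w(x_0,y)>0$, so every term must vanish, forcing $\varphi(y)=\varphi(x_0)$ for every neighbor $y$ of $x_0$. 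Iterating this observation along edge paths and using the standing assumption that $\bar{G}$ is connected, one concludes that $\varphi$ is constant on $\bar{G}$, and the boundary condition then forces $\varphi\equiv 0$. Applying the same argument to $-\varphi$ handles the minimum.

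For existence, I would recast the boundary value problem in the matrix form displayed in this appendix: find $\vec{\varphi}_G\in\mathbb{R}^{|G|}$ with
$$
[\Delta_{G,G}]\vec{\varphi}_G=-[\Delta_{G,\partial G}]\vec{g},
$$
where $\vec{g}$ is the vectorization of $g$. The uniqueness step shows $\ker[\Delta_{G,G}]=\{0\}$, so the square matrix $[\Delta_{G,G}]\in\mathbb{R}^{|G|\times|G|}$ is invertible and the linear system admits a unique solution. Setting $\vec{\varphi}_{\partial G}=\vec{g}$ then yields the desired harmonic extension $\varphi\in l^2(\bar{G})$.

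The only delicate point is the propagation step in the maximum principle, which relies exclusively on the strict positivity of the edge weight function $w$ and the connectedness of $\bar{G}$, both of which are standing assumptions of the paper. Notably, no appeal to Assumption~\ref{foliation condition} is required here, so the lemma holds on every finite, connected, weighted, simple, undirected graph considered in this work.
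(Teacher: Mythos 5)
Your proposal is correct, and it takes a genuinely different route from the paper. The paper's proof is a one-liner: it vectorizes the problem as $[\Delta_{G,G}]\vec{\varphi}|_G = -[\Delta_{G,\partial G}]\vec{g}$ and simply cites \cite[Lemma 3.8]{Curtis2000InversePF} for the non-singularity of $[\Delta_{G,G}]$, from which the explicit solution formula follows. You instead prove that non-singularity from scratch: the discrete maximum principle gives uniqueness for the homogeneous Dirichlet problem, which is exactly the statement $\ker[\Delta_{G,G}]=\{0\}$, and invertibility of the square matrix then yields existence. Your version is self-contained and makes transparent which standing hypotheses are actually used (strict positivity of $w$, connectedness of $\bar{G}$, and implicitly $\partial G\neq\emptyset$), whereas the paper's version outsources the key linear-algebra fact. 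One small imprecision: the propagation step only applies at \emph{interior} maximizers, so it does not literally show $\varphi$ is constant on all of $\bar{G}$ (boundary vertices can block further propagation). What it does show — and all you need — is that the level set of the maximum, starting from an interior maximizer, reaches the first boundary vertex along any path to $\partial G$, forcing the maximum to be $0$; the symmetric argument for $-\varphi$ then gives $\varphi\equiv 0$. With that phrasing tightened, the argument is complete.
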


\begin{proof}
Using the vectorization $\vec{\varphi} = (\vec{\varphi}|_G, \vec{g})^T$ and the matrix $[\Delta_G] = ([\Delta_{G,G}], [\Delta_{G,\partial G}])$, the boundary value problem is equivalent to the homogeneous linear system
\begin{equation*}
\left(
\begin{array}{cc}
   [\Delta_{G, G}],  &  [\Delta_{G,\partial G}] 
\end{array}
\right)
\left(
\begin{array}{c}
    \vec{\varphi}|_{G}  \\
    \vec{g}
\end{array}
\right) = 0.
\end{equation*}
Since the matrix $[\Delta_{G,G}]$ is non-singular~\cite[Lemma 3.8]{Curtis2000InversePF}, the linear system admits a unique solution
$$
\vec{\varphi}|_G = -[\Delta_{G,G}]^{-1} [\Delta_{G,\partial G}] \vec{g}.
$$
\end{proof}

\begin{lemm} \label{app:realanalytic}
Let $\beta$ be an arbitrary selection of $|G|$ columns from $\mathbf{H}$. If $\det(\mathbf{H}_{:,\beta})\neq 0$, then $\det(\mathbf{H}_{:,\beta})$ is a real analytic function of $\{w_{x,y}\}\in \mathbb{R}_+^{|\mathcal{E}|}$. 
\end{lemm}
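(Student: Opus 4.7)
The plan is to express every entry of $\mathbf{H}$ as an explicit rational function of the edge weights, show that the common denominator never vanishes on $\mathbb{R}_+^{|\mathcal{E}|}$, and then invoke standard closure properties of real analytic functions.

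First, I would use Lemma~\ref{thm:ellipticBVP} to write each basis harmonic function $\varphi^{(j)}$ in closed form. Since harmonicity is independent of $\mu$, I would replace $[\Delta_G]$ by the unnormalized (symmetric) Laplacian matrix $L=(L_{G,G},L_{G,\partial G})$, whose entries are either $w(x,y)$, $-\sum_{y\sim x} w(x,y)$, or $0$, that is, affine functions of the variables $\{w_{x,y}\}$. The boundary value problem gives
\begin{equation*}
\vec{\varphi}^{(j)}|_G \;=\; -\,L_{G,G}^{-1}\,L_{G,\partial G}\,\vec{\delta}^{(j)},
\end{equation*}
and by Cramer's rule each component of $\vec{\varphi}^{(j)}|_G$ is a rational function in $\{w_{x,y}\}$ whose denominator is $\det(L_{G,G})$ and whose numerator is a polynomial.

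Next, I would verify that $\det(L_{G,G})\neq 0$ on all of $\mathbb{R}_+^{|\mathcal{E}|}$. This is the only point that requires a genuine argument beyond bookkeeping: it can be obtained from the Dirichlet energy identity $\vec{u}^\top L_{G,G}\,\vec{u}=\tfrac{1}{2}\sum_{\{x,y\}\in\mathcal{E},\, x,y\in G} w(x,y)(u(x)-u(y))^2 + \sum_{x\in G,\, z\sim x,\, z\in\partial G} w(x,z)\,u(x)^2$, which is strictly positive for any nonzero $\vec{u}$ because $\bar G$ is connected and every interior component touches $\partial G$ (so the pure ``bulk'' term forces constancy on the interior, and then the boundary term forces the constant to be zero). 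Equivalently, one can reuse the maximum-principle argument underlying Lemma~\ref{thm:ellipticBVP}: if $L_{G,G}\vec u=0$, extending $\vec u$ by zero on $\partial G$ produces a harmonic function with zero boundary data, hence $\vec u \equiv 0$.

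Finally, since $\det(L_{G,G})$ is a polynomial in $\{w_{x,y}\}$ that is strictly positive on $\mathbb{R}_+^{|\mathcal{E}|}$, each entry of $\vec{\varphi}^{(j)}|_G$ is real analytic on $\mathbb{R}_+^{|\mathcal{E}|}$. The entries of $\mathbf{H}$ are Hadamard products $\vec{\varphi}^{(j)}_G\odot\vec{\varphi}^{(k)}_G$, hence real analytic, and $\det(\mathbf{H}_{:,\beta})$ is a polynomial in these entries, so it too is real analytic on $\mathbb{R}_+^{|\mathcal{E}|}$. No step presents a real obstacle; the only ``substantive'' input is the positivity of $\det(L_{G,G})$, which is classical for the discrete Dirichlet Laplacian on a connected graph with nonempty boundary.
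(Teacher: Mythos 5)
Your proposal is correct and follows essentially the same route as the paper: express the entries of the harmonic basis functions as rational functions of $\{w_{x,y}\}$ via Cramer's rule, observe that the denominator $\det(L_{G,G})$ (equivalently $\det[\Delta_{G,G}]$, which differs only by the nonzero constant $\prod_x \mu_x^{-1}$) never vanishes on $\mathbb{R}_+^{|\mathcal{E}|}$, and conclude by closure of real analytic functions under products and polynomials. The only difference is that you prove the nonsingularity of the reduced Laplacian yourself via the Dirichlet energy, whereas the paper simply cites \cite[Lemma 3.8]{Curtis2000InversePF}; note that with your sign convention ($L_{G,G}$ having diagonal entries $-\sum_{y\sim x}w(x,y)$) the quadratic form you display equals $-\vec{u}^{\top}L_{G,G}\vec{u}$, a harmless sign slip since negative definiteness gives invertibility just as well.
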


\begin{proof}
Using the cofactor formula, each entry of $[\Delta_{G,G}]^{-1}$ is a rational function of 
$\{w_{x,y}\}\in\mathbb{R}_+^{|\mathcal{E}|}$ (since $\det[\Delta_{G,G}]$ is a polynomial of the entries). 
Recall that for polynomials $A_1(a)$ and $A_2(a)\neq0$ when $a\in \mathbb{R}^{|\mathcal{E}|}$, rational functions of the form $\frac{A_1(a)}{A_2(a)}$ are analytic on any connected subset of $\mathbb{R}^{|\mathcal{E}|}$.
Since  $\Delta_{G,G}$ is invertible, the  functions $\det(\mathbf{H}_{:,\beta})$ are real analytic with respect to $\{w_{x,y}\}\in \mathbb{R}_+^{|\mathcal{E}|}$.
\end{proof}

\end{appendix}

\bibliographystyle{abbrv}
\bibliography{GYL}

\end{document}